\documentclass[ 11pt]{article}
\usepackage{amsmath}
\usepackage{xspace}
\usepackage{amsthm}
\usepackage{epsfig}
\usepackage{boxedminipage}
\usepackage{graphicx}
\usepackage{authblk}


\setlength{\oddsidemargin}{0.0 in}
\setlength{\evensidemargin}{-0.25 in}
\setlength{\topmargin}{-0.6 in}
\setlength{\textwidth}{6.5 in}
\setlength{\textheight}{9 in}
\setlength{\headsep}{0.75 in}
\setlength{\parskip}{0 in}

\newcommand{\eat}[1]{}
\newcommand{\eps}{{\varepsilon}}

\newcommand{\opt}{{\tt opt}}

\newcommand{\hsp}{\hspace*{0.2in}}
\newcommand{\proofof}[2]{\emph{Proof of} #1: #2 \qed}

\newtheorem{theorem}{Theorem}[section]

\newtheorem{lemma}[theorem]{Lemma}

\newtheorem{claim}[theorem]{Claim}

\newtheorem{corollary}[theorem]{Corollary}

\renewcommand{\sp}{\hspace*{2 mm}}

\newcommand{\I}{{\cal I}}
\renewcommand{\P}{{\cal P}}
\newcommand{\T}{{\cal T}}

\newcommand{\C}{{\cal C}}

\newcommand{\Seg}{{\tt Seg}}
\renewcommand{\S}{{\cal S}}

\newcommand{\dmc}{{\tt Demand MultiCut}\xspace}

\newcommand{\ts}{{\tilde s}}
\newcommand{\tS}{{\tilde S}}

\newcommand{\wtdfl}{{\tt WtdFlowTime}\xspace}

\newcommand{\State}{{\tt State}}
\newcommand{\St}{{\tt State}}

\newcommand{\cell}{{\tt cell}}

\newcommand{\cells}{{\tt cells}}
\newcommand{\seg}{{\tt seg}}

\newcommand{\redc}{{\tt red}}

\begin{document}

\title{Constant Factor Approximation Algorithm for Weighted Flow Time on a Single Machine in Pseudo-polynomial time}
\date{}

\author{Jatin Batra \qquad Naveen Garg \qquad Amit Kumar}
\affil{Department of Computer Science and Engineering \\ IIT Delhi}


\maketitle
\begin{abstract}
In the weighted flow-time problem on a single machine, we are given a set of $n$ jobs, where each job has a processing requirement $p_j$, release date $r_j$ and weight $w_j$.
The goal is to find a preemptive schedule which minimizes the sum of weighted flow-time of jobs, where the flow-time of a job is the difference between its completion time
and its released date. We give the first pseudo-polynomial time constant approximation algorithm for this problem. The algorithm also extends directly to the problem of minimizing the $\ell_p$ norm of weighted flow-times. The running time of our algorithm is polynomial in $n$, the number
of jobs, and $P$, which is the ratio of the largest to the smallest processing requirement of a job. Our algorithm relies on a novel reduction of this problem to a generalization of
the multi-cut problem on trees, which we call \dmc problem. Even though we do not give a constant factor approximation algorithm for the \dmc problem on trees, we show that the
specific instances of \dmc obtained by reduction from weighted flow-time problem instances have more structure in them, and we are able to employ techniques based
on dynamic programming. Our dynamic programming algorithm relies on showing that there are near optimal solutions which have nice
smoothness properties, and we exploit these properties  to reduce the size of DP table.
\end{abstract}
%
%
\thispagestyle{empty}
\pagebreak
\setcounter{page}{1}
\pagebreak

\section{Introduction}
Scheduling jobs to minimize the average waiting time is one of the most fundamental problems in scheduling theory with numerous applications. We consider the setting where jobs arrive over time (i.e., have release dates), and need to be processed such that the average flow-time is minimized. The flow-time, $F_j$ of a job $j$, is defined as the difference between its completion time, $C_j$,  and release date, $r_j$. It is well known that for the case of single machine, the SRPT policy (Shortest Remaining Processing Time) gives an optimal algorithm for this objective.

In the weighted version of this problem, jobs have weights and we would like to minimize the weighted sum of flow-time of jobs.
However, the problem of minimizing {\em weighted} flow-time (\wtdfl)
turns out to be NP-hard and it has been widely conjectured that there should a constant factor  approximation algorithm (or even PTAS) for it.
  In this paper, we make substantial progress towards this problem by giving the first constant factor approximation algorithm for this problem in pseudo-polynomial time.
More formally, we prove the following result.
\begin{theorem}
\label{thm:main}
There is a constant factor approximation algorithm for \wtdfl where the running time of the algorithm is polynomial in $n$ and $P$. Here, $n$ denotes the number of jobs in the instance, and
$P$  denotes the ratio of the largest to the smallest processing time  of a job  in the instance respectively.
\end{theorem}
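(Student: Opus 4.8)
The plan is to proceed in three phases: (i) a chain of constant-factor-lossy preprocessing steps that turn an arbitrary \wtdfl instance into one with only polylogarithmically many processing-time and weight classes and with geometrically discretized flow-time targets; (ii) a reduction of the resulting instance to the \dmc problem on a tree of dyadic time intervals; and (iii) a dynamic program that solves the \emph{structured} \dmc instances produced by this reduction up to a constant factor. Composing the constant factors lost in the three phases yields the claimed constant-factor, $\mathrm{poly}(n,P)$-time algorithm.

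For phase (i) I would first round every processing time down to the nearest power of two; since $P$ is the ratio of largest to smallest processing time, this leaves $K=O(\log P)$ distinct processing-time classes $2^0,\dots,2^{K-1}$ and changes the optimum by at most a constant factor. A similar geometric grouping is applied to the weights, together with the standard observation that jobs whose weight is extremely small (respectively extremely large) relative to the current ``active'' scale can be discarded or scheduled with top priority, so one may assume only $O(\log(nP))$ weight classes. Next, invoking the classical EDF characterization of feasibility — a deadline vector $\{d_j\}$ is realizable by a preemptive single-machine schedule iff $\sum_{j:\,a\le r_j,\ d_j\le b} p_j \le b-a$ for every window $[a,b]$ — the scheduling problem is recast as: choose for each job a delay $D_j$ (setting $d_j=r_j+D_j$) so as to minimize $\sum_j w_j D_j$ subject to these window constraints, where both the $D_j$'s and the release dates can be snapped to a scale-dependent geometric grid (granularity $\Theta(\eps\,2^i)$ for class $i$) at the cost of one more constant factor.

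For phase (ii) the discretized window constraints are organized along the laminar family of dyadic time intervals. An interval $I$ at scale $i$, together with a choice for each job released inside $I$ of how far up the tree its ``completion interval'' is pushed, yields precisely a demand-type cut constraint: the total volume routed at or above $I$ but released within $I$ cannot exceed $|I|$, while the flow-time cost $w_jD_j$ becomes a weight on the tree edge along which job $j$ is routed. This is where I would make precise that the instances obtained are far from arbitrary: the tree has depth $O(\log(nP))$, the demands come in a matching number of level-aligned sizes, and — the crucial point — a near-optimal routing can be taken to be \emph{smooth}, meaning that as one sweeps across the children of any node the volume pushed upward past that node, measured per scale, changes slowly and essentially monotonically. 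Establishing this smoothing lemma — that any near-optimal solution can be massaged into one whose per-scale ``pushed-up load'' profiles across each level take only polynomially many distinct values and vary in a controlled way — is the technical heart of the argument and the step I expect to be the main obstacle; everything downstream is essentially bookkeeping given such a lemma.

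Given the smoothing lemma, phase (iii) is a dynamic program over the dyadic tree processed from the leaves upward (equivalently, a left-to-right sweep within each level). The DP state records, for the node currently being processed, the vector of loads already pushed up into it from the left, one entry per relevant scale; backtracking through the table recovers an actual assignment of delays, hence a schedule. Without structure this vector ranges over exponentially many values, but the smoothness property collapses it to $\mathrm{poly}(n,P)$ possibilities, so the table has polynomial size and each transition is evaluated in polynomial time, giving a schedule within a constant factor of optimal. Finally, the $\ell_p$-norm extension requires no new idea: the feasibility (window) constraints are unchanged, and the objective $\bigl(\sum_j (w_jD_j)^p\bigr)^{1/p}$ remains additive over jobs once the DP has fixed each $D_j$, so the same tree DP applies verbatim with the edge costs reinterpreted accordingly.
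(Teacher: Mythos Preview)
Your three-phase architecture (discretize, reduce to a tree covering problem, DP with a smoothness lemma) matches the paper's at a high level, but the proposal leaves the two steps that carry all the weight essentially unspecified, and the tree you describe is not the one the paper uses.

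First, the reduction. You propose ``a tree of dyadic time intervals'' and say the window constraints become demand-type cut constraints on it; but arbitrary windows $[s,t]$ are not dyadic, and you do not explain how such a constraint becomes an ancestor--descendant path. The paper's tree is different: its vertices are \emph{job-segments} $(j,S)$, where each job $j$ gets its own sequence $\Seg(j)$ of dyadic intervals partitioning $[r_j,T]$, and the parent of $(j,S)$ is the segment of the preceding job (in release-date order) that contains $S$. Two lemmas make this work: a nesting lemma (if $r_j\le r_{j'}$ and two of their segments overlap at time $t$, the one for $j$ is at least as coarse) and then the observation that, for any window $I=[s,t]$, the job-segments appearing in the corresponding constraint form a contiguous ancestor--descendant path. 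Your sketch does not contain this mechanism, and without it the claim that window constraints ``yield precisely a demand-type cut constraint'' is a gap, not a routine step.

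Second, the smoothness lemma and the DP state. You correctly flag this as the crux, but the paper's version is quite different from your ``per-scale pushed-up load'' picture. The paper parametrizes not by loads but by \emph{budgets}: for each segment $S$ of the job-segment tree (a maximal degree-2 path) and each density class $\tau$, it tracks (roughly) the cost $B^\star(S,\tau)$ that the optimum spends on class-$\tau$ edges in $S$. Smoothness is the statement that these budgets can be replaced, at constant-factor loss, by values that change by at most a factor $8$ between adjacent segments and adjacent density classes; the proof is an explicit geometric averaging formula, not a generic massaging argument. Even with smoothness, storing one budget per (segment, density) pair along a root-to-$v$ path would still be too much, so the paper introduces a \emph{cell sequence} (a staircase through the segment$\times$density grid) and a \emph{critical density} per segment, and proves that it suffices to record budgets only along this staircase; the feasibility argument (that ignoring cells above the staircase is safe) is the main technical lemma. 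None of this structure is present in your proposal, and ``load vectors per scale with smoothness'' does not obviously collapse to a polynomial state space in the same way. As written, the proposal identifies the right shape of argument but does not supply the ideas that make the DP polynomial.
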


We obtain this result by reducing \wtdfl to a generalization of the multi-cut problem on trees, which we call \dmc.
 The \dmc
problem  is a natural generalization of the multi-cut problem where edges have sizes  and costs, and input paths (between terminal pairs)
 have demands. We would like to select a minimum cost
subset of edges such that for every path in the input, the total size of the selected edges in the path is at least the demand of the path. When all demands and sizes are 1, this is the usual
multi-cut problem.
The natural integer program for this problem has the property that all non-zero entries in any column of the constraint matrix are the same.
Such integer programs, called {\em column restricted covering integer programs},
were studied by  Chakrabarty et al.~\cite{ChakrabartyGK10}. They showed that one can get a constant factor approximation algorithm for \dmc
provided one could prove that the integrality gap of the natural LP relaxations for the following two special cases is constant
 -- (i) the version where the constraint matrix  has 0-1 entries only, and (ii) the priority version, where
paths and edges in the tree have priorities (instead of sizes and demands respectively), and  we want to pick minimum cost subset of edges such that
for each path, we pick at least one edge in it of priority which is at least the priority of this path. Although the first problem turns out to be easy,
we do not know how to round the LP
relaxation of the priority version.
This is similar to the situation faced by Bansal and Pruhs~\cite{BansalP14}, where they need to round the priority version of a geometric set cover problem.
They appeal to the notion of shallow cell complexity~\cite{ChanGKS12} to get an $O(\log \log P)$-approximation for this problem.
It turns out the shallow cell complexity of the priority version of \dmc is also unbounded (depends on the number of distinct priorities)~\cite{ChanGKS12},
and so it is unlikely that this approach will yield a constant factor approximation.

However, the specific instances of \dmc produced by our reduction have more structure, namely each node has at most 2 children, each path goes from an ancestor to a descendant, and
the tree has $O(\log (nP))$ depth if we shortcut all degree 2 vertices. We show that one can effectively use dynamic programming techniques for such instances. We show
that there is a near optimal solution which has nice ``smoothness'' properties so that the dynamic programming table can manage with storing small amount of
information.

\subsection{Related Work}
There has been a lot of work on the \wtdfl problem on a single machine, though polynomial time constant factor approximation algorithm has remained elusive.
Bansal and Dhamdhere~\cite{BansalD07} gave an $O(\log W)$-competitive on-line algorithm for this problem, where $W$ is the ratio of the maximum to the minimum weight of a job.
They also gave a semi-online (where the algorithm needs to know the parameters $P$ and $W$ in advance) $O(\log (nP))$-competitive algorithm for \wtdfl, where
$P$ is the ratio of the largest to the smallest processing time of a job.
Chekuri et al.~\cite{ChekuriKZ01} gave a semi-online $O(\log^2 P)$-competitive algorithm.

Recently, Bansal and Pruhs~\cite{BansalP14} made significant  progress towards this problem by giving an $O(\log \log P)$-approximation algorithm. In fact, their result applies to a more general setting where the objective function is $\sum_j f_j(C_j)$, where $f_j(C_j)$ is any monotone function of the completion time $C_j$ of job $j$.
Their work, along with a constant factor approximation for the generalized caching problem~\cite{Bar-NoyBFNS01}, implies a constant factor approximation algorithm for this setting when all release dates are 0.
Chekuri and Khanna~\cite{ChekuriK02} gave a quasi-PTAS for this problem, where the running time was $O(n^{O_\epsilon(\log W \log P)})$. In the special case of stretch metric, where $w_j = 1/p_j$,
PTAS is known~\cite{BenderMR04,ChekuriK02}.
The problem of minimizing (unweighted) $\ell_p$ norm of flow-times was studied by Im and Moseley~\cite{ImM17} who gave a constant factor approximation in polynomial time.

In the speed augmentation model introduced by Kalyanasundaram and Pruhs~\cite{KalyanasundaramP00},
the algorithm is given $(1+\eps)$-times extra speed than the optimal algorithm. Bansal and Pruhs~\cite{BansalP04} showed that Highest Density First (HDF) is $O(1)$-competitive for weighted $\ell_p$ norms of
flow-time for all values of $p \geq 1$.

The multi-cut problem on trees is known to be NP-hard, and a 2-approximation algorithm was given by Garg et al.~\cite{GargVY97}. As mentioned earlier, Chakrabarty et al.~\cite{ChakrabartyGK10} gave
a systematic study of column restricted covering integer programs (see also~\cite{BansalKS11} for follow-up results). The notion of shallow cell complexity for $0$-$1$ covering integer programs was formalized by Chan et al.~\cite{ChanGKS12}, where they relied on and generalized the techniques of Vardarajan~\cite{Varadarajan10}.
\section{Preliminaries}
An instance of the \wtdfl problem is specified by a set of $n$ jobs. Each job has a processing
requirement $p_j$, weight $w_j$ and release date $r_j$. We assume wlog that all of these quantities are integers,
and let $P$ denote the ratio of the largest to the smallest processing requirement of a job.
We divide the time line into unit length {\em slots} -- we shall often refer to the time slot $[t,t+1]$ as slot
$t$. A feasible schedule needs to process a job $j$ for $p_j$ units after its release
date. Note that we allow a job to be preempted. The weighted flow-time of a job is defined as $w_j \cdot (C_j-r_j)$,
  where $C_j$ is the slot in which the job $j$ finishes processing. The objective is to find a schedule which minimizes
the sum over all jobs of their weighted flow-time.

Note that any schedule would occupy exactly $T=\sum_j p_j$ slots. We say that a schedule is {\em busy} if it does not
leave any slot vacant even though there are jobs waiting to be finished. We can assume that the optimal schedule is
a busy schedule (otherwise, we can always shift some processing back and improve the objective function). We also assume that any busy schedule
fills the slots in $[0, T]$ (otherwise, we can break it into independent instances satisfying this property).

We shall also consider a generalization of the multi-cut problem on trees, which we call the \dmc problem. Here, edges have cost and size, and demands are specified by
ancestor-descendant paths. Each such path has a demand, and the goal is to select a minimum cost subset of edges
such that for each path, the total size of selected edges in the path is at least the demand of this path.

In Section~\ref{sec:ip}, we describe a well-known integer program for \wtdfl. This IP has variables $x_{j,t}$ for every job $j,$ and time $ t\geq r_j$, and it is supposed
to be 1 if $j$ completes processing after time $t$. The constraints in the IP consist of several covering constraints. However, there is an additional
complicating factor that $x_{j,t} \leq x_{j,t-1}$ must hold for all $t \geq r_j$. To get around this problem, we propose a different IP in Section~\ref{sec:newip}.
In this IP, we define variables of the form $y(j,S)$, where $S$ are exponentially increasing intervals starting from the release date of $j$.
This variable indicates whether $j$ is alive during the entire duration of $S$.
The idea is that if the flow-time of $j$ lies between $2^i$ and $2^{i+1}$, we can count $2^{i+1}$ for it, and say that $j$ is alive during the entire period
$[r_j + 2^i, r_j + 2^{i+1}]$. Conversely, if the variable $y(j,S)$ is 1 for an interval of the form $[r_j + 2^i, r_j + 2^{i+1}]$, we can assume (at a factor 2 loss) that
it is also alive during $[r_j, r_j+2^i]$. This allows us to decouple the $y(j,S)$ variables for different $S$.
By an additional trick, we can ensure that these intervals are laminar for different jobs. From here, the reduction to the \dmc problem is immediate (see Section~\ref{sec:red}
for details). In Section~\ref{sec:approx}, we show that the specific instances of \dmc obtained by such reductions have additional properties.
We use the property that the tree obtained from shortcutting all degree two vertices is binary and has $O(\log (nP))$ depth. We shall use the term {\em segment}
to define a maximal degree 2 (ancestor-descendant) path in the tree. So the property can be restated as -- any root to leaf path has at most $O(\log (nP))$ segments.
We give a dynamic programming algorithm for such instances. In the DP table for a vertex in the tree, we will look at a  sub-instance defined by the sub-tree below this vertex.
However, we also need to maintain
 the ``state'' of edges above it, where the state  means the ancestor edges selected by the algorithm. This would require too
much book-keeping. We use two ideas to reduce the size of this state -- (i) We first show that the optimum can be assumed to have certain smoothness properties, which
cuts down on the number of possible configurations. The smoothness property essentially says that the cost spent by the optimum on a segment does not vary
by more than a constant factor as we go to neighbouring segments,
(ii) If we could spend twice the amount spent by the algorithm on a segment $S$, and select low density edges, we could ignore the edges in a segment $S'$ lying above $S$ in the tree.

\subsection{An integer program}
\label{sec:ip}
We describe an integer program for the \wtdfl problem. This is well known (see e.g.~\cite{BansalP14}), but we give details for sake of completeness.
We will have binary variables $x_{j,t}$ for every job $j$ and time $t$, where $r_j \leq t \leq T$. This variable is meant to be 1 iff $j$ is {\em alive}
 at time $t$, i.e., its completion time is at least $t$. Clearly, the objective function is $\sum_j \sum_{t \in [r_j, T]} w_j x_{j,t} .$
 We now specify the constraints of the integer program. Consider a time interval $I=[s,t]$, where $0 \leq s \leq t \leq T$, and $s$ and $t$ are integers.
 Let $l(I)$ denote the length of this time interval, i.e., $t-s$. Let $J(I)$ denote the set of jobs released during $I$, i.e., $\{j: r_j \in I\}$,  and $p(J(I))$ denote the total processing time of jobs in $J(I)$. Clearly, the total volume occupied by jobs in $J(I)$ beyond $I$ must be at least $p(J(I))-l(I)$. Thus, we get the following integer program:~(IP1)
 \begin{align}
 \min & \sum_j \sum_{t \in [r_j, T]} w_j x_{j,t} \\
 \label{eq:knap}
 \sum_{j \in J(I)} x_{j,t} p_j & \geq  p(J(I)) - l(I) \ \ \ \ \mbox{ for all intervals $I=[s,t], 0 \leq s \leq t \leq T$} \\
 \label{eq:mon}
 x_{j,t} & \leq x_{j,t-1} \ \ \ \ \ \mbox{ for  all jobs $j$, and time $t$, $r_j < t \leq T$} \\
 \notag
x_{j,t} & \in \{0,1\} \ \ \ \mbox{for all $j,t$}
 \end{align}
 It is easy to see that this is a relaxation -- given any schedule, the corresponding $x_{j,t}$ variables will satisfy the constraints mentioned above, and
 the objective function captures the total weighted flow-time of this schedule. The converse is also true -- given any solution to the above integer program,
 there is a corresponding schedule of the same cost.

 \begin{theorem}
 \label{thm:lp}
 Suppose $x_{j,t}$ is a feasible solution to~(IP1). Then, there is a schedule for which the total weighted flow-time is equal to the cost of the solution
 $x_{j,t}$.
 \end{theorem}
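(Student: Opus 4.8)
The plan is to show that any feasible solution $x_{j,t}$ to (IP1) can be converted into an actual preemptive schedule whose weighted flow-time equals $\sum_j \sum_{t \in [r_j,T]} w_j x_{j,t}$. The key structural observation is that because of the monotonicity constraints~\eqref{eq:mon}, for each job $j$ the variables $x_{j,t}$ look like a prefix of ones followed by zeros: there is a value $C_j$ such that $x_{j,t} = 1$ for $r_j \le t < C_j$ and $x_{j,t} = 0$ for $t \ge C_j$ (taking $C_j$ to be the first time the variable drops to $0$, or $T+1$ if it never does). Then $\sum_{t \in [r_j,T]} x_{j,t} = C_j - r_j$, so the objective is exactly $\sum_j w_j (C_j - r_j)$, and it suffices to exhibit a feasible schedule in which job $j$ completes by time $C_j$.

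First I would recast the covering constraints~\eqref{eq:knap} in terms of the $C_j$'s. For an interval $I = [s,t]$, the left-hand side $\sum_{j \in J(I)} x_{j,t} p_j$ counts exactly the total processing of those jobs released in $I$ that are still alive at time $t$, i.e.\ $\sum_{j \in J(I):\, C_j > t} p_j$. So~\eqref{eq:knap} says: for every $s \le t$, the jobs released in $[s,t]$ that have deadline (in the sense of $C_j$) strictly after $t$ have total size at least $p(J(I)) - (t-s)$; equivalently, the jobs released in $[s,t]$ with $C_j \le t$ have total size at most $t - s$. This is precisely the classical feasibility condition for scheduling jobs with release dates $r_j$ and deadlines $C_j$ on a single machine with preemption: the workload that must be completed within any window $[s,t]$ never exceeds the length of the window.

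Next I would invoke (or reprove in one paragraph) the standard fact that this window condition is exactly what EDF (Earliest Deadline First) needs: running EDF on the jobs with release dates $r_j$ and deadlines $C_j$ produces a busy preemptive schedule meeting all deadlines if and only if for every window $[s,t]$ the total size of jobs with $r_j \ge s$ and $C_j \le t$ is at most $t - s$. The ``only if'' is trivial; for the ``if'' direction one argues by contradiction: if EDF misses a deadline $C_j$ at some point, let $s$ be the last time before $C_j$ at which the machine was either idle or processing a job with deadline $> C_j$; then throughout $[s, C_j]$ the machine worked only on jobs released at or after $s$ with deadline at most $C_j$, and since it failed, the total such work strictly exceeds $C_j - s$, contradicting the window condition. (One has to be slightly careful that $s$ is an integer slot boundary, but since all $r_j$ and $C_j$ are integers this is automatic.) Given such a schedule, each job $j$ finishes in some slot $C_j' \le C_j$, so its weighted flow-time is $w_j(C_j' - r_j) \le w_j (C_j - r_j)$, and summing gives total cost at most the IP objective — and one can equally well just let the job idle until slot $C_j$ if an exact match is wanted, though the inequality direction is all that is needed.

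The main obstacle is the EDF feasibility argument: making the reduction from the interval covering constraints~\eqref{eq:knap} to the window condition completely precise (in particular handling the choice of the interval $I$ achieving the violation, and the integrality of slot boundaries), and citing or cleanly proving the EDF scheduling theorem. Everything else — the prefix-of-ones structure from~\eqref{eq:mon}, rewriting the objective — is bookkeeping. I would therefore devote most of the writeup to the equivalence ``{\eqref{eq:knap} for all $I$} $\iff$ {window condition for all $[s,t]$}'' and then the one-paragraph EDF exchange argument.
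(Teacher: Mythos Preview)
Your proposal is correct and is essentially the same argument as the paper's. The paper also extracts deadlines $d_j$ from the prefix-of-ones structure, reformulates constraint~\eqref{eq:knap} as the statement that any subset $J' \subseteq J(I)$ with $p(J') > l(I)$ must contain a job still alive at $t$ (your window condition in contrapositive form), and then runs EDF with the identical ``last time the machine was idle or working past deadline $d_j$'' contradiction; like you, the paper actually proves the weighted flow-time is \emph{at most} the IP cost rather than literally equal.
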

 \begin{proof}
	We show how to build such a schedule. The integral solution $x$ gives us deadlines for each job. For a job $j$, define $d_j$ as one plus the last time $t$ such that $x_{j,t} = 1$. Note that $x_{j,t}=1$ for every $t \in [r_j, d_j)$. We would like to
	find a schedule which completes each job by time $d_j$~: if such a schedule exists, then the weighted flow-time of a job $j$ will be at most $\sum_{t \geq r_j} w_{j} x_{j,t}$, which is what we want.
	
	We begin by observing a simple property of a feasible solution to the integer program.
	
	\begin{claim}
		\label{cl:prop}
			Consider an interval $I=[s,t]$, $0 \leq s \leq t \leq T$. Let $J'$ be a subset of $J(I)$ such that $p(J') > l(I)$. If $x$ is a feasible solution to (IP1), then
			there must exist a job $j \in J'$ such that $x_{j,t} = 1$.
	\end{claim}
	
	\begin{proof}
		Suppose not. Then the LHS of constraint~(\ref{eq:knap}) for $I$ would be at most $p(J(I) \setminus J')$, whereas the RHS would be
		$p(J') + p(J(I) \setminus J') - l(I) > p(J(I) \setminus J')$, a contradiction.
	\end{proof}
It is natural to use the Earliest Deadline First rule to find the required schedule. We build the schedule from time $t=0$ onwards. At any time $t$, we say that a job $j$ is {\em alive}
	if $r_j \leq t$, and $j$ has not been completely processed by time $t$. Starting from time $t=0$, we process the alive job with earliest deadline $d_j$
	during $[t,t+1]$. We need to show that every job will complete before its deadline. Suppose not. Let $j$ be the job with the earliest deadline
	which is not able to finish by $d_j$.
	Let $t$ be first time before $d_j$ such that the algorithm processes a job whose deadline is more than $d_j$ during $[t-1,t]$, or
	it is idle during this time slot (if there is no such time slot, it must have busy from time $0$ onwards, and so set $t$ to 0).
	The algorithm processes jobs whose deadline is at most $d_j$ during $[t, d_j]$  -- call these jobs $J'$. We
	 claim that jobs in $J'$ were released after $t$ -- indeed if such a job was released before  time $t$, it would have been alive at time $t-1$ (since
	it gets processed after time $t$). Further its deadline is at most $d_j$, and so, the algorithm should not be processing a job whose deadline is more than
	$d_j$ during $[t-1,t]$ (or being idle). But now, consider the interval $I=[t, d_j]$.  Observe that $l(I) < p(J')$ -- indeed, $j \in J'$ and it is not completely processed
	during $I$, but the algorithm processes jobs from $J'$ only during $I$. Claim~\ref{cl:prop} now implies that there must be a job $j'$ in $J'$ for which
	$x_{j',d_j} = 1$ -- but then the deadline of $j'$ is more than $d_j$, a contradiction.
\end{proof}

\section{A Different Integer Program}
\label{sec:newip}

We now write a weaker integer program, but it has more structure in it. We first assume that $T$ is a power of 2 -- if not, we can pad the instance with a job of zero weight (this will
increase the ratio $P$ by at most a  factor $n$ only). Let $T$ be $2^\ell$.
We now divide the time line into nested dyadic segments. A dyadic segment is an interval of the form $[i \cdot 2^s, (i+1) \cdot 2^s]$ for some non-negative integers $i$ and $s$
(we shall use segments to denote such intervals to avoid any confusion with intervals used in the integer program).
 For $s=0, \ldots, \ell$,
we define $\S_s$ as the set of dyadic segments of length $2^s$  starting from 0, i.e., $\{[0, 2^s], [2^s, 2 \cdot 2^s], \ldots, [i \cdot 2^s, (i+1) \cdot 2^s],
\ldots, [T-2^s, T] \}$. Clearly, any segment of $\S_s$ is contained inside a unique segment of $\S_{s+1}$. Now, for every job $j$ we shall define a
sequence of dyadic segments $\Seg(j)$. The sequence of segments in $\Seg(j)$  partition the interval $[r_j,T]$. The construction
of $\Seg(j)$ is described in Figure~\ref{fig:seg} (also see the example in Figure~\ref{fig:segex}). It is easy to show by induction on $s$ that the parameter $t$ at the beginning of iteration $s$ in Step~2
of the algorithm is a multiple of $2^s$. Therefore, the segments added during the iteration for $s$ belong to $\S_s$. Although we do not specify for how long
we run the for loop in Step~2, we stop when $t$ reaches $T$ (this will always happen because $t$ takes values from the set of end-points in the segments
in $\cup_s \S_s$). Therefore the set of segments in $\Seg(j)$ are disjoint and cover $[r_j, T]$.

\begin{figure}[ht]
\begin{center}
\begin{boxedminipage}{5.8 in}
{\bf Algorithm FormSegments($j$)} \\
1. Initialize $t \leftarrow r_j$. \\
2. For $s= 0,1, 2, \ldots, $ \\
\hsp \hsp (i) If $t$ is a multiple of $2^{s+1}$, \\
\hsp \hsp \hsp \hsp  add the segments (from the set $\S_s$) $[t, t+2^s], [t+2^s, t+2^{s+1}]$ to $\Seg(j)$ \\
\hsp \hsp \hsp \hsp  update $t \leftarrow t+2^{s+1}$. \\
\hsp \hsp  (ii) Else add the segment (from the set $\S_s$) $[t,,t+2^s]$ to $\Seg(j)$. \\
\hsp \hsp \hsp \hsp update $t \leftarrow t+2^s$. \\
\end{boxedminipage}
\end{center}
\caption{Forming $\Seg(j)$. }
\label{fig:seg}
\end{figure}

\begin{figure}[ht]
\begin{center}
\input{seg.pstex_t}
\end{center}
\caption{The dyadic segments $\S_1, \ldots, \S_4$ and the corresponding $\Seg(j_1), \Seg(j_2)$ for two jobs $j_1, j_2$}
\label{fig:segex}
\end{figure}

For a job $j$ and segment $S \in \Seg(j)$, we shall  refer to the tuple $(j,S)$ as a {\em job-segment}. For a time $t$, we say that $t \in (j,S)$
(or $(j,S)$ contains $t$) if $[t,t+1] \subseteq S$.
We now show a crucial nesting property of these segments.
\begin{lemma}
\label{lem:partial}
Suppose $(j,S)$ and $(j',S')$ are two job-segments such that there is a time $t$ for which $t \in (j,S)$ and $t \in (j',S')$. Suppose $r_j \leq r_{j'}$, and
$S \in \S_s, S\ \in \S_{s'}$. Then $s \geq s'$.
\end{lemma}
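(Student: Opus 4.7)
\emph{Plan.} I would introduce the auxiliary quantities $t^j_s$: the value of the loop variable $t$ at the \emph{start} of iteration $s$ of FormSegments$(j)$, so $t^j_0 = r_j$. The observation already stated in the excerpt shows that $t^j_s$ is a multiple of $2^s$. Inspecting the two sub-cases of Step~2, the segments added during iteration $s$ cover the interval $[t^j_s, t^j_{s+1}]$ using one or two $\S_s$-segments, and in \emph{both} sub-cases $t^j_{s+1}$ is the smallest multiple of $2^{s+1}$ strictly greater than $t^j_s$. Consequently, the segment of $\Seg(j)$ containing a unit slot with left endpoint $t$ lies in $\S_s$ if and only if $t^j_s \le t < t^j_{s+1}$.

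The heart of the argument is a monotonicity claim: if $r_j \le r_{j'}$, then $t^j_s \le t^{j'}_s$ for every $s \ge 0$. I would prove this by induction on $s$. The base case is $t^j_0 = r_j \le r_{j'} = t^{j'}_0$. For the inductive step, the map $x \mapsto $ ``smallest multiple of $2^{s+1}$ strictly greater than $x$'' is weakly increasing in $x$, so $t^j_s \le t^{j'}_s$ yields $t^j_{s+1} \le t^{j'}_{s+1}$.

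The lemma then follows by contradiction. Assume $s < s'$, i.e., $s+1 \le s'$. From $t \in (j',S')$ with $S' \in \S_{s'}$ we get $t^{j'}_{s'} \le t$, and since the sequence $(t^{j'}_s)_s$ is strictly increasing, $t^{j'}_{s+1} \le t^{j'}_{s'} \le t$. Combining with the monotonicity claim, $t^j_{s+1} \le t^{j'}_{s+1} \le t$. But the containment $t \in (j,S)$ with $S \in \S_s$ forces $t < t^j_{s+1}$, a contradiction.

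The only place one could plausibly slip up is in establishing that $t^j_{s+1}$ is \emph{exactly} the smallest multiple of $2^{s+1}$ strictly greater than $t^j_s$: this requires verifying both branches of Step~2 (the case where $t^j_s$ is already a multiple of $2^{s+1}$, giving $t^j_{s+1} = t^j_s + 2^{s+1}$, and the case where $t^j_s$ is a multiple of $2^s$ but not of $2^{s+1}$, giving $t^j_{s+1} = t^j_s + 2^s$, which is the next multiple of $2^{s+1}$). Once that characterisation is pinned down, the monotonicity induction and the final contradiction are both one-liners.
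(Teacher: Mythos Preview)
Your proof is correct, and it takes a genuinely different route from the paper's argument. The paper proves the lemma by induction on the time slot~$t$: it fixes both job-segments containing $t$, assumes $s\ge s'$ there, and then analyses what happens at $t+1$; the only delicate case is $s=s'$ with $j'$'s next segment jumping to class $s'+1$, where the paper argues (via a parity contradiction on multiples of $2^{s+1}$) that $j$'s next segment must also jump to class $s+1$.

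Your argument instead extracts the transition points $t^j_s$ and pins down the recursion $t^j_{s+1}=\min\{m\cdot 2^{s+1}: m\cdot 2^{s+1}>t^j_s\}$. This buys you a one-line monotonicity induction (in $s$, not in $t$) and a clean containment criterion $t^j_s\le t<t^j_{s+1}$ for the class of the segment at time $t$. The paper's proof avoids introducing the auxiliary sequence but pays for it with a slightly fiddly case analysis at the inductive step; your approach front-loads the work into the characterisation of $t^j_{s+1}$ (which, as you note, requires checking both branches of Step~2) and then the comparison between $j$ and $j'$ falls out immediately. Both are short; yours is arguably more reusable, since the explicit description of $t^j_s$ could serve elsewhere in the paper.
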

\begin{proof}
We prove this by induction on $t$. When $t=r_{j'}$, this is trivially true because $s'$ would be 0. Suppose it is true for some $t \geq r_{j'}$.
Let $(j,S)$ and $(j',S') $ be the job segments containing $t$. Suppose $S \in \S_s, S' \in S_{s'}$. By induction hypothesis,
we know that $s \geq s'$. Let $(j', {\tS}')$ be the job-segment containing $t+1$, and let ${\tS}' \in \S_{\ts'}$ ($S'$ could be same as $\tS').$
We know that $\ts' \leq s'+1$. Therefore, the only interesting case is $s=s'$ and $\ts' = s'+1$. Since $s=s'$, the two segments $S$ and $S'$ must be
same (because all segments in $\S_s$ are mutually disjoint). Since $t \in S, t+1 \notin S$, it must be that $S=[l,t+1]$ for some $l$. The algorithm
for constructing $\Seg(j')$ adds  a segment from $\S_{s'+1}$ after adding $S'$ to $\Seg(j')$. Therefore $t+1$ must be a multiple of $2^{s'+1}$.
What does the algorithm for constructing $\Seg(j)$ do after adding $S$ to $\Seg(j)$? If it adds a segment from $\S_{s+1}$, then we are done again.
Suppose it adds a segment from $\S_s$. The right end-point of this segment would be $(t+1)+2^s$. After adding this segment, the algorithm would add a segment from $\S_{s+1}$ (as it cannot add more than 2 segments from $\S_s$ to $\Seg(j)$). But this can only happen if $(t+1) + 2^s$ is a multiple of
$2^{s+1}$ -- this is not true because $(t+1)$ is a multiple of $2^{s+1}$. Thus we get a contradiction, and so the next segment (after $S$) in $\Seg(j)$ must come from $\S_{s+1}$ as well.
\end{proof}

We now write a new IP.  The idea is that if a job $j$ is alive at some time $t$, then we will keep it alive during the entire duration of the segment in
$\Seg(j)$ containing $t$. Since the segments in $\Seg(j)$ have lengths in exponentially increasing order (except for two consecutive segments), this will
not increase the weighted flow-time by more than a constant factor.
For each job segment $(j,S)$ we have a binary variable $y(j,S)$, which is meant to be 1 iff the job $j$ is alive during the entire duration $S$.
For each job segment $(j,S)$, define its weight $w(j,S)$ as $w_j \cdot l(S)$ -- this is the contribution towards weighted flow-time of $j$ if $j$ remains
alive during the entire segment $S$.
We
get the following integer program~(IP2):
\begin{align}
 \min & \sum_j \sum_{s} w(j,S) y(j,S) \\
 \label{eq:cov}
 \sum_{(j,S): j \in J(I), t \in (j,S)} y(j,S) p_j & \geq  p(J(I)) - l(I) \ \ \ \ \mbox{ for all intervals $I=[s,t], 0 \leq s \leq t \leq T$} \\
 \notag
y(j,S) & \in \{0,1\} \ \ \ \mbox{for all job segments $(j,S)$}
 \end{align}

 Observe that for any interval $I$, the constraint~(\ref{eq:cov}) for $I$ has precisely one job segment for every job which gets released in $I$. Another interesting feature of this IP is that we do not have constraints corresponding to~(\ref{eq:mon}), and so it is possible that $y(j,S) = 1$ and $y(j,S')=0$ for two job segments $(j,S)$ and $(j,S')$ even
 though $S'$ appears before $S$ in $\Seg(j)$.
 We now relate the two integer programs.
 \begin{lemma}
 \label{lem:relate}
 Given a solution $x$ for~(IP1), we can construct a solution for~(IP2) of cost at most 8 times the cost of $x$. Similarly, given a solution $y$ for~(IP2), we can construct a solution for~(IP1) of cost at most 4 times the cost of $y$.
 \end{lemma}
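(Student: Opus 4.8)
The plan is to establish the two bounds by explicit, essentially lossless constructions, exploiting the set $\Seg(j)$ in two different ways: as a partition of $[r_j,T]$, which drives feasibility, and through its ``doubling'' structure, which drives the cost estimates.

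\emph{From (IP1) to (IP2).} Given a feasible $x$, for a job-segment $(j,S)$ set $y(j,S)=1$ iff some time $t\in(j,S)$ has $x_{j,t}=1$, and $y(j,S)=0$ otherwise. To check~(\ref{eq:cov}) for an interval $I=[s,t]$: since $\Seg(j)$ partitions $[r_j,T]$, each $j\in J(I)$ has a unique $S_j\in\Seg(j)$ with $t\in(j,S_j)$, and $y(j,S_j)\ge x_{j,t}$ by construction, so the left-hand side of~(\ref{eq:cov}) is at least $\sum_{j\in J(I)}x_{j,t}p_j\ge p(J(I))-l(I)$ by~(\ref{eq:knap}). (In the boundary case $t=T$ no $\Seg(j)$ covers the slot, but by the preliminaries $p(J(I))\le T-s$, so both constraints are trivially satisfied.) For the cost, fix $j$ and let $\delta_j$ be the largest $t$ with $x_{j,t}=1$; constraint~(\ref{eq:knap}) on $I=[r_j,r_j]$ forces $x_{j,r_j}=1$, so $\delta_j\ge r_j$, and since $x$ obeys the monotonicity constraints~(\ref{eq:mon}) we have $x_{j,t}=1$ exactly for $t\in[r_j,\delta_j]$. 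Hence the segments with $y(j,S)=1$ are exactly those $S\in\Seg(j)$ with left endpoint in $[r_j,\delta_j]$, and since $\Seg(j)$ is ordered left to right these form a prefix $S_1,\dots,S_m$ of $\Seg(j)$ whose last element $S_m$ satisfies $\delta_j\in(j,S_m)$.

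Now comes the one place that uses the structure of Algorithm FormSegments: it visits the dyadic levels $0,1,2,\dots$ in order, never skipping one, and at level $s$ it adds one or two segments of length $2^s$. So if $S_m\in\S_{s^\star}$, the segments of $\Seg(j)$ preceding $S_m$ include at least one segment from each of the levels $0,\dots,s^\star-1$, whence their total length is at least $2^{s^\star}-1$; this total is exactly $(\text{left endpoint of }S_m)-r_j\le\delta_j-r_j$, so $l(S_m)=2^{s^\star}\le\delta_j-r_j+1$. On the other hand $\sum_{i=1}^m l(S_i)$ is at most the total length of all segments of $\Seg(j)$ up to level $s^\star$, which is at most $\sum_{s=0}^{s^\star}2\cdot 2^s<4\,l(S_m)$. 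Since $j$ contributes $w_j(\delta_j-r_j+1)$ to the cost of $x$ and $w_j\sum_{i=1}^m l(S_i)$ to the cost of the new solution, summing over $j$ bounds the latter by a constant (at most $8$) times the former.

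\emph{From (IP2) to (IP1).} Given a feasible $y$, for each $j$ let $S^\dagger_j$ be the rightmost segment of $\Seg(j)$ with $y(j,S^\dagger_j)=1$ and let $e_j$ be its right endpoint; set $x_{j,t}=1$ iff $r_j\le t<e_j$. Monotonicity~(\ref{eq:mon}) holds since $x_{j,\cdot}$ is a threshold. For~(\ref{eq:knap}) on $I=[s,t]$: if $S_j\in\Seg(j)$ is the segment with $t\in(j,S_j)$, then $y(j,S_j)=1$ implies $e_j\ge t+1$, hence $x_{j,t}=1\ge y(j,S_j)$, and~(\ref{eq:knap}) follows from~(\ref{eq:cov}) (again the case $t=T$ is trivial). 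Note $y$ must set $y(j,S)=1$ for the first segment of each $\Seg(j)$, forced by~(\ref{eq:cov}) on $I=[r_j,r_j]$, so $e_j>r_j$ and $S^\dagger_j$ is well defined. Finally, $e_j-r_j$ equals the total length of the segments of $\Seg(j)$ up to and including $S^\dagger_j$, which if $S^\dagger_j\in\S_{s^\dagger}$ is at most $\sum_{s=0}^{s^\dagger}2\cdot 2^s<4\,l(S^\dagger_j)\le 4\sum_{S:\,y(j,S)=1}l(S)$; as $j$ contributes $w_j(e_j-r_j)$ to the cost of $x$ and $w_j\sum_{S:\,y(j,S)=1}l(S)$ to the cost of $y$, summing gives the factor $4$.

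\emph{The main obstacle} is the cost estimate in the first direction. It is essential there that a feasible solution of~(IP1) obeys the monotonicity constraints~(\ref{eq:mon}): only then do the job-segments we switch on form a prefix of $\Seg(j)$ ending at the segment containing $j$'s last alive slot, which is what lets us charge their total length, through the geometric series given by the doubling structure, against $j$'s flow-time under $x$. Without monotonicity a single long segment containing just one slot at which $x$ equals $1$ would contribute its entire (possibly very large) length while $x$ paid only $w_j$ there, and no constant-factor comparison could survive. Once the prefix structure is available, the rest is the bookkeeping with Algorithm FormSegments --- at most one ``doubling step'' per dyadic level and no skipped levels --- used once to lower bound the prefix before $S_m$ (which controls $l(S_m)$) and once to upper bound the whole prefix by $4\,l(S_m)$.
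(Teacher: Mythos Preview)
The proposal is correct and takes essentially the same approach as the paper: in both directions you use the identical construction (set $y(j,S)=1$ on the prefix of $\Seg(j)$ ending at the segment containing the last alive slot, and conversely), and you bound costs via the same doubling structure of $\Seg(j)$. Your version is in fact slightly more careful --- you treat the $t=T$ boundary and the ``$+1$'' offset explicitly, and your chain $\sum_i l(S_i) < 4\,l(S_m) \le 4(\delta_j - r_j + 1)$ actually delivers a factor~$4$ in the first direction rather than the paper's~$8$.
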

 \begin{proof}
	Suppose we are given a solution $x$ for~(IP1). For every job $j$, let $d_j$ be the highest $t$ for which $x_{jt} =1$. Let the segments in $\Seg(j)$ (in the order they were added) be $S_1, S_2, \ldots$. Let $S_{i_j}$ be the segment in $\Seg(j)$ which contains $d_j$. Then we set $y(j,S_i)$ to 1
	for all $i \leq i_j$, and $y(j,S_i)$ to 0 for all $i > i_j$. This defines the solution $y$.  First we observe that $y$
	is feasible for~(IP2). Indeed, consider an interval $I=[s,t]$. If $x_{jt} = 1$ and $j \in J(I)$, then we do have
	$y(j,S) = 1$ for the job segment $(j,S)$ containing $t$. Therefore, the LHS of constraints~(\ref{eq:knap}) and~(\ref{eq:cov}) for $I$ are same. Also, observe that
	$$\sum_{S \in \Seg(j)} y(j,S) w(j,S) = \sum_{i=1}^{i_j} w_j \cdot l(S_i) \leq w_j 4 l(S_{i_j}), $$
	where the last inequality follows from the fact that there are at most two segments from any particular set $\S_s$ in
	$\Seg(j)$, and so, the length of every alternate segment in $\Seg(j)$ increases exponentially. So,
	$\sum_{i=1}^{i_j} l(s_i) \leq 2 \left( l(S_{i_j}) + l(S_{i_j-2}) + l(S_{i_j-4}) + \cdots \right) \leq 4 \cdot l(S_{i_j}).$
	Finally observe that $l(S_{i_j}) \leq 2 (d_j-r_j)$. Indeed, the length of $S_{i_{j-1}}$ is at least half of that of
	$S_{i_j}$. So, $$l(S_{i_j}) \leq 2 l(S_{i_j-1}) \leq 2 (d_j - r_j). $$
	Thus, the total contribution to the cost of $y$ from job segments corresponding to $j$ is at most $8w_j(d_j-r_j) =
	8 w_j \sum_{t \geq r_j} x_{j,t}. $
	This proves the first statement in the lemma.
	
	Now we prove the second statement.  Let $y$ be a solution to~(IP2). For each job $j$, let $S_{i_j}$ be the last job segment in $\Seg(j) = \{S_1, S_2, \ldots \}$ for which $y(j,S)$ is 1. We set $x_{j,t}$ to 1 for every $t \leq d_j$, where $d_j$ is the right end-point of
	$S_{i_j}$, and 0 for $t > d_j$. It is again easy to check that $x$ is a feasible solution to~(IP1). For a job $j$ the contribution of $j$ towards the cost of $x$ is
	$$w_j(d_j-r_j) = w_j \cdot \sum_{i=1}^{i_j} l(S_i) \leq 4 w_j \cdot l(S_{i_j}) \leq 4 \cdot \sum_{(j,S) \in \Seg(j)} w(j,S) y(j,S).$$
\end{proof}

 The above lemma states that it is sufficient to find a solution for~(IP2). Note that~(IP2) is a covering problem. It is also worth noting that the constraints~(\ref{eq:cov})
 need to be written only for those intervals $[s,t]$ for which  a job segment starts or ends at $s$ or $t$. 
 Since the number of job segments is $O(n \log T) = O(n \log (nP))$, it follows that~(IP2) can be turned into a polynomial size integer program.

 \section{Reduction to \dmc  on Trees}
 \label{sec:red}
 We now show that (IP2) can be viewed as a covering problem on trees. We define the covering problem, which we call Demand Multi-cut(\dmc) on trees. An instance $\I$ of this problem consists
 of a tuple $(\T, \P, c, p, d)$, where $\T$ is a rooted tree, and $\P$ consists of a set of ancestor-descendant paths. Each edge $e$ in $\T$ has a cost $c_e$ and size $p_e$. Each path $P in \P$ has
 a demand $d(P)$. Our goal is to pick a minimum cost subset of vertices $V'$ such that
for every path $P \in \P$, the set of vertices in $V' \cap P$ have total size at least $d(P)$.

We now reduce \wtdfl to \dmc on trees. Consider an instance $\I'$ of \wtdfl consisting of a set of jobs $J$. We reduce it to an instance $\I=(\T, \P, c,p,d)$ of \dmc. In our reduction,
$\T$ will be a forest instead of a tree, but we can then consider each tree as an independent problem instance of \dmc. 

We order the jobs in $J$ according to release dates (breaking ties arbitrarily) -- let $\prec_J$ be this total ordering (so, $j \prec_J j'$ implies that $r_j \leq r_{j'}$).
We now define the forest $\T$. The vertex set of $\T$ will
consist of all job segments $(j,S)$. For such a vertex $(j,S)$, let $j'$ be the job immediately preceding $j$ in the total order $\prec_J$. Since the job segments
in $\Seg(j')$ partition $[r_{j'}, T] $, and $r_{j'} \leq r_j$, there is a pair $(j', S')$ in $\Seg(j')$ such that
$S'$ intersects $S$, and so contains $S$, by Lemma~\ref{lem:partial}. We define $(j',S')$ as the parent of $(j,S)$. 
It is easy to see that this defines a forest structure, where the root vertices correspond to $(j,S)$, with $j$ being the first job in $\prec$. 
Indeed, if $(j_1, S_1), (j_2, S_2), \ldots, (j_k, S_k)$ is a sequence of nodes with $(j_i, S_i)$ being the parent of $(j_{i+1}, S_{i+1})$, then $j_1 \prec_J j_2 \prec_J \cdots \prec_J j_k$, 
and so no node in this sequence can be repeated. 

For each tree in this forest $\T$ with the root vertex being $(j,S)$, we add a new root vertex $r$ and make it the parent of $(j,S)$. 
We now define the cost and size of each edge. Let $e=(v_1, v_2)$ be an edge in the tree, where $v_1$ is the parent of $v_2$. Let $v_2$ correspond to the job segment $(j,S)$. 
Then $p_e = p_j$ and $c_e = w_e \cdot l(S)$. In other words, picking edge $e$ corresponds to selecting the job segment $(j,S)$. 

Now we define the set of paths $\P$. For each constraint~(\ref{eq:cov}) in~(IP2), we will add one path in $\P$. We first observe the following property. 
Fix an interval $I=[s,t]$ and consider the constraint~(\ref{eq:cov}) corresponding to it. Let $V_I$ be the vertices in $\T$ corresponding to the job segments appearing in the LHS of this constraint.
\begin{lemma}\label{lem:tree}
The vertices in $V_I$ form a path in $\T$ from an ancestor to a descendant.
\end{lemma}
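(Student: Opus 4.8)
The statement claims that the vertices $V_I$ — those job-segments $(j,S)$ with $j\in J(I)$ and $t\in(j,S)$ — lie on a single ancestor-to-descendant path in $\T$. The plan is to show that $V_I$ is totally ordered by the ancestor relation and is "contiguous," i.e., consecutive (in $\prec_J$) jobs of $J(I)$ give parent-child (or at least ancestor-descendant) pairs of vertices. First I would note that for each job $j\in J(I)$ there is \emph{exactly one} segment $S\in\Seg(j)$ with $t\in(j,S)$, since the segments of $\Seg(j)$ partition $[r_j,T]$; so $V_I$ has exactly $|J(I)|$ vertices, one per job in $J(I)$. Call them $(j,S_j)$.

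The core step is to compare two consecutive jobs. Suppose $j\prec_J j'$ are both in $J(I)$ with no job of $J(I)$ strictly between them in $\prec_J$; I want to show $(j',S_{j'})$ is a descendant of $(j,S_j)$ in $\T$. Actually it is cleaner to argue: for \emph{any} $j\prec_J j'$ in $J(I)$, the vertex $(j,S_j)$ is an ancestor of (or equal to, which can't happen since the jobs differ) $(j',S_{j'})$. To see this, recall the parent of a vertex $(j',S')$ is $(j'',S'')$ where $j''$ is the $\prec_J$-predecessor of $j'$ and $S''\supseteq S'$ is the unique segment of $\Seg(j'')$ meeting $S'$; iterating, the ancestors of $(j',S_{j'})$ are exactly the vertices $(k, S_k^{(j')})$ for all $k\prec_J j'$, where $S_k^{(j')}$ is the segment of $\Seg(k)$ containing $S_{j'}$. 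Since $t\in S_{j'}$ (i.e. $[t,t+1]\subseteq S_{j'}$) and $S_{j'}\subseteq S_k^{(j')}$, we get $t\in S_k^{(j')}$. But $S_j$ is the \emph{unique} segment of $\Seg(j)$ containing $t$, so $S_k^{(j')}=S_j$ when $k=j$. Hence $(j,S_j)$ is precisely the ancestor of $(j',S_{j'})$ at "level $j$," and in particular $(j,S_j)$ is an ancestor of $(j',S_{j'})$.

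This already shows $V_I$ is totally ordered by the ancestor relation: for $j\prec_J j'$, $(j,S_j)$ is an ancestor of $(j',S_{j'})$. It remains to check that these vertices form a \emph{consecutive} segment of one root-to-leaf path, i.e., there is no "gap" — equivalently, if $j\prec_J j'$ are consecutive in $J(I)$ then $(j,S_j)$ is the parent of $(j',S_{j'})$. This is immediate from the previous paragraph applied with $k$ the $\prec_J$-predecessor of $j'$: that predecessor is $j$ (no job of... wait, the predecessor in \emph{all} of $J$ need not be in $J(I)$). So I would instead argue directly that \emph{all} vertices on the path in $\T$ from $(j,S_j)$ down to $(j',S_{j'})$ — for $j,j'$ the $\prec_J$-minimum and $\prec_J$-maximum elements of $J(I)$ — actually lie in $V_I$. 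But a vertex $(k,S_k^{(j')})$ on this path has $k$ strictly between the min and max of $J(I)$ in $\prec_J$; since ties in $\prec_J$ are broken arbitrarily and $J(I)=\{k:r_k\in I\}$ with $I$ an interval, $r_{\min}\le r_k\le r_{\max}$ forces $r_k\in I$, so $k\in J(I)$ — and then $S_k^{(j')}$, being the segment of $\Seg(k)$ containing $t$, equals $S_k$, so $(k,S_k^{(j')})=(k,S_k)\in V_I$. Thus the ancestor-descendant path from the top vertex of $V_I$ to the bottom vertex of $V_I$ consists entirely of vertices of $V_I$, and conversely every vertex of $V_I$ lies on it; so $V_I$ is exactly this path.

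\textbf{Main obstacle.} The one point needing care is the tie-breaking in $\prec_J$: I must make sure that "$k$ lies $\prec_J$-between two jobs with release dates in $I$" genuinely implies $r_k\in I$, which it does because $\prec_J$ refines the ordering by release date, and an interval $I$ is closed under "betweenness" of its values. The other mild subtlety is invoking Lemma~\ref{lem:partial}/the forest construction to identify the ancestors of $(j',S_{j'})$ as one-per-earlier-job with nested segments; this is essentially built into how $\T$ was defined, so it is more a matter of citing the construction cleanly than of new work. I expect no genuinely hard step — the lemma is a structural bookkeeping consequence of the segment-nesting (Lemma~\ref{lem:partial}) and the definition of $\T$.
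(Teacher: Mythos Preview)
Your proof is correct and rests on the same key observation the paper uses: the jobs of $J(I)$ form a \emph{consecutive} block in the total order $\prec_J$ (since $\prec_J$ refines release-date order and $I$ is an interval), and each such job contributes the unique segment of $\Seg(j)$ containing $t$.

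The one place you can streamline is your mid-proof worry that ``the predecessor in all of $J$ need not be in $J(I)$.'' In fact it \emph{must} be, precisely by the consecutive-block observation you invoke later: if $j_{i-1}\prec_J j_i$ are consecutive in $J(I)$, they are also consecutive in $\prec_J$ over all jobs. Hence the parent of $(j_i,S_{j_i})$ in $\T$ is by definition $(j_{i-1},S')$ for the unique $S'\in\Seg(j_{i-1})$ containing $S_{j_i}$; since $t\in S_{j_i}\subseteq S'$ and $S_{j_{i-1}}$ is the unique segment of $\Seg(j_{i-1})$ containing $t$, we get $S'=S_{j_{i-1}}$. This is exactly the paper's one-line argument, and it lets you skip the separate ``total order'' and ``no gaps'' steps.
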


\begin{proof}
	Let $j_1, \ldots, j_k$ be the jobs which are released in $I$ arranged according to $\prec_J$. Note that these
	will form a consecutive subsequence of the sequence obtained by arranging jobs according to $\prec_J$.
	Each of these jobs will have exactly one job segment $(j_i, S_i)$
	appearing on the LHS of this constraint (because for any such job $j_i$,  the segments in $\Seg(j_i)$ partition
	$[r_{j_i}, T]$). All these job segments contain $t$, and so, these segment intersect. Now, by construction of $\T$,
	it follows that the parent of $(j_i, S_i)$ in the tree $\T$ would be $(j_{i-1}, S_{i-1})$. This proves the claim.

\end{proof}

Let the vertices in $V_I$ be $v_1, \ldots, v_k$ arranged from ancestor to descendant. Let $v_0$ be the parent of $v_1$ (this is the reason why we added an extra root to each tree -- just in case
$v_1$ corresponds to the first job in $\prec_J$, it will still have a parent). We add a path $P_I = v_0, v_1, \ldots, v_k$ to $\P$ -- Lemma~\ref{lem:tree} guarantees that this will be an ancestor-descendant path. The demand $d(P)$ of this path is the quantity in the RHS of the corresponding constraint~(\ref{eq:cov}) for the interval $I$. The following claim is now easy to check. 

\begin{claim}
\label{cl:reduction}
Given a solution $E$ to the \dmc instance $\I$, there is a solution to~(IP2) for the instance $\I'$ of the same objective function value as that of $E$. 
\end{claim}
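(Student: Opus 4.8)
The plan is to read off the solution $y$ for (IP2) directly from the edge set $E$ and then check the only two things that need checking: that $y$ has the same objective value as $E$, and that $y$ satisfies every covering constraint~(\ref{eq:cov}). First I would make precise the edge–vertex identification that the reduction uses implicitly: every non-root vertex of $\T$ is joined to its parent by a unique edge, so a subset $E$ of edges is the same data as a $0$–$1$ assignment $y$, where $y(j,S)=1$ iff the parent edge of the vertex $(j,S)$ lies in $E$. The auxiliary roots $r$ added to each tree have no parent edge, and neither they nor the auxiliary vertices $v_0$ placed at the top of each path $P_I$ carry a job segment, so they never enter the assignment. By the definition of the edge costs, the parent edge of $(j,S)$ has cost $w_j\, l(S)=w(j,S)$; hence the cost of $E$ equals $\sum_{(j,S)} w(j,S)\, y(j,S)$, which is precisely the objective value of $y$ in (IP2). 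This gives the ``same objective function value'' part of the claim.

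Next I would verify feasibility of $y$. Fix an interval $I=[s,t]$ and the constraint~(\ref{eq:cov}) it induces, and let $(j_1,S_1),\dots,(j_k,S_k)$ be the job segments on its left-hand side, ordered by $\prec_J$. By Lemma~\ref{lem:tree} these are exactly the vertices $v_1,\dots,v_k$ of the ancestor–descendant path $P_I=v_0,v_1,\dots,v_k$ that the reduction puts into $\P$, and the demand $d(P_I)$ is exactly the right-hand side $p(J(I))-l(I)$. The key bookkeeping point is that the edges lying on $P_I$ are exactly $(v_0,v_1),(v_1,v_2),\dots,(v_{k-1},v_k)$ — i.e. the parent edges of $v_1,\dots,v_k$ and nothing else; in particular the parent edge of $v_0$ is not on $P_I$. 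Therefore $\sum_{e\in E\cap P_I} p_e = \sum_{i=1}^{k} p_{j_i}\, y(j_i,S_i)$, and the hypothesis that $E$ satisfies the demand of $P_I$ is literally the statement that $y$ satisfies~(\ref{eq:cov}) for $I$. Since $\P$ contains one such path for every constraint~(\ref{eq:cov}), $y$ is feasible for (IP2), and (combined with the cost equality above) this proves the claim; here one should also note that each path $P_I$ lies within a single tree of the forest $\T$, so nothing is lost by having treated the trees as separate \dmc instances.

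I do not expect a genuine obstacle, since the statement is essentially a transcription of the reduction into the language of (IP2). The one place that requires care is precisely the edge-versus-vertex and $v_0$-versus-$r$ accounting described above: one must confirm that the auxiliary top-of-path vertices and the auxiliary tree roots contribute no size to any path, so that the left-hand side of the path-demand inequality matches the left-hand side of~(\ref{eq:cov}) exactly, with no spurious extra term appearing in either direction.
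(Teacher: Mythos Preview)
Your proposal is correct and follows essentially the same approach as the paper's own proof: define $y(j,S)=1$ iff the parent edge of $(j,S)$ lies in $E$, then read off cost equality from $c_e=w(j,S)$ and feasibility from the one-to-one correspondence between constraints~(\ref{eq:cov}) and paths $P_I\in\P$. You supply more detail than the paper (which dismisses feasibility as ``follows directly''), in particular the careful accounting that the edges on $P_I$ are exactly the parent edges of $v_1,\dots,v_k$; one small imprecision is that $v_0$ need not be an auxiliary root and may itself be a job-segment vertex, but your argument only uses that the parent edge of $v_0$ is not on $P_I$, which is correct.
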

\begin{proof}
Consider a solution to $\I$ consisting of a set of edges $E$. For each edge $e=(v_1, v_2) \in E$ where $v_2=(j,S)$ is the child of $v_1$, we set $y(j,S)=1$. For rest of the job segments $(j,S)$, define
$y(j,S)$ to be 0. Since the cost of such an edge $e$ is equal to $w(j,S)$, it is easy to see that the two solutions have the same cost. Feasibility of~(IP2) also follows directly from the manner
in which the paths in $\P$ are defined. 
\end{proof}

This completes the reduction from \wtdfl to \dmc. This reduction is polynomial time because number of vertices in $\T$ is equal to the number of job segments, which is $O(n \log (nP))$. Each 
path in $\P$ goes between any two vertices in $\T$, and there is no need to have two paths between the same pair of vertices. Therefore the size of the instance $\I$ is polynomial in the size
of the instance $\I'$ of \wtdfl.

\section{Approximation Algorithm for the \dmc problem}
\label{sec:approx}
In this section we give a constant factor approximation algorithm for the special class of  \dmc problems which arise in the
reduction from \wtdfl. To understand the special structure of such instances, we begin with some definitions. Let $\I = (\T, \P, c, p, d)$
be an instance of \dmc.    The {\em density} $\rho_e$  of an edge $e$ is defined as the ratio $c_e/p_e$.  Let $\redc(\T)$ denote the tree obtained from $\T$ by short-cutting all non-root degree 2 vertices
(see Figure~\ref{fig:reduced} for an example). There is a clear correspondence between the vertices of $\redc(\T)$ and the non-root vertices in $\T$ which do not have degree 2.
In fact, we shall use $V(\redc(\T))$ to denote the latter set of vertices. The reduced height of $\T$ is defined as the height of $\redc(\T)$. In this section, we prove the following result. 
We say that a (rooted) tree is binary if every node has at most 2 children.

\begin{figure}[ht]
\begin{center}
\epsfig{file=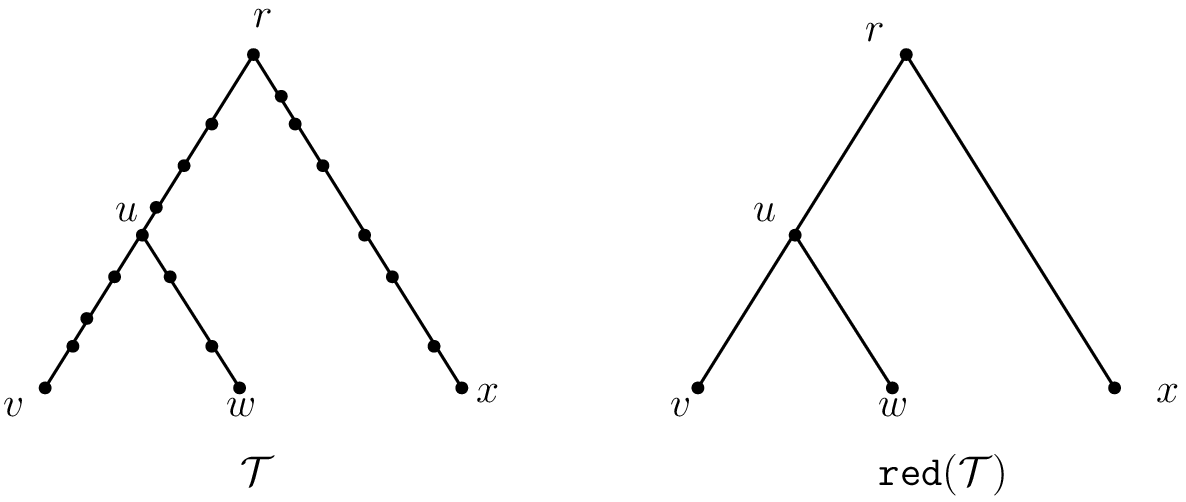, width=4in}
\end{center}
\caption{Tree $\T$ and the corresponding tree $\redc(\T)$. Note that the vertices in $\redc(\T)$ are also present in $\T$, and the
segments in $\T$ correspond to edges in $\redc(\T)$. The tree $\T$ has 4 segments, e.g., the path between $r$ and $u$. }
\label{fig:reduced}
\end{figure}

\begin{theorem}
\label{thm:dmc}
There is a constant factor approximation algorithm for instances $\I = (\T, \P, c, p, d)$ of \dmc where $\T$ is a binary tree. The running time of this algorithm is
$poly(n, 2^{O(H)}, \rho_{\max}/\rho_{\min})$, where $n$ denotes the number of nodes in $\T$, $H$ denotes the reduced height of $\T$, and $\rho_{\max}$ and $\rho_{\min}$ are the maximum and the minimum
density of an edge in $\T$ respectively.
\end{theorem}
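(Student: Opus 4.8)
\medskip
\noindent\textbf{Proof strategy.}
The plan is to build a dynamic program over the reduced tree $\redc(\T)$, which by hypothesis is binary of height $H$; recall that the \emph{segments} of $\T$ (maximal non-root degree-$2$ ancestor--descendant paths) are in bijection with the edges of $\redc(\T)$, so every root-to-leaf path of $\redc(\T)$ crosses at most $H$ segments. The naive DP would, at a vertex $v\in V(\redc(\T))$, fix which ancestor edges of $v$ are selected and store the minimum cost to satisfy all paths whose bottom endpoint lies in the subtree of $v$, combining at each node the tables of its (at most two) children. The only obstacle is that the ``ancestor state'' is far too large, and the entire argument is about compressing it. As preprocessing I would round every edge density down to a power of two, losing a factor $2$, so that only $O(\log(\rho_{\max}/\rho_{\min}))$ distinct density classes remain, and by the usual scaling/merging tricks for covering problems assume sizes and demands are integers in a controlled range. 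For a segment $S$ and a candidate solution write $c_S$ for the cost it spends inside $S$.

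The first key ingredient is a \emph{smoothness} lemma: there is a solution of cost $O(1)$ times the optimum in which, along every ancestor chain of segments in $\redc(\T)$, the rounded spending level $\lceil\log(c_S+1)\rceil$ changes by at most a constant from one segment to the next. I would prove this by an exchange argument that ``flattens peaks'': if a segment $S$ carries much more money than its parent segment, then for each demand path $P$ that meets $S$ we split into two cases --- either $P$ enters $S$ from strictly above, so $P$ also crosses $S$'s parent and the coverage $P$ receives in $S$ can be shifted upward, or $P$ begins inside $S$, so $P$ is local to $S$ and is paid for directly from a budget comparable to $c_S$. Iterating this step while charging carefully gives the claimed $O(1)$-approximate smooth solution. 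I expect this to be the main technical obstacle: peaks can interact across the branchings of the binary tree, and the charging must be arranged so that no segment is ever overcharged by more than a constant factor.

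The second ingredient is the \emph{low-density substitution} trick, which is what lets the DP record, for each ancestor segment, only a coarse spending level rather than the exact selected edge set: if on a segment we are permitted to spend twice what the algorithm spends and to buy the cheapest (lowest-density) available edges, the extra size bought dominates whatever any segment lying above it could have contributed to a path through both, so in the DP we may pretend a higher segment contributes nothing and need not remember it precisely. Granting smoothness, the ancestor state now collapses: along the chain from the root to $v$ there are at most $H$ segments, and once we guess the rounded spending level of the topmost one --- a single value from a range polynomial in the input parameters --- the level of each successive segment is pinned down to within $O(1)$ choices; together with the density class this is still $O(1)$ choices per segment. Hence the relevant ancestor information lies in a set of size $\mathrm{poly}(\rho_{\max}/\rho_{\min})\cdot 2^{O(H)}$.

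Putting it together: the DP table at a vertex is indexed by this compressed ancestor state, each transition is a polynomial-time min-plus combination of the (at most two) child tables that also ranges over the $\mathrm{poly}$ many ways to select edges on the one new segment below $v$, consistent with the guessed level, and at the leaves we simply check feasibility. Reading off the best root entry and unwinding the selections yields a feasible \dmc solution; combining the factor $2$ from density rounding and the $O(1)$ from the smoothness lemma gives an $O(1)$-approximation, and the total running time is $\mathrm{poly}(n)\cdot 2^{O(H)}\cdot\mathrm{poly}(\rho_{\max}/\rho_{\min})$ as claimed.
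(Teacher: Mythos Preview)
Your high-level plan --- DP over $\redc(\T)$, smoothness to bound the number of ancestor states, and a greedy/low-density trick to decouple segments --- matches the paper. But the state you propose to carry is too coarse, and this is a genuine gap rather than a missing detail.

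The difficulty is that a path $P$ need not contain an entire segment $S$: it may enter $S$ at the bottom endpoint and stop partway up (or, for the bottom segment, enter at the top and stop partway down). So knowing only the \emph{total} budget $c_S$ spent on $S$ does not tell you how much of that budget helps $P$. If all edges in $S$ had the same density, you could simply buy edges greedily from both endpoints inward and be done; but with heterogeneous densities a low-density edge you'd like to buy may sit in the middle of $S$ and be useless to $P$. Your ``low-density substitution trick'' as stated --- spend twice, buy the cheapest-density edges --- therefore fails exactly here: the cheap edges may lie outside $P\cap S$. The paper resolves this by tracking a budget per \emph{cell} $(S,\tau)$ (segment $\times$ density class) and, for each cell, running a two-ended greedy inside $S$ restricted to that density class. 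The smoothness lemma must then be proved in both coordinates: adjacent segments \emph{and} adjacent density classes differ by at most a constant factor. Your smoothness statement is only along the segment axis, and your exchange argument (``shift coverage upward'') would not produce the density-axis smoothness even if it worked.

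Even with per-cell budgets, recording all of them along an ancestor chain would be $2^{O(H\cdot\log(\rho_{\max}/\rho_{\min}))}$, which is too much. The paper's second idea, which you do not have, is a monotone \emph{staircase} through the segment-by-density grid: above the staircase the contribution of the optimum is geometrically negligible (this is where the two-dimensional smoothness is used, via a domination/charging argument), below it one can afford to take \emph{all} cheap edges, and only on the staircase do the cell budgets need to be recorded. The staircase has length $O(H+\log(\rho_{\max}/\rho_{\min}))$, and smoothness in both coordinates pins each successive budget to $O(1)$ choices; this is what gives the $2^{O(H)}\cdot\mathrm{poly}(\rho_{\max}/\rho_{\min})$ bound. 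Your single ``density class'' per segment gestures at this but does not supply the staircase structure or the domination argument that justifies ignoring cells above it. Finally, paths lying entirely inside one segment are handled by a separate constant-factor argument (column-restricted covering IPs / shallow cell complexity), which your sketch omits.
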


\noindent
{\bf Remark:} In the instance $\I$ above, some edges may have 0 size. These edges are not considered while defining $\rho_{\max}$ and $\rho_{min}$.

Before we prove this theorem, let us see why it implies the main result in Theorem~\ref{thm:main}.

\noindent
\proofof{{\bf Theorem~\ref{thm:main}}}{
Consider  an instance $\I=(\T, \P, c, p, d)$ of $\dmc$ obtained via reduction from an instance $\I'$ of \wtdfl. Let $n'$ denote the number of jobs in $\I'$ and $P$ denote the ratio of the largest to the smallest job size in this instance. We had argued in the
previous section that $n$, the number of nodes in $\T$, is $O(n' \log P)$. We first perform some pre-processing on $\T$ such that the quantites $H, \rho_{\max}/\rho_{\min}$ do not become too large. 

\begin{itemize}
\item  Let $p_{\max}$ and $p_{\min}$ denote the maximum and the minimum size of a job  in the instance $\I'$. Each edge
    in $\T$ corresponds to a job interval in the instance $\I$. We select all edges for which the corresponding job interval has length at most $p_{\min}$.
    Note that after selecting these edges, we will contract them in $\T$ and adjust the demands of paths in $\P$ accordingly. 
    For a fixed job $j$, the total cost of such selected edges would be at most $4 w_j p_{\min} \leq 4 w_j p_j$ (as in the proof of Lemma~\ref{lem:relate},
    the corresponding job intervals have lengths which are powers of 2, and there are at most two intervals of the same length). Note that the cost of any optimal solution for $\I'$ is at least $\sum_j w_j p_j$, and
	so we are incurring an extra cost of at most 4 times the cost of the  optimal solution.

    So  we can assume that any edge in $\T$ corresponds to a job interval in $\I'$ whose length lies in the range $[p_{\min}, n' p_{\max}]$, because the length of the schedule is at most
    $n' p_{\max}$ (recall that we are assuming that there are no gaps in the schedule). 
\item Let $c_{\max}$ be the maximum cost of an edge selected by the optimal solution (we can cycle over
    all $n$ possibilities for $c_{\max}$, and select the best solution obtained over all such solutions). We remove (i.e., contract) all edges of cost more than $c_{\max}$, and select all edges
    of cost at most $c_{\max}/n$ (i.e., contract them and adjust demands of paths going through them) -- the cost of these selected edges will be at most a constant times the optimal cost. Therefore, we can assume that the costs of the edges lie in the range
    $[c_{\max}/n, c_{\max}]$. Therefore, the densities of the edges in $\T$ lie in the range $[\frac{c_{\max}}{n p_{\max}}, \frac{c_{\max}}{p_{\min}}]$.
\end{itemize}

Having performed the above steps, we now modify the tree $\T$ so that it becomes a binary tree. 
Recall that each vertex $v$ in $\T$ corresponds to a  dyadic interval $S_v$, and if $w$ is a child of $v$ then $S_w$ is contained in $S_v$ (for the root vertex, we 
can assign it the dyadic interval $[0,T]$). Now, consider a vertex $v$ with
	$S_v$ of size $2^s$ and suppose it has more than 2 children. Since the dyadic intervals for the children are mutually disjoint and contained in $S_v$, each of
	these will be of size at most $2^{s-1}$.
	Let $S_v^1$ and $S_v^2$ be the two dyadic intervals of length $2^{s-1}$ contained in $S_v$. Consider $S_v^1$. Let $w_1, \ldots, w_k$ be the children
	of $v$ for which the corresponding interval is contained in $S_v^1$. If $k > 1$, we create a new node $w$ below $v$
	(with corresponding interval being $S_v^1$) and make $w_1, \ldots, w_k$ children of $v$. The cost and size of the edge $(v,w)$ is 0. We proceed similarly
	for $S_v^2$. Thus, each node will now have at most 2 children. Note that we will blow up the number of vertices by a factor 2 only.

    We can now estimate the reduced height $H$ of $\T$. Consider a root to leaf path in $\redc(\T)$, and let the
    vertices in this path be $v_1, \ldots, v_k$.  Let $e_i$ denote the parent of $v_i$. Since each $v_i$ has two children in $\T$,
	the job interval corresponding to $e_i$ will be at least twice that for $e_{i+1}$.  From the first preprocessing step above, it follows that 
the length of this path is bounded by $\log (n'P)$, where $P$ denotes $p_{\max}/p_{min}$.
   Thus, $H$ is $O(\log (n'P))$.
 It  now follows from Theorem~\ref{thm:dmc} that we can get a constant factor approximation algorithm for the instance $\I$ in $poly(n,P)$ time.
}

We now prove Theorem~\ref{thm:dmc} in rest of the paper.
\subsection{Some Special Cases}
To motivate our algorithm, we consider some special cases first. Again, fix an instance $\I = (\T, \P, c,p,d)$ of \dmc. Recall that the tree $\redc(\T)$ is obtained by short-cutting all
degree 2 vertices in $\T$. Each edge in $\redc(\T)$ corresponds to a path in $\T$ -- in fact, there are maximal paths in $\T$ for which all internal nodes have degree 2. We call such paths
{\em segments} (to avoid confusion with paths in $\P$). See Figure~\ref{fig:reduced} for an example. Thus, there is a 1-1 correspondence between edges in $\redc(\T)$ and segments in $\T$. Recall that
every vertex in $\redc(\T)$ corresponds to a vertex in  $\T$ as well, and we will use the same notation for both the vertices.

\begin{figure}[ht]
\begin{center}
\epsfig{file=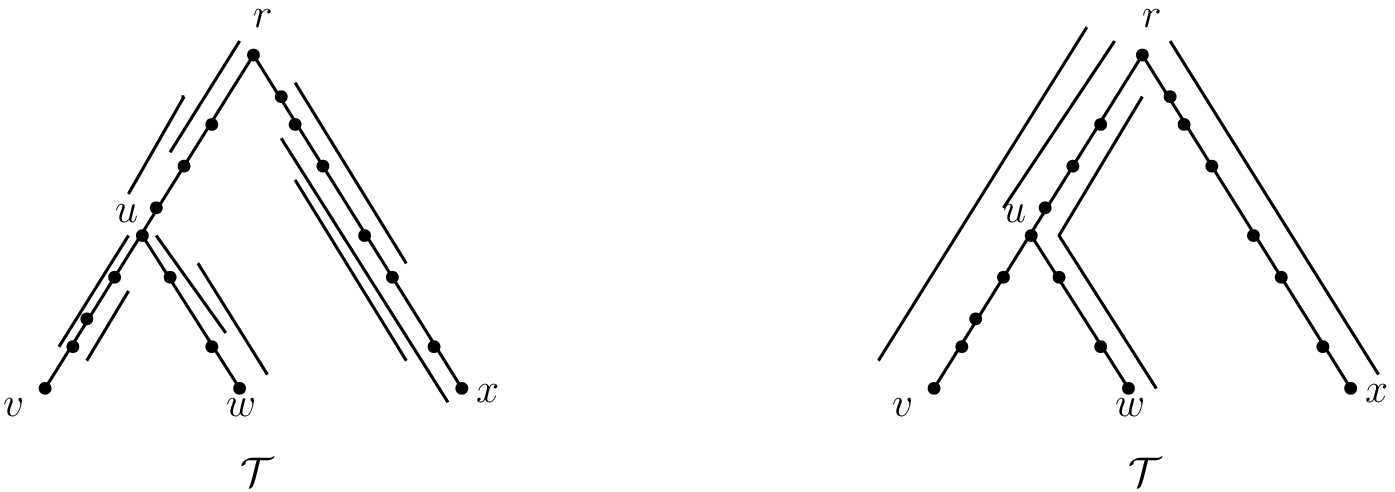, width=5in}
\end{center}
\caption{The left instance represents a segment confined instance whereas the right one is a segment spanning instance. }
\label{fig:special}
\end{figure}

\subsubsection{Segment Confined Instances}
\label{sec:confined}
The instance $\I$ is said to be {\em segment confined} if all paths in $\P$ are confined to one segment, i.e., for every path $P \in \P$, there is a segment $S$ in $\T$ such that the
edges of $P$ are contained in $S$. An example is shown in Figure~\ref{fig:special}. In  this section, we show that one can obtain constant factor polynomial time approximation algorithms for such instances.
In fact, this result follows from prior work on column restricted covering integer programs~\cite{ChakrabartyGK10}. Since each path in $\P$ is confined to one segment, we can think of this
instance as several independent instances, one for each segment. For a segment $S$, let $\I_S$ be the instance obtained from $\I$ by considering edges in $S$ only and
the subset $\P_S \subseteq \P$ of paths
which are contained in $S$. We show how to obtain a constant factor approximation algorithm for $\I_S$ for a fixed segment $S$.

Let the edges in $S$ (in top to down order) be $e_1, \ldots, e_m$.
The following integer program~(IP3) captures the \dmc problem for $\I_S$:
\begin{align}
\min & \sum_{e \in S } c_e x_e \\
\label{eq:cc}
\sum_{e \in P} p_e x_e & \geq  d(P) \ \ \ \ \mbox{ for all paths $P \in \P_S$} \\
x_e & \in \{0,1\} \ \ \ \mbox{for all $e \in S$}
\end{align}

Note that this is a covering integer program (IP) where the coefficient of $x_e$ in each constraint is either 0 or $p_e$.
Such an IP comes under the class of Column Restricted Covering IP as described in~\cite{ChakrabartyGK10}.  Chakrabarty et al.~\cite{ChakrabartyGK10} show that
one can
obtain a constant factor approximation algorithm for this problem provided one can prove that the integrality gaps of the corresponding LP relaxations
for the following two special class of problems are constant: (i) $0$-$1$ instances,  where the $p_e$ values are  either 0 or 1, (ii) priority
versions,
 where paths in $\P$ and edges have priorities (which can be thought  of as positive integers), and the selected edges
satisfy the property that for each path $P \in \P_S$, we  selected at least one edge  in it of priority at least that of $P$  (it is
easy to check that this is a special case of \dmc problem by assigning exponentially increasing demands to
paths of increasing priority, and similarly for edges).

Consider the class of $0$-$1$ instances first. We need to consider only those edges for which $p_e$ is 1 ( contract the
edges  for which $p_e$ is 0). Now observe that the constraint matrix on the LHS in~(IP3) has consecutive ones property
(order the paths in $\P_S$ in increasing order of left end-point and write the constraints in this order). Therefore,
the  LP relaxation has integrality gap of 1.

\noindent
{\bf Rounding the Priority Version}
We now consider the priority version of this problem. For each edge $e \in S$, we now have an associated priority $p_e$ (instead of size), and each path in $\P$ also has a priority demand $p(P)$, instead of its demand.   We need to argue about the integrality gap of the
following LP relaxation:

\begin{align}
\min & \sum_{e \in S} c_e x_e \\
\label{eq:pcc}
\sum_{e \in P: p_e \geq p(P)} x_e & \geq 1 \ \ \ \ \mbox{ for all paths $P \in \P_S$} \\
x_e & \geq 0 \ \ \ \mbox{for all $e \in S$}
\end{align}

We shall use the notion of {\em shallow cell complexity} used in~\cite{ChanGKS12}.
Let $A$ be the constraint matrix on the LHS above.
We first notice the following property of $A$.
\begin{claim}
	\label{cl:union}
	\global\long\def\clunion{
		Let $A^\star$ be a subset of $s$ columns of $A$. For a parameter $k, 0 \leq k \leq s$, there are at most $k^2 s$ distinct rows in $A^\star$
		with $k$ or fewer 1's (two rows of $A^\star$ are distinct iff they are not same as row vectors).
	}\clunion
\end{claim}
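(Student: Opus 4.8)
\textbf{Proof plan for Claim~\ref{cl:union}.}

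The plan is to exploit the structure of $A$ as the incidence matrix of a priority-cover instance on a single segment, which is a \emph{path}: the edges $e_1, \ldots, e_m$ are linearly ordered, each row (path $P$) corresponds to a contiguous sub-path, and the $1$'s in that row are exactly the edges of the sub-path whose priority is at least $p(P)$. First I would fix the subset $A^\star$ of $s$ columns; this corresponds to selecting a set $E^\star$ of $s$ edges along the segment, in their natural left-to-right order $e_{i_1}, \ldots, e_{i_s}$. Restricting a row of $A$ to these columns gives the support of that row inside $E^\star$. The key geometric observation I want to make is that, because both the segment and the priority threshold induce an interval-like structure, the support of any restricted row is determined by a bounded amount of data: the leftmost and rightmost selected edges it contains, together with which of the (at most $k$) selected edges inside that window have priority below the threshold $p(P)$ and are therefore excluded.

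More precisely, here is the counting argument I would carry out. A restricted row with at most $k$ ones is specified by (i) choosing its left endpoint $a$ and right endpoint $b$ among the $s$ selected edges — but in fact I want to argue we do not even need both freely: once we know the \emph{set} of selected edges it contains, this is forced. The cleaner route: such a row is the set of selected edges lying in some contiguous window of the segment that have priority $\geq \tau$ for the relevant threshold $\tau = p(P)$. Sort the selected edges by priority; for a fixed window, as $\tau$ ranges over all values, the resulting subset only changes at the (at most $s$) distinct priority values of selected edges, and moreover if the subset is to have size $\leq k$ then $\tau$ must exceed all but at most $k$ of the selected-edge priorities inside the window — so within a fixed window there are at most $k+1$ distinct subsets of size $\leq k$ realizable (indexed by "how many of the top selected-edge priorities survive"). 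It then remains to bound the number of relevant windows. A window is an interval $[a,b]$ of segment edges, but the realized support only depends on which selected edges fall inside, i.e. on the pair consisting of the leftmost selected edge $\geq a$ and the rightmost selected edge $\leq b$; however for a support of size $\leq k$ I would rather observe that the leftmost surviving selected edge and the rightmost surviving selected edge pin down a sub-run of at most $k$ consecutive-in-$E^\star$ edges, so the number of possible "footprints" is at most $s$ choices of left end times at most $k$ choices of right end relative to it, giving $\leq ks$ footprints; combined with the $\leq k$ choices of threshold-level per footprint, we get $\leq k^2 s$. (Adjusting the bookkeeping to land exactly on the stated $k^2 s$ rather than, say, $k(k+1)s$ is a matter of absorbing constants or noting the empty row is counted once.)

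The step I expect to be the main obstacle is making precise the claim that, for a fixed interval of the segment, the distinct supports of size $\leq k$ obtained by varying the priority threshold number at most $O(k)$ — this requires carefully using that the "surviving" edges are exactly those among the selected ones in the interval whose priority clears the threshold, so the family of such supports is a \emph{nested} chain ordered by $\tau$, and a nested chain of sets all of size $\leq k$ that are pairwise distinct has length at most $k+1$. The secondary subtlety is ensuring the "window" of a genuine path-row, once intersected with $E^\star$ and thresholded, really is described by just its first and last selected-and-surviving edges; this is where the consecutive-ones / interval structure of the single-segment instance (paths are ancestor-descendant sub-paths, hence intervals on the linearly ordered edge set $e_1,\dots,e_m$) is essential, and I would state it as a short lemma before the count. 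Everything else is elementary finite combinatorics.
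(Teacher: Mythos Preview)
Your overall plan correctly identifies the relevant structure --- each row of $A$ restricted to $A^\star$ is the set of selected edges lying in some interval of the segment and having priority at least a threshold --- but the counting step contains a genuine gap.

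The problematic sentence is: ``the leftmost surviving selected edge and the rightmost surviving selected edge pin down a sub-run of at most $k$ consecutive-in-$E^\star$ edges, so the number of possible footprints is at most $s$ choices of left end times at most $k$ choices of right end.'' This is false. The set $E(P)$ consists of the selected edges in an interval whose priority clears the threshold; there may be arbitrarily many selected edges of \emph{low} priority sitting between $e_l$ and $e_r$ that are simply filtered out. Concretely, take $E^\star = e_1,\ldots,e_{100}$ with priorities $100,1,1,\ldots,1,100$, and a path covering all of them with threshold $50$: then $E(P)=\{e_1,e_{100}\}$ has size $2$ but its footprint spans all $100$ selected edges. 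So your ``$\leq ks$ footprints'' bound does not hold, and the argument as written yields only $O(ks^2)$, not $k^2 s$.

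The paper's proof fixes exactly this issue with one extra idea: rather than parametrize by the footprint $(e_l,e_r)$ in $E^\star$, it first fixes the \emph{minimum-priority} edge $e$ of $E(P)$ (at most $s$ choices). Once $e$ is fixed, the relevant universe is not $E^\star$ but the filtered set $S^\star(e) = \{e' \in E^\star : p_{e'} \geq p_e\}$, and within this filtered set $E(P)$ \emph{is} a contiguous run (all edges of $S^\star(e)$ between $e_l$ and $e_r$). Now $|E(P)|\leq k$ genuinely forces $e_l$ and $e_r$ to each be within $k$ positions of $e$ inside $S^\star(e)$, giving at most $k^2$ choices for the pair and hence $k^2 s$ overall. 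Your nested-chain observation about varying the threshold is correct but not needed; the missing ingredient is to anchor the count on the least-priority element of $E(P)$, which simultaneously pins down the threshold level and the correct ambient ordering in which $E(P)$ becomes an interval.
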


\begin{proof}
	Columns of $A$ correspond to edges in $S$. Contract all edges which are not in $A^\star$. Let $S^\star$ be the remaining (i.e., uncontracted) edges in $S$.
	Each path in $\P_S$ now maps to a new path obtained by contracting these edges. Let $\P^\star$ denote the set of resulting paths. For a path $P \in \P^\star$,
	let $E(P)$ be the edges in $P$ whose priority is at least that of $P$. In the constraint matrix $A^\star$, the constraint for the path $P$ has 1's in exactly the edges
	in $E(P)$.  We can assume that the set $E(P)$ is distinct for every path $P \in \P^\star$ (because
	we are interested in counting the number of paths with distinct sets $E(P)$).
	
	Let $\P^\star(k)$ be the paths in $\P^\star$ for which $|E(P)| \leq k$. We need to  count the cardinality of this set. Fix an edge $e \in S^\star$, let
	$S^\star(e)$ be the edges in $S^\star$ of priority at least that of $e$. Let $P$
	be a path in $\P^\star(k)$ which has $e$ as the least priority edge in $E(P)$ (breaking ties arbitrarily). Let $e_l$ and $e_r$ be the leftmost and the rightmost
	edges in $E(P)$ respectively. Note that $E(P)$ is exactly  the edges in $S^\star(e)$ which lie between $e_l$ and $e_r$. Since there are at most $k$ choices for
	$e_l$ and $e_r$ (look at the $k$ edges to the left and to the right of $e$ in the set $S^\star(e)$), it follows that there are at most $k^2$ paths $P$ in
	$\P^\star(k)$ which have $e$ as the least priority edge in $E(P)$. For every path in $\P^\star(k)$, there are at most $|E^\star| = s$ choices for the least priority edge.
	Therefore the size of $\P^\star(k)$  is at most $sk^2$.
\end{proof}

In the notation of~\cite{ChanGKS12}, the shallow cell complexity of this LP relaxation is $f(s,k) = sk^2$.
It now follows from Theorem~1.1 in~\cite{ChanGKS12} that the integrality gap of the LP relaxation for the priority version is a constant. Thus we obtain a constant factor approximation 
algorithm for segment restricted instances. 
\subsubsection{Segment Spanning Instances on Binary Trees}
\label{sec:segspan}
We now consider instances $\I$ for which each path $P \in \P$ starts and ends at the end-points of a segment, i.e., the starting or ending vertex of $P$ belongs to the set of vertices
in $\redc(\T)$. An example is shown in Figure~\ref{fig:special}. Although we will not use this result in the algorithm for the general case, many of the ideas will get extended to the general case.
We will use dynamic programming. For a vertex $v \in \redc(\T)$, let $\T_v$ be the sub-tree of $\T$ rooted below $v$ (and including $v$). Let $\P_v$ denote the subset of $\P$ consisting of those
paths which contain at least one edge in $\T_v$. By scaling the costs of edges, we will assume that the cost of the optimal solution lies in the range $[1,n]$ -- if
 $c_{\max}$ is the maximum cost of an edge selected by the optimal algorithm, then its cost lies in the range $[c_{\max}, n c_{\max}]$.

Before stating the dynamic programming algorithm, we give some intuition for the DP table. We will  consider sub-problems which correspond to covering paths in $\P_v$ by
edges in $\T_v$ for every vertex $v \in \redc(\T)$. However, to solve this sub-problem, we will also need to store the edges in $\T$ which are ancestors of $v$ and  are selected by our algorithm. Storing all such subsets would lead to too many DP table entries. Instead, we will work with the following idea -- for each segment $S$, let $B^\opt(S)$ be the total cost of edges in $S$ which
get selected by an optimal algorithm. If we know $B^\opt(S)$, then we can decide which edges in $S$ can be picked. Indeed, the optimal algorithm will solve a knapsack cover problem -- for the segment $S$,
it will pick edges of maximum total size subject to the constraint that their total cost is at most $B^\opt(S)$ (note that we are using the fact that every path in $\P$ which includes an edge in $S$ must
include all the edges in $S$). Although knapsack cover is NP-hard, here is a simple greedy algorithm which exceeds the budget $B^\opt(S)$ by a factor of 2, and does as well as the optimal solution (in terms
of total size of selected edges) -- order the edges in $S$ whose cost is at most $B^\opt(S)$ in order of increasing density. Keep selecting them in this order till we exceed the budget $B^\opt(S)$. Note that
we pay at most twice of $B^\opt(S)$ because the last edge will have cost at most $B^\opt(S)$. The fact that the total size of selected edges is at least that of the corresponding optimal value follows from
standard greedy arguments.

Therefore, if $S_1, \ldots, S_k$ denote the segments which lie above $v$ (in the order from the root to $v$), it will suffice if we store $B^\opt(S_1), \ldots, B^\opt(S_k)$ with the DP table entry for
$v$. We can further cut-down the search space by assuming that each of the quantities $B^\opt(S)$ is a power of 2 (we will lose only a multiplicative 2 in the cost of the solution). Thus, the total number of possibilities for $B^\opt(S_1), \ldots, B^\opt(S_k)$ is $O(\log^k n)$, because each of the quantities $B^\opt(S_i)$ lies in the range $[1,2n]$ (recall that we had assumed that the optimal value lies in the range
$[1,n]$ and now we are rounding this to power of 2). This is at most $2^{O(H \log \log n)}$, which is still not polynomial in $n$ and $2^{O(H)}$. We can further reduce this by assuming that for any two consecutive segments
$S_i, S_{i+1}$, the quantities $B^\opt(S_i)$ and $B^\opt(S_{i+1})$ differ by a factor of at most 8 -- it is not clear why we can make this assumption, but we will show later that this does leads to a constant
factor loss only. We now state the algorithm formally.

\noindent
{\bf Dynamic Programming Algorithm}

\newcommand{\grselect}{{\tt GreedySelect}}
We first describe the greedy algorithm outlined above. The algorithm {\grselect} is given in Figure~\ref{fig:greedy}.

\begin{figure}[ht]
  \begin{center}
    \begin{boxedminipage}{6.5in}
      {\bf Algorithm \grselect  :} \medskip\\
         \sp \sp {\bf Input:} A segment $S$ in $\T$ and a budget $B$.   \\
         \sp \sp 1. Initialize a set $G$ to emptyset. \\
 		\sp \sp  2.  Arrange the edges in $S$ of cost at most $B$ in ascending order of density. \\
		\sp \sp 3. Keep adding these edges to $G$ till their total cost exceeds $B$. \\
		\sp \sp 4. Output $G$.
      \end{boxedminipage}
      \caption{Algorithm {\grselect} for selecting edges in a segment $S$ with a budget $B$. }
      \label{fig:greedy}
      \end{center}
\end{figure}

For a vertex $v \in \redc(\T)$, define the {\em reduced depth} of $v$ as its  at depth in $\redc(\T)$  (root has reduced depth 0). We say that a sequence $B_1, \ldots, B_k$  is a {\em valid state sequence} at a vertex $v$ in $\redc(\T)$ with reduced depth $k$ if it satisfies the following conditions:
\begin{itemize}
\item For all $i=1, \ldots, k$, $B_i$ is a power of 2 and  lies in the range $[1, 2n]$.
\item For any $i = 1, \ldots, k-1$, $B_i/B_{i+1}$ lies in the range $[1/8, 8]$.
\end{itemize}
If $S_1, \ldots, S_k$  is the sequence of segments visited while going from the root to $v$, then $B_i$ will correspond to $S_i$.

Consider a vertex $v \in \redc(\T)$ at reduced depth $k$, and a child $w$ of $v$ in $\redc(\T)$ (at reduced depth $k+1$). Let
$\Lambda_v = (B_1, \ldots, B_k)$ and $\Lambda_w = (B_1', \ldots, B_{k+1}')$ be valid state sequences at these two vertices
respectively. We say that $\Lambda_w$ is an {\em extension} of $\Lambda_v$ if $B_i = B_i'$ for $i=1, \ldots, k$. In the dynamic program, we maintain a table entry $T[v, \Gamma_v]$ for each vertex $v$ in $\redc(\T)$ and valid state sequence $\Gamma_v$ at $v$. Informally, this table entry stores the following quantity. Let $S_1, \ldots, S_k$ be the segments from the root to the vertex $v$. This table entry stores  the minimum cost of a subset $E'$ of edges in $\T_v$ such that
$E' \cup G(v)$ is a feasible solution for the paths in $\P_v$, where $G(v)$ is the union of the set of edges selected by {\grselect} in the segments
$S_1, \ldots, S_k$ with budgets $B_1, \ldots, B_k$ respectively.

The algorithm is described in Figure~\ref{fig:DPeasy}. We first compute the set $G(v)$ as outlined above. Let the children
of $v$ in the tree $\redc(\T)$ be $w_1$ and $w_2$. Let the segments corresponding to $(v, w_1)$ and $(v,w_2)$  be $S^1_{k+1}$ and
$S^2_{k+1}$ respectively. For both these children, we find the best extension of $\Gamma_v$.  For the node $w_r$,
we try out all possibilities for the budget $B_{k+1}^r$ for the segment $S_{k+1}^r$. For each of these choices, we
select a set of edges in $S_{k+1}^r$ as given by {\grselect} and
lookup the table entry for $w_r$ and the corresponding state sequence. We pick the choice for $B_{k+1}^r$ for which the combined
cost is smallest (see line~7(i)(c)).

 \begin{figure}[ht]
  \begin{center}
    \begin{boxedminipage}{6.5in}
      {\bf Fill DP Table  :} \medskip\\
         \sp \sp {\bf Input:} A node $v \in \redc(\T)$ at reduced  depth $k$, and a state sequence $\Lambda_v =  (B_1, \ldots, B_k).$ \\
        \sp \sp \sp \sp 0. If $v$ is a leaf node, set $D[v, \Lambda_v]$ to 0, and exit.  \\
         \sp \sp \sp \sp 1. Let $S_1, \ldots, S_k$ be the segments visited while going from the root to $v$ in $\T$. \\
          \sp \sp \sp \sp 2. Initialize $G(v) \leftarrow \emptyset$. \\
         \sp \sp \sp \sp 3. For $i=1, \ldots, k$ \\
         \sp \sp \sp \sp \sp \sp \sp(i)  Let $G_i(v)$ be the edges returned by \grselect($S_i, B_i$). \\
         \sp \sp \sp \sp \sp \sp \sp (ii) $G(v) \leftarrow G(v) \cup G_i(v). $ \\
        \sp \sp \sp \sp4. Let $w_1, w_2$ be the two children of $v$ in $\redc(\T)$ and \\
     \sp \sp \sp \sp  \sp \sp \sp \sp  the corresponding segments be $S_{k+1}^1, S^2_{k+1}$. \\
        \sp \sp \sp \sp 5. Initialize $M_1, M_2$ to $\infty$. \\
        \sp \sp \sp \sp 6. For $r = 1,2$ (go to each of the two children and solve the subproblems)\\
        \sp \sp \sp \sp \sp \sp \sp (i) For each extension $\Gamma_{w_r}  = (B_1, \ldots, B_k, B^r_{k+1})$ of $\Gamma_v$ do \\
        \sp \sp \sp \sp \sp \sp \sp \sp \sp \sp \sp (a) Let $G_{k+1}(w_r)$ be the edges returned by \grselect($S_{k+1}^r, B_{k+1}^r)$. \\
         \sp \sp \sp \sp \sp \sp \sp \sp \sp  \sp \sp (b) If any path in $\P_v$ ending in the segment $S_{k+1}^r$ is not satisfied  by $G(v) \cup G_{k+1}(w_r)$ \\
        \sp \sp \sp \sp \sp \sp \sp \sp \sp \sp \sp \sp \sp \sp \sp \sp \sp \sp  exit this loop \\
        \sp \sp \sp \sp \sp \sp \sp  \sp \sp  \sp \sp (c)   $M_r \leftarrow \min(M_r, {\mbox {cost of $G_{k+1}(w_r)$}}+  D[w_r, \Gamma_{w_r}]). $ \\
         \sp \sp \sp \sp 7. $D[v, \Lambda_v]  \leftarrow M_1 + M_2$.
      \end{boxedminipage}
      \caption{Filling a table entry $D[v, \Lambda_v]$  in the dynamic program. }
      \label{fig:DPeasy}
      \end{center}
\end{figure}

We will not analyze this algorithm here because  it's analysis will follow from the analysis of the more general case. We would like to remark that for any $v \in \redc(\T)$, the number of possibilities for a valid state sequence is bounded by $2^{O(H)} \cdot \log n$. Indeed,
there are $O(\log n)$ choices for $B_1$, and given $B_i$, there are only 7 choices for $B_{i+1}$ (since $B_{i+1}/B_i$ is a power of 2
and lies in the range $[1/8, 8]$). Therefore, the algorithm has running time polynomial in $n$ and $2^{O(H)}$.
\subsection{General Instances on Binary Trees}
We now consider general instances of \dmc on binary trees. We can assume that every path $P \in \P$ contains at least
one vertex of $\redc(\T)$ as an internal vertex. Indeed, we can separately consider the instance consisting of paths in $\P$ which
are contained in one of the segments -- this will be a segment confined instance as in Section~\ref{sec:confined}. We can get a constant
factor approximation for such an instance.

We will proceed as in the previous section, but now it is not sufficient to know the total cost spent by an optimal solution in each segment. For example, consider a segment $S$ which contains two edges $e_1$ and $e_2$; and  $e_1$ has low density, whereas $e_2$ has high density. Now, we would prefer to pick $e_1$, but it is possible that there are paths in $\P$ which contain $e_2$ but do not contain
$e_1$. Therefore, we cannot easily determine whether we should prefer picking $e_1$ over picking $e_2$. However, if all
edges in $S$ had the same density, then this would not be an issue. Indeed, given a budget $B$ for $S$, we would proceed as follows -- starting from each of the end-points, we will keep selecting edges of cost at most $B$ till their total cost exceeds $B$. The reason is that all edges are equivalent in terms of cost per unit size, and since each path in $\P$ contains at least one of the end-points of $S$, we might as well pick edges which are closer to the end-points. Of course, edges in $S$ may have varying density, and so, we will now need to know the budget spent by the optimum solution for each of the possible density values. We now describe this notion more formally.

\noindent
{\bf Algorithm Description}
We first assume that the density of any edge is a power of 128 -- we can do this by scaling the costs of edges by factors of at most 128.
We say that an edge $e$ is of {\em density class} $\tau$ if it's density is $128^\tau$. Let $\tau_{\max}$ and $\tau_{\min}$ denote the
maximum and the minimum density class of an edge respectively. Earlier, we had specified a budget $B(S)$ for each segment $S$ above $v$ while specifying the state at $v$. Now, we will need to store more information at every such segment. We shall use the term {\em cell} to refer to a pair $(S, \tau)$, where $S$ is a segment
and $\tau$ is a density class\footnote{For technical reasons, we will allow $\tau$ to lie in the range $[\tau_{\min}, \tau_{\max}+1]$}. Given a cell $(S, \tau)$ and a budget $B$, the algorithm~{\grselect} in Figure~\ref{fig:greedynew} describes the algorithm for selecting edges of density class $\tau$ from $S$. As mentioned above, this procedure ensures that we
pick enough edges from both end-points of $S$. The only subtlety is than in Step~4, we allow the cost to cross $2B$ -- the factor 2 is for technical reasons which will become clear later. Note that in Step~4 (and similarly in Step 5) we could end up selecting edges of total cost up tp $3B$ because each selected edge has cost at most $B$. 

\begin{figure}[ht]
  \begin{center}
    \begin{boxedminipage}{6.5in}
      {\bf Algorithm \grselect  :} \medskip\\
         \sp \sp {\bf Input:} A cell $(S, \tau)$ and a budget $B$.   \\
         \sp \sp 1. Initialize a set $G$ to emptyset. \\
 		\sp \sp  2.  Let $S(\tau)$ be the edges in $S$ of density class $\tau$ and cost at most $B$. \\
        \sp \sp 3. Arrange the edges in $S(\tau)$ from top to bottom order. \\
		\sp \sp 4. Keep adding these edges to $G$ in this order till their total cost exceeds $2B$. \\
\sp \sp 5. Repeat Step 4 with the edges in $S(\tau)$ arranged in bottom to top order.\\
		\sp \sp 6. Output $G$.
      \end{boxedminipage}
      \caption{Algorithm {\grselect} for selecting edges in a segment $S$ of density class $\tau$ with a budget $B$. }
      \label{fig:greedynew}
      \end{center}
\end{figure}

 As in the previous section, we define the notion of state
for a vertex $v \in \redc(\T)$.
Let $v$ be a node at reduced depth $k$  in $\redc(\T)$. Let $S_1, \ldots, S_k$ be the segments encountered as we go from the
root to $v$ in $\T$. If we were to proceed as in the previous section, we will store
a budget $B(S_i, \tau)$ For each cell $(S_i, \tau)$, $i=1, \ldots, k$,  $\tau \in [\tau_{\min}, \tau_{\max}]$. This will lead to a very large
number of possibilities (even if assume that for ``nearby'' cells, the budgets are not very different). Somewhat surprisingly, we show that it is enough to store this information at a small number of cells (in fact, linear in number of density classes and $H$).

\begin{figure}[ht]
\begin{center}
\epsfig{file=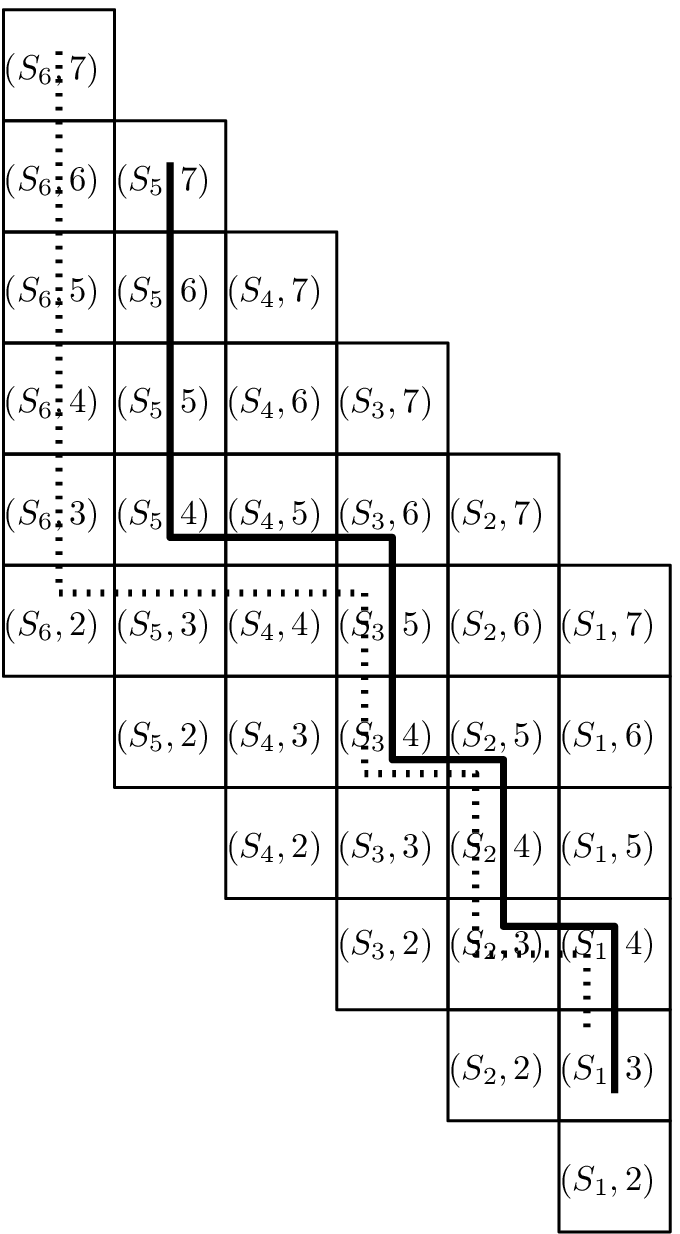, height=4 in}
\end{center}
\caption{$w$ is a vertex at reduced depth 6 and $v$ is the parent of $v$ in $\redc(\T)$. The segments above $w$ are labelled $S_1, \ldots, S_6$ (starting from the root downwards). The cells
are arranged in a tabular fashion as shown -- the density classes lie in the range $\{2, 3, \ldots, 7\}$. The solid line shows a valid cell sequence for $v$. The dotted line shows a valid cell sequence for
$w$ which is also an extension of the cell sequence for $v$ -- note that once the dotted line meets the solid line (in the cell $(S_3, 5)$, it stays with it till the end.  }
\label{fig:tab}
\end{figure}

To formalize this notion, we say that a sequence $\C_v= \sigma_1, \ldots, \sigma_{\ell}$ of cells is a {\em valid cell sequence} at $v$ if the following conditions are satisfied : (i) the first cell $\sigma_1$ is $(S_k, \tau_{\max})$, (ii) the last cell is of the form $(S_1, \tau)$
for some density class $\tau$, and (iii) if $\sigma=(S_i, \tau)$ is a cell in this sequence, then the next cell is either $(S_i, \tau-1)$
or $(S_{i-1}, \tau+1)$. To visualize this definition,  we arrange the cells $(S_i, \tau)$ in the form of a table shown in Figure~\ref{fig:tab}.  For each segment $S_i$, we draw a column in the table with one
  entry for each cell $(S_i, \tau)$, with $\tau$ increasing as we go up. Further as we go right, we shift these columns one step down. So row $\tau$ of this
  table will correspond to cells $(S_k, \tau), (S_{k-1}, \tau+1), (S_{k-2}, \tau+2)$ and so on. With this picture in mind, a a valid  sequence of cells starts from the top left and at each step it either goes one step down or one step right. Note that for such a sequence $\C_v$ and a segment $S_i$, the cells $(S_i, \tau)$ which appear in $\C_v$ are given by $(S_i, \tau_1), (S_i, \tau_1 + 1),
  \ldots, (S_i, \tau_2)$ for some $\tau_1 \leq \tau_2$. We say that the cells $(S_i, \tau)$, $\tau < \tau_1$, lie below the sequence $\C_v$, and the cells $(S_i, \tau), \tau > \tau_2$ lie above this
  sequence (e.g., in  Figure~\ref{fig:tab}, the cell $(S_2, 6)$ lies above the shown cell sequence, and $(S_4, 2)$ lies below it).

Besides a valid cell sequence, we need to define two more sequences for the vertex $v$:
\begin{itemize}
\item Valid Segment Budget Sequence: This is similar to the sequence defined in Section~\ref{sec:segspan}. This is a sequence $\Lambda^\seg_v := (B^\seg_1, \ldots, B^\seg_k)$, where $B^\seg_i$ corresponds to the segment $S_i$. As before, each of these quantities is a power of 2 and lies in the range $[1,2n]$. Further, for any $i$, the ratio $B^\seg_i/B^\seg_{i+1}$ lies in the range $[1/8,8]$.
\item Valid Cell Budget Sequence: Corresponding to the valid cell sequence $\C_v= \sigma_1, \ldots, \sigma_{\ell}$ and valid budget sequence $(B^\seg_1, \ldots, B^\seg_k)$, we have a sequence $\Lambda^\cell_v := (B^\cell_1, \ldots, B^\cell_{\ell})$,
where $B^\cell_j$ corresponds to the cell $\sigma_j$. Each of the quantities $B^\cell_j$ lies in the range $[1,2n]$. Further.
the ratio $B^\cell_j/B^\cell_{j+1}$ lies in the range $[1/8,8]$.
\end{itemize}

Intuitively, $B^\seg_i$ is supposed to capture the cost of edges picked by the optimal solution in $S_i$, whereas $B^\cell_j$, where $\sigma_j = (S_i, \tau)$,  captures the cost of the
density class $\tau$ edges in $S_i$ which get selected by the optimal solution.
A {\em valid state} $\State(v)$ at the vertex $v$ is given by the triplet $(\C_v, \Lambda^\seg_v, \Lambda^\cell_v)$ which in addition satisfies the following properties:
\begin{itemize}
\item[(i)] For a cell $\sigma_j = (S_i, \tau)$ in $\C_v$, the quantity $B^\cell_j \leq B_i^\seg$. Again, the intuition is clear -- the first quantity corresponds to cost of density class
$\tau$ edges in $S_i$, whereas the latter denotes the cost of all the edges in $S_i$ (which are selected by the optimal solution).\footnote{During the analysis, $B^\seg_i$ will be the {\em maximum} over all density classes $\tau$ of 
the  density class $\tau$ edges selected by the optimal solution from this segment. But this inequality will still hold.}
\end{itemize}

Informally, the idea behind these definitions is the following -- for each cell $\sigma_j$ in $\C_v$, we are given the corresponding budget $B^\cell_j$. We use this budget and Algorithm~{\grselect}
to select edges corresponding to this cell. For cells $\sigma = (S_i, \tau)$ which lie above $\C_v$, we do not have to use any edge of density class $\tau$ from $S_i$. Note that this does not mean  that our
algorithm will not pick any such edge, it is just that for the sub-problem defined by the paths in $\P_v$ and the state at $v$, we will not use any such edge (for covering a path in $\P_v$). For cells $\sigma = (S_i, \tau)$ which lie below
$\C_v$, we pick all edges of density class $\tau$ and cost at most $B^\seg_i$ from $S_i$. Thus, we can specify the subset of selected edges from $S_1, \ldots, S_k$ (for the purpose of covering paths in $\P_v$) by specifying these sequences only. The non-trivial fact is to show that maintaining such a small state (i.e., the three valid sequences) suffices to capture all scenarios. The algorithm for picking the edges for a specific segment $S$ is shown in Figure~\ref{fig:segmentselect}. Note one subtlety -- for the density class $\tau_1$ (in Step~4), we use budget $B^\seg_i$ instead
of the corresponding cell budget. The reason for this will become clear during the proof of Claim~\ref{cl:high}. 

\newcommand{\segmentselect}{{\tt SelectSegment}}
\begin{figure}[ht]
  \begin{center}
    \begin{boxedminipage}{6.5in}
      {\bf Algorithm \segmentselect  :} \medskip\\
         \sp \sp {\bf Input:} A vertex $v \in \redc(\T)$,  $\State(v) := (\C_v, \Lambda^\Seg_v, \Lambda^\cell_v)$, a segment $S_i$ lying above  $v$.   \\
         \sp \sp 1. Initialize a set $G$ to emptyset. \\
 		\sp \sp  2. Let $(S_i, \tau_1), (S_i, \tau_1 +1), \ldots, (S_i, \tau_2)$ be the cells in $\C_v$ corresponding to the segment $S_i$. \\
        \sp \sp  3. For $\tau = \tau_1+1, \ldots, \tau_2$ do \\
        \sp \sp \sp \sp \sp \sp (i) Add to $G$ the edges returned by \grselect($(S_i, \tau), B^\cell_j)$, \\
        \sp \sp \sp \sp \sp \sp \sp \sp \sp \sp  where $j$ is the index of $(S_i, \tau)$ in $\C_v$. \\
        \sp \sp 4. Add to $G$ the edges returned by \grselect($(S_i, \tau_1), B^\seg_i)$. \\
        \sp \sp  5.   Add to $G$ all edges $e \in S_i$ of density class strictly less than $\tau$ and for which $c_e \leq B^\seg_i$. \\
        \sp \sp 6. Return $G$.
      \end{boxedminipage}
      \caption{Algorithm {\segmentselect} for selecting edges in a segment $S$ as dictated by the state at $v$. The notations $B^\seg$ and $B^\cell$ are as explained in the text. }
      \label{fig:segmentselect}
      \end{center}
\end{figure}

Before specifying the DP table, we need to show what it means for a state to be an {\em extension} of another state. Let $w$ be a child of $v$ in $\redc(\T)$, and let $S_{k+1}$ be the corresponding segment joining $v$ and $w$. Given states $\State(v) := (\C_v, \Lambda^\Seg_v, \Lambda^\cell_v)$ and $\State(w) := (\C_w, \Lambda^\Seg_w, \Lambda^\cell_w)$, we say that $\State(w)$ is an extension
of $\State(v)$ if the following conditions are satisfied:
\begin{itemize}
\item If $\Lambda^\seg_v = (B^\seg_1, \ldots, B^\seg_k)$ and $\Lambda^\seg_w = (B^{\seg'}_1, \ldots, B^{\seg'}_{k+1}),$ then $B^\seg_i = B^{\seg'}_i$ for $i=1, \ldots, k$. In other words, the two
sequences agree on segments $S_1, \ldots, S_k$.
\item Recall that the first cell of $\C_w$ is $(S_{k+1}, \tau_{\max})$. Let $\tau_1$ be the smallest $\tau$ such that the cell $(S_{k+1}, \tau_1)$ appears in $\C_w$. Then the  cells succeeding $(S_{k+1},
\tau_1)$ in $\C_w$ must be of the form $(S_{k}, \tau_1+1), (S_{k-1}, \tau_1 + 2), \ldots,$ till we reach a cell which belongs to $\C_v$ (or we reach a cell for the segment $S_1$). After this the remaining cells in $\C_w$ are
the ones appearing in $\C_v$. Pictorially (see Figure~\ref{fig:tab}), the sequence for $\C_w$ starts from the top left, keeps going down till $(S_{k+1}, \tau_1)$, and then keeps moving right
till it hits $\C_v$. After this, it merges with $\C_v$.
\item The sequences $\Lambda^\cell_v$ and $\Lambda^\cell_w$ agree on cells which belong to both $\C_v$ and $\C_w$ (note that  the cells common to both will be a suffix of both the sequences). 
\end{itemize}

Having defined the notion of extension, the algorithm for filling the DP table for $D[v, \State(v)]$ is identical to the one in Figure~\ref{fig:DPeasy}. The details are given in Figure~\ref{fig:DP}. This completes the description of the algorithm.

\begin{figure}[ht]
  \begin{center}
    \begin{boxedminipage}{6.5in}
      {\bf Fill DP Table  :} \medskip\\
         \sp \sp {\bf Input:} A node $v \in \redc(\T)$ at reduced  depth $k$, $\State(v) = (\C_v, \Lambda^\Seg_v, \Lambda^\cell_v).$ \\
        \sp \sp \sp \sp 0. If $v$ is a leaf node, set $D[v, \State(v)]$ to 0, and exit.  \\
         \sp \sp \sp \sp 1. Let $S_1, \ldots, S_k$ be the segments visited while going from the root to $v$ in $\T$. \\
          \sp \sp \sp \sp 2. Initialize $G(v) \leftarrow \emptyset$. \\
         \sp \sp \sp \sp 3. For $i=1, \ldots, k$ \\
         \sp \sp \sp \sp \sp \sp \sp(i)  Let $G_i(v)$ be the edges returned by Algorithm~\segmentselect($v,S_i, \State(v)$). \\
         \sp \sp \sp \sp \sp \sp \sp (ii) $G(v) \leftarrow G(v) \cup G_i(v). $ \\
        \sp \sp \sp \sp4. Let $w_1, w_2$ be the two children of $v$ in $\redc(\T)$ and \\
     \sp \sp \sp \sp  \sp \sp \sp \sp  the corresponding segments be $S_{k+1}^1, S^2_{k+1}$. \\
        \sp \sp \sp \sp 5. Initialize $M_1, M_2$ to $\infty$. \\
        \sp \sp \sp \sp 6. For $r = 1,2$ (go to each of the two children and solve the subproblems)\\
        \sp \sp \sp \sp \sp \sp \sp (i) For each extension $\State(w_r)$ of  $\State(v)$ do \\
        \sp \sp \sp \sp \sp \sp \sp \sp \sp \sp \sp (a) Let $G_{k+1}(w_r)$ be the edges returned by \segmentselect($w_r, S_{k+1}^r, \State(w_r))$. \\
         \sp \sp \sp \sp \sp \sp \sp \sp \sp  \sp \sp (b) If any path in $\P_v$ ending in the segment $S_{k+1}^r$ is not satisfied  by $G(v) \cup G_{k+1}(w_r)$ \\
        \sp \sp \sp \sp \sp \sp \sp \sp \sp \sp \sp \sp \sp \sp \sp \sp \sp \sp  exit this loop \\
        \sp \sp \sp \sp \sp \sp \sp  \sp \sp  \sp \sp (c)   $M_r \leftarrow \min(M_r, {\mbox {cost of $G_{k+1}(w_r)$}}+  D[w_r, \State(w_r)]). $ \\
         \sp \sp \sp \sp 7. $D[v, \Lambda_v]  \leftarrow M_1 + M_2$.
      \end{boxedminipage}
      \caption{Filling a table entry $D[v, \State(v)]$  in the dynamic program. }
      \label{fig:DP}
      \end{center}
\end{figure}

\subsection{Algorithm Analysis}
We now analyze the algorithm.

\noindent
{\bf Running Time}

We bound the running time of the algorithm. First we bound the number of possible table entries.
\begin{lemma}
\label{cl:DPsize}
For any vertex $v$, the number of possible valid states is 
 $O\left((\log n)^2 \cdot 2^{O(H)} \cdot \left(\rho_{max}/\rho_{min}\right)^2 \right)$.
\end{lemma}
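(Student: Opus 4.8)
The plan is to count, for a fixed vertex $v$ at reduced depth $k$, the number of triples $(\C_v, \Lambda^\seg_v, \Lambda^\cell_v)$ that form a valid state. I will count the three components more or less independently and multiply the bounds, keeping track of how the $\log n$, the $2^{O(H)}$, and the density-ratio factors arise. Note first that $k \le H$, and that the number of density classes is at most $\tau_{\max}-\tau_{\min}+2 = O(\log_{128}(\rho_{\max}/\rho_{\min}))$, which I will call $D$; since we only need the final bound in terms of $\rho_{\max}/\rho_{\min}$, it suffices to bound the relevant contributions by $(\rho_{\max}/\rho_{\min})^{O(1)}$ and by $2^{O(H)}$.

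\emph{Counting valid cell sequences $\C_v$.} Recall a valid cell sequence is a monotone staircase path in the $k \times D$ table that starts at the top-left cell $(S_k,\tau_{\max})$, at each step goes either one step down (decrease $\tau$) or one step right (move from $S_i$ to $S_{i-1}$, increase $\tau$), and ends in the column for $S_1$. Such a staircase is determined by the sequence of its down/right moves; it makes exactly $k-1$ right moves, and the number of down moves is bounded by the total vertical extent traversed, which is $O(D + k)$ (each right move is paired with an increase in $\tau$, and $\tau$ stays in a window of size $D$). Hence the number of staircases is at most $\binom{O(D+k)}{k-1} = 2^{O(k)}\cdot 2^{O(D)} = 2^{O(H)} \cdot (\rho_{\max}/\rho_{\min})^{O(1)}$. (A cruder bound: at each of the $O(D+k)$ steps there are at most $2$ choices, giving the same thing.) The length $\ell$ of $\C_v$ is therefore $O(D+k) = O(H) + O(\log(\rho_{\max}/\rho_{\min}))$.

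\emph{Counting valid segment and cell budget sequences.} For $\Lambda^\seg_v = (B^\seg_1,\dots,B^\seg_k)$: $B^\seg_1$ is a power of $2$ in $[1,2n]$, so there are $O(\log n)$ choices; and given $B^\seg_i$, the constraint $B^\seg_i/B^\seg_{i+1} \in [1/8,8]$ together with $B^\seg_{i+1}$ being a power of $2$ leaves only a constant number (at most $7$) of choices for each subsequent entry. So there are $O(\log n) \cdot 7^{k-1} = O(\log n)\cdot 2^{O(H)}$ valid segment budget sequences. Identically, $\Lambda^\cell_v = (B^\cell_1,\dots,B^\cell_\ell)$ has $B^\cell_1$ one of $O(\log n)$ powers of $2$ in $[1,2n]$ and each subsequent entry constrained to a constant number of choices by the ratio condition, giving $O(\log n)\cdot 2^{O(\ell)} = O(\log n)\cdot 2^{O(H)}\cdot(\rho_{\max}/\rho_{\min})^{O(1)}$ possibilities. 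Multiplying the three counts and absorbing constants yields $O\!\left((\log n)^2 \cdot 2^{O(H)} \cdot (\rho_{\max}/\rho_{\min})^{O(1)}\right)$ valid states; since the $2^{O(H)}$ term already dominates any fixed polynomial, this is the claimed bound, and (checking the exponent on the density ratio carefully) it matches $(\rho_{\max}/\rho_{\min})^2$ in the statement.

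\emph{Main obstacle.} The delicate part is pinning down the exponent on $\rho_{\max}/\rho_{\min}$: one must argue that the staircase $\C_v$ does not wander through more than $O(D)$ rows (beyond the $k-1$ forced column shifts), and that the cell-budget sequence, although indexed by the length $\ell$ of $\C_v$ rather than by $k$, still contributes only one extra $(\rho_{\max}/\rho_{\min})^{O(1)}$ factor (since $\ell = O(H+D)$, the count $2^{O(\ell)}$ splits as $2^{O(H)}\cdot (\rho_{\max}/\rho_{\min})^{O(1)}$). Once $\ell = O(H + \log(\rho_{\max}/\rho_{\min}))$ is established, everything else is a routine product of geometric-chain counts, and the constants can be tuned so that the density-ratio exponent is at most $2$.
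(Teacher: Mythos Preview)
Your proposal is correct and follows essentially the same approach as the paper: bound the length $\ell$ of a valid cell sequence by $O(H + (\tau_{\max}-\tau_{\min}))$, count cell sequences by $2^{O(\ell)}$, and count each budget sequence by $O(\log n)\cdot 7^{O(\cdot)}$ using the bounded-ratio constraint. The paper obtains the length bound via a small potential-function trick (setting $\Phi_j = j + 2i + \tau$ for the $j$-th cell $(S_i,\tau)$ and checking $\Phi_{j+1}=\Phi_j$), whereas you count right moves and down moves directly; both yield $\ell \le 2H + (\tau_{\max}-\tau_{\min})$, and the final multiplication gives exactly the two factors of $\rho_{\max}/\rho_{\min}$ (one from $2^\ell$ for the cell sequence, one from $7^\ell$ for the cell budget sequence, using $2^{\tau_{\max}-\tau_{\min}},\,7^{\tau_{\max}-\tau_{\min}} \le 128^{\tau_{\max}-\tau_{\min}} = \rho_{\max}/\rho_{\min}$).
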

\begin{proof}
The length of a valid cell sequence is bounded by $(\tau_{\max} - \tau_{min})  + 2H$. To see this, fix a vertex $v$ at
reduced depth $k$, with segments $S_1, \ldots, S_k$ from the root to the vertex $v$. Consider a valid cell sequence $\sigma_1, \ldots, \sigma_\ell$. For a cell $\sigma_j = (S_i, \tau)$, define a potential $\Phi_j = j+2i+\tau$. We claim
that the potential $\Phi_j = \Phi_{j+1}$ for all indices $j$ in this sequence. To see this, there are two options for
$\sigma_{j+1}=(S_{i'}, \tau')$ :
\begin{itemize}
\item $S_i = S_{i'}, \tau' = \tau - 1$: Here, $\Phi_{j+1} = j+1 + 2i + \tau-1 = j + 2i + \tau = \Phi_j$.
\item $S_{i'} = S_{i-1}, \tau' = \tau+1$: Here $\Phi_{j+1} = j+1 + 2(i-1) + \tau + 1 = \Phi_j. $
\end{itemize}
Therefore,
$$ \ell+ \tau_{\min} \leq \Phi_l = \Phi_0 \leq 2H + \tau_{\max}. $$
It follows that $\ell \leq 2H + \tau_{\max} - \tau_{\min}$.
 Given the cell $\sigma_j$, there are only two choices for $\sigma_{j+1}$. So, the number of possible valid cell sequences
is bounded by $2^{2H + \tau_{\max} - \tau_{\min}} \leq 2^{2H} \cdot \rho_{\max}/\rho_{\min}$.

Now we bound the number of valid segment budget sequences. Consider such a sequence $B^\seg_1, \ldots, B^\seg_k$. Since
$B^\seg_1 \in [1, 2n]$ and it is a power of 2, there are $O(\log n)$ choices for it. Given $B^\seg_i$, there are at most 7 choices
for $B_{i+1}^\seg$, because $B^\seg_{i+1}/B^\seg_i$ is a power of 2 and lies in the range $[1/8,8]$. Therefore, the number
of such sequences is at most $O(\log n) \cdot 7^k \leq O(7^H \log n)$. Similarly, the number of valid cell budget sequences
is at  most $O(\log n) \cdot 7^\ell$, where $\ell$ is the maximum length of a valid cell sequence. By the argument above, $\ell$ is
at most $2H + \tau_{\max}-\tau_{\min}$. Combining everything, we see that the number of possible states for $v$ is bounded by
a constant times
$$ 2^{2H} \cdot \rho_{\max}/\rho_{\min} \cdot 7^H \log n \cdot \log n \cdot 7^{2H} \cdot \rho_{\max}/\rho_{\min}, $$
which implies the desired result.
\end{proof}

We can now bound the running time easily.
\begin{lemma}
\label{lem:running}
The running time of the algorithm is polynomial in $n, 2^{H}, \rho_{\max}/\rho_{\min}$.
\end{lemma}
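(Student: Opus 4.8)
The plan is to combine the per-vertex bound on the number of valid states from Lemma~\ref{cl:DPsize} with a bound on the work done to fill a single table entry, and then multiply by the number of vertices in $\redc(\T)$. First I would recall that $\redc(\T)$ has $O(n)$ vertices (it is obtained from $\T$ by shortcutting degree-2 vertices, and $\T$ has $n$ nodes), so the DP table has at most
$$
O(n) \cdot O\!\left((\log n)^2 \cdot 2^{O(H)} \cdot \left(\rho_{\max}/\rho_{\min}\right)^2\right)
$$
entries by Lemma~\ref{cl:DPsize}. This is already polynomial in $n$, $2^H$, and $\rho_{\max}/\rho_{\min}$.

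Next I would bound the cost of executing {\bf Fill DP Table} for a single entry $D[v,\State(v)]$. Steps 1--3 iterate over the at most $k \le H$ segments above $v$ and, for each, call {\segmentselect}; each such call iterates over the $O(\tau_{\max}-\tau_{\min}) = O(\log(\rho_{\max}/\rho_{\min}))$ density classes and runs {\grselect}, which is a single linear pass over the edges of a cell and hence $O(n)$ time. So computing $G(v)$ takes $\mathrm{poly}(n, H, \log(\rho_{\max}/\rho_{\min}))$ time. Step 6 loops over $r\in\{1,2\}$ and over all extensions $\State(w_r)$ of $\State(v)$; the number of extensions is at most the number of valid states at $w_r$, which is again $O\!\left((\log n)^2 \cdot 2^{O(H)} \cdot (\rho_{\max}/\rho_{\min})^2\right)$ by Lemma~\ref{cl:DPsize}. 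For each extension we call {\segmentselect} once more ($\mathrm{poly}(n)$ time), check feasibility of the paths in $\P_v$ ending in $S_{k+1}^r$ against $G(v)\cup G_{k+1}(w_r)$ (each path check is $O(n)$, and there are at most $O(n^2)$ paths, or fewer after the reduction), and perform a constant-time table lookup and comparison. Hence the work per table entry is polynomial in $n$, $2^H$, and $\rho_{\max}/\rho_{\min}$.

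Multiplying the number of table entries by the per-entry work, and observing that all quantities involved—$n$, $2^{O(H)}$, $(\rho_{\max}/\rho_{\min})^{O(1)}$, $(\log n)^{O(1)}$, and $\log(\rho_{\max}/\rho_{\min})$—are bounded by a polynomial in $n$, $2^H$, and $\rho_{\max}/\rho_{\min}$, the overall running time is $\mathrm{poly}(n, 2^H, \rho_{\max}/\rho_{\min})$. Finally I would note that the answer is read off from the root: the optimal solution value (up to the constant factors incurred in the reductions and in {\grselect}/{\segmentselect}) is $\min_{\State(r)} D[r,\State(r)]$ over the $O((\log n)^2 \cdot 2^{O(H)} \cdot (\rho_{\max}/\rho_{\min})^2)$ valid states at the root, which adds only a polynomial factor. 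I do not expect any real obstacle here: the lemma is a bookkeeping consequence of Lemma~\ref{cl:DPsize} together with the observation that {\grselect}, {\segmentselect}, and the feasibility checks are all polynomial-time; the only mild care needed is to confirm that the number of \emph{extensions} of a state is no larger than the number of states at the child, which is immediate from the definition of extension.
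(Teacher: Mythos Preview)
Your proposal is correct and follows essentially the same approach as the paper: both argue that the number of DP entries is bounded by Lemma~\ref{cl:DPsize}, and that filling each entry requires iterating over extensions of $\State(v)$, each of which is itself a valid state at the child and hence again bounded by Lemma~\ref{cl:DPsize}. The paper's proof is a two-line version of yours, omitting the (routine) accounting for \segmentselect, \grselect, and feasibility checks that you spell out; one small inaccuracy in your write-up is that the root has only the empty state, so there is no minimization over $\State(r)$ at the end.
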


\begin{proof}
To fill the table entry for $D[v, \Gamma(v)]$, where $v$ has  children $w_1, w_2$, the Algorithm in Figure~\ref{fig:DP}  cycles
through the number of possible extensions of $\Gamma(v)$ for each of the two children. Since any valid extension of $\Gamma(v)$
for a child $w_r$ is also a valid state at $w_r$, the result follows from Lemma~\ref{cl:DPsize}.
\end{proof}

\noindent
{\bf Feasibility}

We now argue that the table entries in the DP correspond to valid solutions.
  Fix a vertex $v \in \redc(\T)$, and let $S_1, \ldots, S_k$ be the segments as we go from the  root to $v$.
Recall that $\T(v)$ denotes the sub-tree of $\T$ rooted below $v$ and $\P(v)$ denotes the paths in $\P$ which have an
internal vertex in $\T(v)$. For a segment $S$ and density class $\tau$, let $S(\tau)$ denote the edges of density class
$\tau$ in $S$.

\begin{lemma}
\label{lem:correct}
Consider the algorithm in Figure~\ref{fig:DP} for filling the DP entry $D[v, \St(v)]$, and let $G(v)$ be the set of vertices
obtained after Step~3 of the algorithm.  Assuming that this table entry is not $\infty$, 
there is a subset $Y(v)$ of edges in $\T(v)$ such that
the cost of $Y(v)$ is at equal to   $D[v,\St(v)]$
and  $Y(v) \cup G(v)$ is a feasible solution for the paths in $\P(v)$.
\end{lemma}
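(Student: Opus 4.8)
The plan is to prove the lemma by induction on the reduced height of $\T(v)$, i.e.\ to process the vertices of $\redc(\T)$ from the leaves up to the root, following the recursion of Figure~\ref{fig:DP}. For the base case, if $v$ is a leaf of $\redc(\T)$ then $\T(v)$ has no edges, so no path of $\P$ can have an internal vertex in $\T(v)$ and $\P(v)=\emptyset$; taking $Y(v)=\emptyset$ works, since Step~0 sets $D[v,\St(v)]=0$ and feasibility is vacuous.

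For the inductive step, let $v$ be at reduced depth $k$ with children $w_1,w_2$ in $\redc(\T)$ and corresponding segments $S_{k+1}^1,S_{k+1}^2$. Since $D[v,\St(v)]=M_1+M_2\neq\infty$, each $M_r$ is finite, so by line~6 of Figure~\ref{fig:DP} there is an extension $\St(w_r)$ of $\St(v)$ that did not trigger the test of line~6(i)(b) — hence every path of $\P(v)$ ending in $S_{k+1}^r$ is satisfied by $G(v)\cup G_{k+1}(w_r)$, where $G_{k+1}(w_r)=\segmentselect(w_r,S_{k+1}^r,\St(w_r))$ — and for which $M_r=\mathrm{cost}(G_{k+1}(w_r))+D[w_r,\St(w_r)]$ with $D[w_r,\St(w_r)]\neq\infty$. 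Applying the induction hypothesis at $w_1,w_2$ gives edge sets $Y(w_1),Y(w_2)$ with $\mathrm{cost}(Y(w_r))=D[w_r,\St(w_r)]$ and $Y(w_r)\cup G(w_r)$ feasible for $\P(w_r)$, where $G(w_r)$ is the Step-3 set built when $D[w_r,\St(w_r)]$ is filled. I would then set $Y(v):=Y(w_1)\cup Y(w_2)\cup G_{k+1}(w_1)\cup G_{k+1}(w_2)$. These four sets lie in pairwise edge-disjoint regions (the subtrees $\T(w_1),\T(w_2)$ and the segments $S_{k+1}^1,S_{k+1}^2$), all inside $\T(v)$, so $\mathrm{cost}(Y(v))=\sum_r(\mathrm{cost}(Y(w_r))+\mathrm{cost}(G_{k+1}(w_r)))=M_1+M_2=D[v,\St(v)]$.

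It remains to check feasibility for $\P(v)$. Any $P\in\P(v)$ passes through $v$ and then descends into exactly one of $S_{k+1}^1,S_{k+1}^2$, say $S_{k+1}^r$. If $P$ ends inside $S_{k+1}^r$ it is satisfied by $G(v)\cup G_{k+1}(w_r)\subseteq G(v)\cup Y(v)$, by the choice of $\St(w_r)$ above. Otherwise $P$ descends past $w_r$, so it has an internal vertex in $\T(w_r)$, i.e.\ $P\in\P(w_r)$, and the induction hypothesis says $P$ is satisfied by $Y(w_r)\cup G(w_r)$. Since $Y(w_r)\cup G_{k+1}(w_r)\subseteq Y(v)$ and $G(w_r)=G_{k+1}(w_r)\cup\bigcup_{i=1}^{k}\segmentselect(w_r,S_i,\St(w_r))$ where $S_1,\dots,S_k$ are the segments above $v$, it is enough to prove the following \emph{key claim}: if $\St(w_r)$ is an extension of $\St(v)$, then $\segmentselect(w_r,S_i,\St(w_r))\subseteq\segmentselect(v,S_i,\St(v))=G_i(v)$ for every segment $S_i$ above $v$. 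Granting it, $G(w_r)\subseteq Y(v)\cup G(v)$, so $Y(v)\cup G(v)$ satisfies $P$ and the induction closes.

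The key claim is where I expect the real work to be. The proof would be a case analysis on the ``merge point'' of the cell sequences $\C_{w_r}$ and $\C_v$ in the table of Figure~\ref{fig:tab}: $\C_{w_r}$ starts at $(S_{k+1}^r,\tau_{\max})$, walks down the column of $S_{k+1}^r$, then turns right along $(S_k,\cdot),(S_{k-1},\cdot),\dots$ until it lands on a cell of $\C_v$ (or reaches the column of $S_1$), after which it coincides with $\C_v$. For a segment $S_i$ strictly on the $\C_v$-side of the merge, $\C_{w_r}$ and $\C_v$ have exactly the same cells for $S_i$, and the definition of ``extension'' forces the same $B^\seg_i$ and the same cell budgets, so the two \segmentselect calls return identical sets. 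For the segment at the merge point, the cells of $\C_{w_r}$ for it form a low-density sub-run of those of $\C_v$ with matching budgets, so $\segmentselect(w_r,\cdot)$ returns a subset. For a segment $S_i$ traversed by $\C_{w_r}$ before the merge, $\C_{w_r}$ holds a single cell $(S_i,\rho_i)$, and I would argue by downward induction on $i$ — using that every cell has density class at most $\tau_{\max}+1$ and that class $\tau_{\max}+1$ contains no edge — that $(S_i,\rho_i)$ always lies strictly below $\C_v$ on $S_i$; then, in $\segmentselect(w_r,S_i,\St(w_r))$, Steps 4 and~5 (which for the bottom class use the budget $B^\seg_i$, not a cell budget) pick only edges of density class $\le\rho_i$ and cost $\le B^\seg_i$, all of which are already picked by Step~5 of $\segmentselect(v,S_i,\St(v))$. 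The only extra ingredient needed is the inequality $B^\cell_j\le B^\seg_i$ built into a valid state, so that each \grselect call with a cell budget selects only edges of cost $\le B^\seg_i$.
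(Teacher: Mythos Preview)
Your proof is correct and follows essentially the same approach as the paper: induction on the reduced height of $v$, the same construction $Y(v)=Y(w_1)\cup Y(w_2)\cup G_{k+1}(w_1)\cup G_{k+1}(w_2)$, and the same key claim that $G(w_r)\subseteq G(v)\cup G_{k+1}(w_r)$ whenever $\St(w_r)$ extends $\St(v)$. The only cosmetic difference is that you organize the case analysis for the key claim segment-by-segment (after the merge / at the merge / before the merge), whereas the paper organizes it cell-by-cell (a cell $(S_i,\tau)$ lies above, on, or below $\C_{w_r}$, and in the latter two cases one compares its position to $\C_v$); both decompositions cover the same ground and use the same two ingredients, namely that $B^\seg$ and the common $B^\cell$ values agree under extension, and that $B^\cell_j\le B^\seg_i$ so a cell-budget \grselect\ selects only edges that the segment-budget Step~5 would already have taken.
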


\begin{proof}
	We prove this by induction on the reduced height of $v$.
 If $v$ is a leaf, then $\P(v)$ is empty, and so the result follows trivially. Suppose it is true for all nodes in $\redc(\T)$ at reduced height at most $k-1$, and $v$ be
	at height $k$ in $\redc(\T)$.
We use the notation in Figure~\ref{fig:DP}. Consider a child $w_r$ of $v$, where $r$ is either $1$ or $2$.
	Let the value of $M_r$ used in Step~7 be
	equal to the cost of $G_{k+1}(w_r) + D[w_r, \St(w_r)]$ for some $\St(w_r)$ given by $(\C_{w_r}, \Lambda^\seg_{w_r}, \Lambda^\cell_{w_r})$, with $\Lambda^\seg_v = (B^\seg_1, \ldots, B^\seg_k)$, $\Lambda^\seg_{w_r} = (B^{\seg'}_1, \ldots, B^{\seg'}_{k+1})$ and
	$\Lambda^\cell_{w_r} = (B^{\cell'}_1,\ldots,B^{\cell'}_{\ell})$.
	Let $G(v)$ and $G_{k+1}(w_r)$ be as in the steps~3 and~6(i)(a) respectively. We ensure that $G(v) \cup G_{k+1}(w_r)$ covers
	all paths in $\P(v)$ which end before $w_r$. The following claim is the key to the correctness of the algorithm.
	\begin{claim}
		Let $G(w_r)$  be edges obtained at the end of Step~3 in the algorithm in Figure~\ref{fig:DP} when filling the DP table entry $D[w_r, \St(w_r)]$.
Then $G(w_r)$ is a subset of $G(v) \cup G_{k+1}(w_r)$.
	\end{claim}
	\begin{proof}
Let $S_{k+1}^r$ be the segment between $v$ and $w_r$. By definition, $G_{k+1}(w_r)$ and $G(w_r) \cap S_{k+1}^r$ are
identical. Let us now worry about segments $S_i, i \leq k$.
Fix such a segment $S_i$.
		
		We know that after the cells corresponding to the segment $S^r_{k+1}$, 
the sequence $\C_{w_r}$ lies below $\C_v$ till it meets $\C_v$.
Now consider various case for an arbitrary cell $\sigma=(S_i, \tau)$ (we refer to the algorithm~\segmentselect in
Figure~\ref{fig:segmentselect}):
\begin{itemize}
\item The cell $\sigma$ lies above $\C_{w_r}$: $G(w_r)$ does not contain any edge of $S_i(\tau)$.
\item The cell $\sigma$ lies below $\C_{w_r}$: The cell will lie below $\C_{v}$ as well, and so, $G(v)$ and $G(w)$ will
contain the same edges from $S_i(\tau)$ (because $B^\seg_i = B^{\seg'}_i$ are same).
\item The cell $\sigma$ lies on $\C_{w_r}$: If it also lies on $\C_v$, then the fact that $B^\cell$ and $B^{\cell'}$ values for this
cell are same  implies that $G(v)$ and $G(w_r)$ pick the same edges from $S_i(\tau)$ (in case $\tau$ happens to be the smallest indexed density class for which $(S_i, \tau) \in \C_{w_r}$, then the same will hold for $\C_v$ as well). If it lies below $\C_v$, then the facts
that $B^\seg_i = B^{\seg'}_i$, and $B^\seg_i \geq B^\cell_j = B^{\cell'}_{j'}$, where $j$ and $j'$ are the indices of this cell in the two cell sequences respectively, imply that $G(v)$ will pick all the edges fof cost at most $B^\seg_{i}$ from $S_i(\tau)$, whereas $G(w_r)$
will pick only a subset of these edges.
\end{itemize}

We see that $G(v) \cap S_i$ contains $G(w_r) \cap S_i$. This proves the claim.
		
	\end{proof}
	
	By induction hypothesis, there is a subset $Y(w_r)$ of edges in the subtree $\T(w_r)$ of cost equal to  $D[w_r,\St(w_r)]$ such that $Y(w_r) \cup G(w_r)$ satisfies all paths in $\P(w_r)$.
	We already know that $G(v) \cup G_{k+1}(w_r)$ covers all paths in $\P(v)$ which end in the segment $S_{k+1}^r$.
	Since any path in $\P(v)$ will either end in $S^1_{k+1}$ or $S^2_{k+1}$, or
	will belong to $\P(w_1) \cup \P(w_2)$, it follows that all paths in $\P(v)$ are covered by $\cup_{r=1}^2 (Y(w_r) \cup G(w_r) \cup G_{k+1}(w_r)) \cup G(v)$. Now,
	the claim above shows that $G(w_r) \subseteq G_{k+1}(w_r) \cup G(v)$. So this set is same as $Y(w_1) \cup Y(w_2) \cup G_{k+1}(w_1) \cup G_{k+1}(w_2)\cup G(v)$ (and these sets are mutually disjoint).
Recall that $M_r$ is equal to the cost of $G_{k+1}(w_r) \cup Y(w_r)$, it follows that the
	the DP table entry for $v$ for these parameters is exactly the cost of $Y(w_1) \cup Y(w_2) \cup G_{k+1}(w_1) \cup G_{k+1}(w_2)$. This proves the lemma. 
\end{proof}

For the root vertex $r$, a valid state at $r$ must be the empty set. The above lemma specialized to the root $r$ implies:
\begin{corollary}
\label{cor:feas}
Assuming that $D[r, \emptyset]$ is not $\infty$, 
it is the cost of a feasible solution to the input instance.
\end{corollary}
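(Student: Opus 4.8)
The plan is to simply specialize Lemma~\ref{lem:correct} to the root vertex $r$. By the paragraph preceding the corollary, the only valid state at $r$ is the empty sequence, and there are no segments on the path from the root to $r$; hence the set $G(r)$ produced after Step~3 of the algorithm in Figure~\ref{fig:DP} is empty. Assuming $D[r,\emptyset]\neq\infty$, Lemma~\ref{lem:correct} (the hypothesis ``this table entry is not $\infty$'' being exactly our assumption) then yields a subset $Y(r)$ of edges of $\T(r)$ whose total cost equals $D[r,\emptyset]$ and such that $Y(r)\cup G(r)=Y(r)$ is a feasible solution for every path in $\P(r)$.

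The second thing I would spell out is that $\P(r)$ is in fact the whole path set of the instance being solved. At the start of the general-instances argument we restricted attention to those paths of $\P$ having at least one internal vertex in $\redc(\T)$ (the segment-confined paths were peeled off and handled by the argument of Section~\ref{sec:confined}). Every such path has an internal vertex lying in $\T=\T(r)$, hence belongs to $\P(r)$ by definition of $\P(v)$. Therefore $Y(r)$ covers all remaining paths of $\P$, and its cost is exactly $D[r,\emptyset]$, which is the claim.

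I do not expect a genuine obstacle: all the content sits in Lemma~\ref{lem:correct}, and the corollary is just its $k=0$ instantiation together with the bookkeeping $G(r)=\emptyset$ and $\P(r)=\P$. The only points that deserve a line of care are (i) confirming that the empty sequence really is a valid state at $r$ --- vacuously, since the cell sequence, segment budget sequence and cell budget sequence are all empty and every ``valid'' condition ranges over an empty index set --- and (ii) that the recursion in Figure~\ref{fig:DP} actually fills $D[r,\emptyset]$, which is immediate since $r$ is not a leaf and the two children subproblems are well defined.
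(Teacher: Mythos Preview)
Your proposal is correct and matches the paper's own argument: the paper simply states that the corollary follows by specializing Lemma~\ref{lem:correct} to the root $r$, and you have spelled out precisely that specialization together with the bookkeeping $G(r)=\emptyset$ and $\P(r)=\P$.
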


\newcommand{\Opt}{{\tt OPT}}
\newcommand{\sB}{{B^\star}}
\noindent
{\bf Approximation Ratio}
Now we related the values of the DP table entries to the values of the optimal solution for suitable sub-problems.
We give some notation first. Let $\Opt$ denote an optimal solution to the input instance. For a segment $S$, we shall use
$S^\opt$ to the denote the subset of $S$ selected by $\Opt$. 
 Similarly, let $S^\opt(\tau)$ denote the subset of $S(\tau)$ selected by $\Opt$. 
 Let $B^\opt(S, \tau)$ denote the total cost of edges in $S^\opt(\tau)$, and $B^\opt(S)$ denote the
cost of edges in $S^\opt$.

We first show that we can upper bound $B^\opt(S, \tau)$ by values $\sB(S, \tau)$ values such that the latter values are close to each other for nearby cells. For two segments $S$ and $S'$, define the distance between them as the distance between
the corresponding edges in $\redc(T)$ (the distance between two adjacent edges is 1). Similarly, we say that a segment is the parent of another segment if this relation holds for the corresponding edges in $\redc(\T)$. Let $\cells(\T)$ denote the set
of all cells in $\T$.

\begin{lemma}
\label{lem:smooth}
We can find values $\sB(S, \tau)$ for each cell $(S, \tau)$ such that the following properties are satisfied: (i) for every cell $(S, \tau)$, $\sB(S, \tau)$
is a power of 2, and $\sB(S, \tau) \geq B^\opt(S, \tau)$, (ii)$ \sum_{(S, \tau) \in \cells(\T)} \sB(S, \tau) \leq 16 \cdot \sum_{(S, \tau) \in \cells(\T)}
B^\opt(S, \tau)$ and (iii) (smoothness) for every pair of segments $S, S'$, where $S'$ is the parent of $S$, and density class $\tau$,
$$8 \sB(S, \tau+1) \geq \sB(S, \tau) \geq \sB(S, \tau+1)/8, \mbox{ and } 8 \sB(S', \tau) \geq \sB(S, \tau) \geq \sB(S', \tau)/8. $$
\end{lemma}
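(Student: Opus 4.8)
The plan is to construct the values $\sB(S,\tau)$ in two stages: first round the true optimal costs up to powers of $2$, and then run a smoothing/propagation procedure over the tree $\redc(\T)$ to kill off the large jumps between neighbouring cells, paying for the increase by a geometric-series argument.

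\emph{Stage 1 (rounding).} For every cell $(S,\tau)$ let $\tilde B(S,\tau)$ be the smallest power of $2$ that is $\geq B^\opt(S,\tau)$ (and at least $1$, to make property~(i) meaningful even for empty cells). This already gives (i), and costs at most a factor $2$ in the sum, so $\sum \tilde B(S,\tau) \leq 2 \sum B^\opt(S,\tau)$. What remains is to enforce the smoothness condition (iii) while only blowing up the total by another constant factor — giving the $16$ in (ii) with room to spare.

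\emph{Stage 2 (smoothing).} I would define $\sB(S,\tau)$ by letting each cell ``borrow'' a geometrically-decaying contribution from every other cell in the tree, measured by an appropriate distance. Concretely, introduce a distance $\delta\big((S,\tau),(S',\tau')\big)$ on cells which is $1$ for the moves allowed in condition (iii) — i.e. between $(S,\tau)$ and $(S,\tau\pm 1)$, and between $(S,\tau)$ and $(S',\tau)$ when $S'$ is the parent or a child of $S$ — and extend it to a graph metric. Then set
\[
\sB(S,\tau) \;=\; \text{(nearest power of $2$ above)}\ \max_{(S',\tau')}\ \frac{\tilde B(S',\tau')}{8^{\,\delta((S,\tau),(S',\tau'))}}.
\]
Since $\tilde B \le \sB$ pointwise, property (i) is immediate. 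For the smoothness property (iii): if $(S,\tau)$ and $(\hat S,\hat\tau)$ are at distance $1$, then by the triangle inequality every term in the max defining $\sB(\hat S,\hat\tau)$ is within a factor $8$ of the corresponding term for $\sB(S,\tau)$, so the maxima are within a factor $8$, and after rounding each to the next power of $2$ the ratio stays in $[1/8,8]$ provided one is a little careful (rounding can only change the ratio by a further factor $2$, so one should really use a decay base slightly larger than $8$, or equivalently prove the cleaner bound $\max \le 4\cdot\max$ before rounding and then observe $4\cdot 2 < 8$ fails — so in fact I would use decay base $16$ in the max and note $16/2 = 8$). For property (ii), the key estimate is
\[
\sum_{(S,\tau)} \sB(S,\tau)\ \le\ 2\sum_{(S,\tau)}\ \sum_{(S',\tau')} \frac{\tilde B(S',\tau')}{8^{\delta((S,\tau),(S',\tau'))}}\ =\ 2\sum_{(S',\tau')} \tilde B(S',\tau')\ \sum_{(S,\tau)} \frac{1}{8^{\delta((S,\tau),(S',\tau'))}},
\]
and one needs the inner sum over cells at each distance to be bounded by a geometric series with ratio bounded away from $1$.

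\emph{The main obstacle} is exactly that last point: bounding the number of cells at distance $d$ from a fixed cell $(S',\tau')$. Moving in the $\tau$ direction is harmless (at most $2$ new cells per unit distance, since $\tau$ is one-dimensional). The danger is the tree direction: $\redc(\T)$ is binary, so the number of segments at distance $d$ from $S'$ in $\redc(\T)$ can be as large as roughly $2^d$ (descendants), and $2^d / 8^d$ is still summable — but the mixed moves (alternating up/down in $\tau$ and down in the tree) need to be accounted for jointly. I would handle this by observing that a path of length $d$ in the cell graph uses some $d_1$ steps in the $\tau$-direction and $d_2 = d - d_1$ steps in the tree direction, that the number of endpoints reachable is at most (number of $\tau$-endpoints after $d_1$ steps) $\times$ (number of tree-vertices within $d_2$ steps) $\le (2d_1+1)\cdot 2^{d_2+1}$ (bounding descendants of a binary tree; ancestors contribute only $\le d_2$ more), and then $\sum_{d_1+d_2 = d}(2d_1+1)2^{d_2+1}\,8^{-d} \le \text{poly}(d)\,2^d\,8^{-d}$, whose sum over $d$ is a constant. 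Plugging this constant into the displayed inequality yields $\sum \sB(S,\tau) \le C \sum \tilde B(S',\tau') \le 2C \sum B^\opt(S,\tau)$, and choosing constants so that $2C \le 16$ (tightening the decay base if necessary) gives property (ii). The one piece of care needed throughout is the interaction between the max-of-geometric construction and the final rounding to powers of $2$; I would either bake the rounding into the base of the exponential decay or prove the un-rounded bounds with slack $16$ and absorb the rounding factor $2$ into the slack to land at the stated $8$.
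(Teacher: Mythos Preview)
Your approach is correct and takes a genuinely different route from the paper. The paper uses an \emph{additive} smoothing: before a final rounding to powers of $2$ it sets
\[
\sB(S,\tau) \;=\; \sum_{i\ge 0}\ \sum_{S'\in N_i(S)}\ \sum_j\ \frac{B^\opt(S',\tau+i+j)}{4^{\,i+|j|}},
\]
where $N_i(S)$ is the set of segments at tree-distance at most $i$ from $S$. Smoothness holds because moving to a neighbouring cell perturbs every coefficient by a factor at most $4$; the total is bounded by swapping the order of summation and using $|N_i|\le 2^{i+1}$ in a binary tree. You instead take a \emph{max} with geometric decay in the cell-graph metric. This is arguably the more transparent construction: smoothness becomes a one-line consequence of the triangle inequality, and the cost bound follows from the crude estimate $\max\le\sum$ together with the same double-counting. (Your count of cells at distance $d$ is a little loose: the cell graph is the Cartesian product of the segment tree with the density-class path, so the inner sum factors, and with $|N_i|\le 2^{i+1}$ one lands comfortably under $16$.)

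Your worry about the final rounding is a red herring, and the detour through base $16$ is both unnecessary and wrong (it would yield smoothness factor $16$, not $8$). The observation you are missing is that once the $\tilde B(S',\tau')$ are powers of $2$ and the decay base is a power of $2$, every candidate $\tilde B(S',\tau')/8^{\delta}$ is itself an integer power of $2$, and hence so is their maximum --- the rounding step is a no-op. With base $8$ you therefore get smoothness factor exactly $8$, as required. One small caveat: drop the parenthetical ``and at least $1$'' in Stage~1; forcing $\tilde B\ge 1$ on every cell can blow up the sum when most cells have $B^\opt(S,\tau)=0$, so just set $\tilde B(S,\tau)=0$ there.
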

\begin{proof}
We define $$\sB(S, \tau):= \sum_{i \geq 0} \sum_{S' \in N_i(S)} \sum_{j} \frac{B^\opt(S', \tau+i+j)}{4^{i+|j|}}, $$
where $i$ varies over non-negative integers, $j$ varies over integers and the range of $i,j$ are such  that $\tau+i+j$ remains a valid density class; and $N_i(S)$ denotes the segments which are at distance at most $i$ from $S$.
Note that $\sB(S, \tau)$ is a not a power of 2 yet, but we will round it up later. As of now, $\sB(S, \tau) \geq B^\opt(
S, \tau)$ because the term on RHS for $i=0, j=0,$
is exactly $B^\opt(S,\tau)$.

We now verify the second property. 
 We  add $\sB(S, \tau)$ for all the cells $(S, \tau)$.
Let us count the total contribution towards terms containing $B^\opt(S', \tau')$
on the RHS. For every segment $S \in N_i(S')$, and density class $\tau' - i - j$, it will receive a contribution of $\frac{1}{4^{i+|j|}}$. Since $|N_i(s')| \leq 2^{i + 1}$ (this is where we are using the fact that $\T$ is binary), this is at most
$$ \sum_{i \geq 0} \sum_j \frac{2^{i+1}}{4^{i+|j|}} \leq \sum_{i \geq 0} \frac{2^{i+2}}{ 4^i} \leq  8. $$

Now consider the third condition. Consider the expressions for $\sB(S, \tau)$ and $\sB(S', \tau)$ where $S'$ is the parent of $S$. If a segment is at distance $i$ from $S$, its distance
from $S'$ is either $i$ or $ i\pm 1$. Therefore, the coefficients of $B^\opt(S'', \tau'')$ in the expressions for $\sB(S, \tau)$ and $\sB(S', \tau)$ will differ by a factor of at most 4.
The same observation holds for $\sB(S, \tau)$ and $\sB(S, \tau+1)$. It follows that
$$4 \sB(S, \tau+1) \geq \sB(S, \tau) \geq \sB(S, \tau+1)/4, \mbox{ and } 4\sB(S', \tau) \geq \sB(S, \tau) \geq \sB(S', \tau)/4. $$

Finally, we round all the $\sB(S, \tau)$ values up to the nearest power of 2. We will lose an extra factor of 2 in the statements~(ii) and~(iii) above.
\end{proof}

We will use the definition of $\sB(S, \tau)$ in the lemma above for rest of the discussion. For a segment $S$, define $\sB(S)$ as the maximum over all density classes $\tau$ of
$\sB(S, \tau)$. The following corollary follows immediately from the lemma above.
\begin{corollary}
\label{cor:close}
Let $S$ and $S'$ be two segments in $\T$ such that $S'$ is the parent of $S$. Then $\sB(S)$ and $\sB(S')$ lie within factor of 8 of each other.
\end{corollary}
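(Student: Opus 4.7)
The plan is to derive the corollary in one step from the pointwise smoothness bound already established in part~(iii) of Lemma~\ref{lem:smooth}. Recall that $\sB(S) := \max_{\tau} \sB(S, \tau)$, and that the lemma gives us, for every density class $\tau$,
$$8 \sB(S', \tau) \geq \sB(S, \tau) \geq \sB(S', \tau)/8,$$
whenever $S'$ is the parent of $S$. So I just need to transfer this pointwise (in $\tau$) bound to the maximum.

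First I would let $\tau_S$ denote a density class achieving $\sB(S) = \sB(S, \tau_S)$. Applying the pointwise bound at $\tau_S$ yields
$$\sB(S) = \sB(S, \tau_S) \leq 8 \sB(S', \tau_S) \leq 8 \sB(S'),$$
where the last step just uses that $\sB(S')$ is the maximum over all density classes, in particular at least $\sB(S', \tau_S)$. Symmetrically, letting $\tau_{S'}$ achieve $\sB(S') = \sB(S', \tau_{S'})$ and applying the pointwise bound in the other direction gives
$$\sB(S') = \sB(S', \tau_{S'}) \leq 8 \sB(S, \tau_{S'}) \leq 8 \sB(S).$$
Combining the two inequalities yields the factor-$8$ comparison claimed in the corollary.

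There is no real obstacle here; the smoothness lemma does all the heavy lifting, and the only mildly non-trivial observation is that one may evaluate each side of the comparison at its own maximizing density class and still invoke the cell-wise bound from Lemma~\ref{lem:smooth}(iii). Hence the proof reduces to the two-line display above.
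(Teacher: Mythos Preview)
Your proof is correct and follows essentially the same approach as the paper: pick the density class achieving the maximum for $S$, apply Lemma~\ref{lem:smooth}(iii) to bound it by $8\sB(S',\tau)$, then bound this by $8\sB(S')$; the paper does one direction and remarks that the other follows similarly, whereas you spell out both.
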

\begin{proof}
Let $\sB(S)$ be equal to $\sB(S, \tau)$ for some density class $\tau$. Then, $$ \sB(S) = \sB(S, \tau) \stackrel{Lemma~\ref{lem:smooth}}{\leq} 8 \sB(S', \tau)\leq 8 \sB(S'). $$
The other part of the argument follows similarly.
\end{proof}

\newcommand{\Sst}{\St^\star}
\newcommand{\Cs}{\C^\star}
\newcommand{\segs}{\Lambda^{\star \seg}}
\newcommand{\Cells}{\Lambda^{\star \cell}}
\newcommand{\bsegs}{B^{\star \seg}}
\newcommand{\bcells}{B^{\star \cell}}
The plan now is to define a valid state $\Sst(v)= (\Cs_v, \segs_v, \Cells_v)$ for each of the vertices $v$ in $\redc(\T)$. We begin by defining a {\em critical density}
$\tau^\star(S)$ for each segment $S$. Recall that $S(\tau)$ denotes the edges of density class $\tau$ in $S$. Let $S(\leq \tau)$ denote the edges class of density class at most
$\tau$ in $S$. For a density class $\tau$ and a budget $B$, let $S(\leq \tau, \leq B)$ denote the edges in $S(\leq \tau)$ which have cost at most $B$. Define $\tau^\star$ as
the smallest density class $\tau$ such that the total cost of edges in $S(\leq \tau, \leq \sB(S))$ is at least $4\sB(S) + \sum_{\tau' \leq \tau} B^\opt(S, \tau')$ (if no such density class exists, set $\tau$ to
$\tau_{\max}$). Intuitively, we are trying to augment the optimal solution by
low density edges, and $\tau^\star(S)$ tells us the density class till which we can essentially take all the edges in $S$ (provided we do not pick any edge which is too expensive).

\begin{figure}[ht]
\begin{center}
\epsfig{file=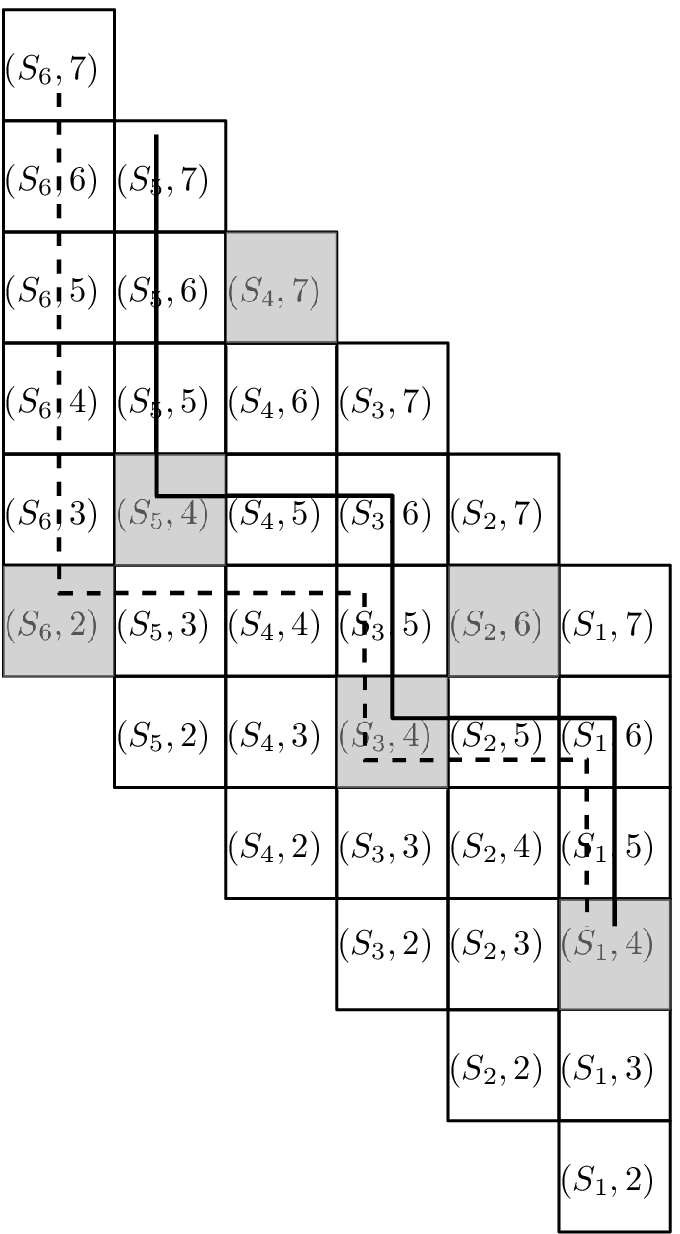, height=4 in}
\end{center}
\caption{Refer to the notation used in  Figure~\ref{fig:tab}. The shaded cells represnt the critical density class for the
corresponding segment. The solid line shows $\Cs_v$ and the dotted line shows $\Cs_w$. As an example, the
cell  $(S_4, 4)$ dominates the cells $(S_3, 6), (S_3, 7)$ and  $(S_2, 7)$.}
\label{fig:opt}
\end{figure}

Having defined the notion of critical density, we are now ready to define  a valid state $\Sst(v)$ for each vertex $v$ in $\redc(\T)$. Let $v$ be such a vertex at reduced
depth $k$ and let $S_1, \ldots, S_k$ be the segments starting from the root to $v$. Again, it is easier to see the definition of the cell sequence $\Cs_v$ pictorially. As in Figure~\ref{fig:opt}, the cell sequence starts with $(S_k,\tau_{\max})$ and keeps going down till it reaches the cell $(S_k, \tau^\star(S_k))$. Now it  keeps going right as long
as the cell corresponding to the critical density lies above it. If this cell lies below it, it moves down. The formal procedure for constructing this path is given in Figure~\ref{fig:path}. For sake of brevity, let $\tau_i^\star$ denote $\tau^\star(S_i)$.

\begin{figure}[ht]
  \begin{center}
    \begin{boxedminipage}{6.5in}
      {\bf Construct Sequence $\Cs_v $ :} \medskip\\
         \sp \sp {\bf Input:} A node $v \in \redc(\T)$ at depth $k$, integers  $\tau_1^\star, \ldots, \tau_k^\star$ \\
        \sp \sp \sp \sp 1. Initialise $\Cs_v$ to empty sequence, and $i \leftarrow k, \tau \leftarrow \tau_{\max}$ \\
         \sp \sp \sp \sp 2. While $(i \geq 1)$ \\
         \sp \sp \sp \sp \sp \sp \sp \sp (i) Add the cell $(S_i, \tau)$ to $\Cs_v$. \\
          \sp \sp \sp \sp \sp \sp \sp \sp (ii) If $\tau > \tau^\star_i$, $\tau \leftarrow \tau-1$ \\
          \sp \sp \sp \sp \sp \sp \sp \sp (iii) Else $i = i-1, \tau \leftarrow \tau+1.$ \\
      \end{boxedminipage}
      \caption{Construction of the path $\Cs_v$. }
      \label{fig:path}
      \end{center}
\end{figure}

We shall denote the sequence $\Cs_v$ by $\sigma^\star_1, \ldots, \sigma^\star_\ell$.
The corresponding segment budget sequence and cell budget sequences are easy to define. Define $\segs_v = (\bsegs_1, \ldots, \bsegs_k)$, where $\bsegs_i := B^\star(S_i)$.
Similarly, define $\Cells_v = (\bcells_1, \ldots, \bcells_{\ell})$ such that for the cell $\sigma^\star_j = (S_i, \tau)$, $\bcells_j := B^\star(S_i, \tau)$. This completes the
definition of $\Sst(v)$. It is easy to check that these are valid sequences. Indeed, Lemma~\ref{lem:smooth} and Corollary~\ref{cor:close} show that each of the quantities $\bsegs_i, \bcells_j$ are at most
$2 n$, and two such consecutive quantities are within factor of 8 of each other.

Further, let $w$ be a child of $v$ in $\redc(\T)$. It is again it is to see that $\Sst(w)$ is an extension of $\Sst(v)$. The procedure for constructing $\Cs_w$  ensures that this
property holds: this path first goes down till $(S_{k+1}, \tau^\star(S_{k+1}))$, where $S_{k+1}$ is the segment between $v$ and $w$. Subsequently, it moves right till it hits
$\Cs_v$ (see Figure~\ref{fig:opt} for an example). The following crucial lemma shows that is alright to ignore the cells above the path $\Cs_v$. Let $w_1$ and $w_2$ be the children
of $v$ in $\redc(\T)$. We consider the algorithm in Figure~\ref{fig:DP} for filling the DP entry $D[v, \Sst(v)]$. Let $G^\star(v)$ be the edges obtained at the end of Step~3 in this algorithm. Further, let $G^\star_{k+1}(w_r)$ be the set of edges obtained in Step~6(i)(a) of this algorithm when we use the extension $\Sst(w_r)$.

\begin{lemma}
\label{lem:critical}
For $r=1,2$, any path in $\P(v)$ which ends in the segment $S_{k+1}^r$ is satisfied by $G^\star(v) \cup G_{k+1}^\star(w_r)$.
\end{lemma}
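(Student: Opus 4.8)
The plan is to fix $r$ and a path $P \in \P(v)$ that ends in the segment $S^r_{k+1}$, and to show that the edges already selected in $G^\star(v) \cup G^\star_{k+1}(w_r)$ cover the demand $d(P)$. The path $P$ runs from some ancestor vertex down into $S^r_{k+1}$, so it traverses a contiguous suffix of the segments $S_1, \dots, S_k$ (say $S_a, S_{a+1}, \dots, S_k$) entirely, and then a prefix of $S^r_{k+1}$. Because $\Opt$ is feasible, the edges $\Opt \cap P$ already have total size at least $d(P)$; the contribution of $\Opt$ to $P$ inside segment $S_i$ (for $a \le i \le k$) has cost $B^\opt(S_i, \tau) \le \sB(S_i, \tau)$ summed over density classes $\tau$, and inside $S^r_{k+1}$ has cost at most $B^\opt(S^r_{k+1})$. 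The key is therefore to argue that, cell by cell, the greedy selections made by {\segmentselect} (using the budgets $\bcells_j = B^\star(S_i,\tau)$ and $\bsegs_i = B^\star(S_i)$) recover \emph{at least as much size of each relevant density class} as $\Opt$ selected from that segment — at which point, summing over density classes and segments, the total size picked along $P$ is at least that of $\Opt$, hence at least $d(P)$.

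The argument splits according to where a cell $(S_i, \tau)$ sits relative to $\Cs_v$ (respectively $\Cs_{w_r}$ for the segment $S^r_{k+1}$). First I would handle cells \emph{on or below} the critical-density row, i.e. $\tau \le \tau^\star(S_i)$: here {\segmentselect} at Steps~4--5 takes \emph{all} edges of density class $\le \tau^\star_i$ and cost at most $\bsegs_i = \sB(S_i) \ge B^\opt(S_i, \tau')$ for every $\tau'$, so in particular it takes every edge of class $\tau'$ that $\Opt$ took from $S_i$ — provided none of those $\Opt$-edges is more expensive than $\sB(S_i)$. That proviso is exactly where the preprocessing matters: an individual edge selected by $\Opt$ has cost at most $c_{\max}$, and the relevant budgets are at least $c_{\max}$ after rounding, so the cost cap is not binding on $\Opt$'s edges. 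Second, for a cell $(S_i,\tau)$ that lies \emph{on} $\Cs_v$ with $\tau > \tau^\star_i$, the greedy call {\grselect}$((S_i,\tau), \bcells_j)$ with $\bcells_j = B^\opt(S_i,\tau)$ (rounded up, and doubled inside {\grselect}) selects edges of class $\tau$ from both ends of $S_i$ of total cost at least $B^\opt(S_i,\tau)$; since all these edges have the \emph{same} density $128^\tau$, the size picked is at least $B^\opt(S_i,\tau)/128^\tau \ge$ (size of $S_i^\opt(\tau)$). Here the from-both-ends feature is essential: any path $P$ through $S_i$ contains one full endpoint of $S_i$ and so contains whichever end {\grselect} filled — this is the point of Step~3--5 in Figure~\ref{fig:greedynew}. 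Third, for cells \emph{above} $\Cs_v$: by the definition of the critical density $\tau^\star(S_i)$ via Figure~\ref{fig:path}, a cell $(S_i,\tau)$ lies above $\Cs_v$ only when $\tau^\star(S_i)$ is small enough that the cheap low-density edges of $S_i$ (those in $S_i(\le \tau^\star_i, \le \sB(S_i))$, all of which are taken at Steps~4--5) already contribute total cost at least $4\sB(S_i) + \sum_{\tau' \le \tau^\star_i} B^\opt(S_i,\tau')$, hence total \emph{size} at least $4\sB(S_i)/128^{\tau^\star_i}$; this surplus of size in the low-density classes is what pays for the size $\Opt$ obtained in the ignored high-density classes of $S_i$ (whose total cost is at most $\sB(S_i)$ by the definition of $\sB(S_i)$ as the max over $\tau$, times the number of classes — this needs the geometric $128$-gap so that $\sum_{\tau > \tau^\star_i} B^\opt(S_i,\tau)/128^\tau$ is dominated by $4\sB(S_i)/128^{\tau^\star_i}$).

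Putting the three cases together: for each segment $S_i$ on $P$ (with $i \le k$), the size that $G^\star(v)$ picks inside $S_i$ from density classes $\le$ the relevant threshold, plus the surplus from low-density classes, is at least the total size $\Opt$ picked inside $S_i$ along $P$; for the final segment $S^r_{k+1}$ the same reasoning applied with $\Cs_{w_r}$ and $\bsegs'_{k+1} = \sB(S^r_{k+1})$ (which is within a factor $8$ of $\sB(S_k)$ by Corollary~\ref{cor:close}, so the extension's budgets are still valid) shows $G^\star_{k+1}(w_r)$ recovers $\Opt$'s size there. Summing over all segments touched by $P$, the total size of $(G^\star(v) \cup G^\star_{k+1}(w_r)) \cap P$ is at least the total size of $\Opt \cap P \ge d(P)$, so $P$ is satisfied. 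I expect the main obstacle to be the third case — making precise that ignoring all cells above $\Cs_v$ is safe — which hinges on the arithmetic of the critical-density definition ($4\sB(S)$ slack vs.\ the geometric tail $\sum_{\tau} B^\opt(S,\tau)/128^\tau$) and on the fact that {\segmentselect} uses $\bsegs_i$ rather than the cell budget at the threshold class $\tau_1$ (the remark after Figure~\ref{fig:segmentselect} flags this); getting the constants to line up, and simultaneously keeping the doubling in {\grselect} (cost up to $3B$) under control against the factor-$8$ smoothness slack, is the delicate bookkeeping.
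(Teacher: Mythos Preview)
Your three-case split (cells below, on, above $\Cs_v$) and your handling of the first two cases are essentially the paper's. The gap is in case~3, where you attempt a \emph{per-segment} surplus argument: the $4\sB(S_i)$ slack in the definition of $\tau^\star_i$ is supposed to pay for $\Opt$'s size in the high-density cells of the \emph{same} segment $S_i$. This fails for two reasons. First, a cell $(S_i,\tau)$ can lie above $\Cs_v$ not because $\tau>\tau^\star_i$ but because the staircase entered column $S_i$ at a density already below $\tau$ --- a level dictated by the critical densities of \emph{deeper} segments $S_{i+1},\dots,S_k$, not by $\tau^\star_i$; in that situation \segmentselect\ does not exhaust density classes up to $\tau^\star_i$ in $S_i$, and your $\tau^\star_i$-based surplus is simply not generated. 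Second, and decisively, for the starting segment $S_{i_0}$ (which $P$ intersects only partially, from the bottom endpoint), the low-density edges picked in Step~5 of \segmentselect\ need not lie on $P$ at all, so their size cannot be credited toward covering $P$.

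The paper's fix is a genuinely cross-segment charge. It introduces a \emph{domination} relation: every cell $(S_j,\tau)$ above $\Cs_v$ with $j\ge i_0$ is dominated by $(S_{i'},\tau^\star_{i'})$ for some $i'\in\{i_0+1,\dots,k\}$ --- crucially, by a segment \emph{fully contained in $P$}. Using the two-dimensional smoothness of $B^\star$ from Lemma~\ref{lem:smooth} (factor~$8$ per step in either the segment or the density direction, against the factor~$128$ density gap), one shows $\sum_{(S_j,\tau)\in\Dom(S_{i'})} p(S_j^\opt(\tau)) \le 128^{-\tau^\star_{i'}} B^\star(S_{i'},\tau^\star_{i'})$, and separately that the surplus generated inside $S_{i'}$ is at least this much. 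Summing over $i'=i_0+1,\dots,k$ then covers all cells above $\Cs_v$, including those in $S_{i_0}$. Your segment-local accounting cannot substitute for this transfer; the cross-segment geometric decay is the whole point of the staircase construction.

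A minor aside: in case~1 your justification for the cost cap --- that individual $\Opt$-edges cost at most $c_{\max}$ --- is wrong, since $\sB(S_i)$ can be far below $c_{\max}$. The correct reason is simply that each edge of $S_i^\opt(\tau)$ has cost at most $B^\opt(S_i,\tau)\le \sB(S_i)$.
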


\newcommand{\Dom}{{\tt Dom}}
This is the main technical lemma of the contribution and is the key reason why the algorithm works. We will show this by a sequence of steps.
We say that a cell $(S_i, \tau)$ {\em dominates} a cell $(S_j, \tau')$ if $j < i$ and $\tau' - \tau > j-i$.    As in Figure~\ref{fig:opt}, a cell $(S, \tau)$ dominates all cells
  which lie in the upper right quadrant with respect to it if we arrange the cells as shown in the figure.
  For a segment $S_i$,  let $\Dom(S_i)$ be the set of cells dominated by $(S_i, \tau^\star_i)$.
  The following claim shows why this notion is useful. For a set $E$ of edges , let $p(E)$ denote $\sum_{e \in E} p_e$. Recall that $S^\opt(\tau)$ denotes
  the set of edges in $S(\tau)$ selected by the optimal solution.
  \begin{claim}
  \label{cl:dom}
  $$ \sum_{(S_j, \tau) \in \Dom(S_i)} p(S_j^\opt(\tau)) \leq  128^{-\tau^\star_i} B^\star(S_i, \tau^\star_i). $$
  \end{claim}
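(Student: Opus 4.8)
The plan is to bound the total size $\sum_{(S_j,\tau)\in\Dom(S_i)} p(S_j^\opt(\tau))$ by summing over the cells in $\Dom(S_i)$ in a way that exploits two facts: (a) a cell $(S_j,\tau)$ dominated by $(S_i,\tau^\star_i)$ has $j<i$ and $\tau > \tau^\star_i + (i-j)$, so its density class is strictly larger than the critical class of $S_i$ by more than the tree-distance between $S_i$ and $S_j$; and (b) the cost spent by $\Opt$ on density-class-$\tau$ edges in $S_j$ is $B^\opt(S_j,\tau)$, so the \emph{size} of those edges is $p(S_j^\opt(\tau)) = 128^{-\tau}\cdot B^\opt(S_j,\tau)$ since every such edge has density exactly $128^\tau$. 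Then $p(S_j^\opt(\tau))$ carries a factor $128^{-\tau}$ which decays geometrically in $\tau$, and we want to compare it against $128^{-\tau^\star_i}B^\star(S_i,\tau^\star_i)$.

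First I would write, for each dominated cell $(S_j,\tau)$,
\[
p(S_j^\opt(\tau)) \;=\; 128^{-\tau}\, B^\opt(S_j,\tau)\;\le\; 128^{-\tau}\, \sB(S_j,\tau),
\]
using property (i) of Lemma~\ref{lem:smooth}. Next, using the smoothness property (iii) of Lemma~\ref{lem:smooth}, I would relate $\sB(S_j,\tau)$ to $\sB(S_i,\tau^\star_i)$: since $S_i$ and $S_j$ are at distance $i-j$ in $\redc(\T)$, repeatedly applying the factor-$8$ bounds across segments and across density classes yields $\sB(S_j,\tau) \le 8^{(i-j) + (\tau - \tau^\star_i)}\,\sB(S_i,\tau^\star_i)$. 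Combining, the contribution of the cell $(S_j,\tau)$ is at most
\[
128^{-\tau}\cdot 8^{(i-j)+(\tau-\tau^\star_i)}\,\sB(S_i,\tau^\star_i)
\;=\; 128^{-\tau^\star_i}\,\sB(S_i,\tau^\star_i)\cdot 128^{-(\tau-\tau^\star_i)}\cdot 8^{(i-j)+(\tau-\tau^\star_i)}.
\]
Writing $a = i-j \ge 1$ and $b = \tau - \tau^\star_i$, the dominance condition gives $b > a$, i.e.\ $b \ge a+1$, and the residual factor is $128^{-b}\,8^{a+b} = 8^{a}\cdot(8/128)^{b} = 8^{a}\,16^{-b}$. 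Since $b\ge a+1$, this is at most $8^a\,16^{-(a+1)} = \tfrac{1}{16}\,(8/16)^a = \tfrac{1}{16}\,2^{-a}$.

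Finally I would sum over all dominated cells. For a fixed distance $a=i-j$ there are at most a constant number of segments $S_j$ at exactly that distance above $S_i$ in a binary tree (indeed exactly one, since the ancestors form a path), and for each such segment the density classes $\tau$ with $\tau - \tau^\star_i \ge a+1$ contribute residual factors summing to at most $\sum_{b\ge a+1} 8^a 16^{-b} = 8^a\cdot\frac{16^{-(a+1)}}{1-1/16} \le \tfrac{1}{15}2^{-a}$. Summing the geometric series over $a\ge 1$ gives a total of at most $\tfrac{1}{15}\sum_{a\ge1}2^{-a} = \tfrac{1}{15} < 1$, so the whole sum is at most $128^{-\tau^\star_i}\,\sB(S_i,\tau^\star_i) = 128^{-\tau^\star_i}\,B^\star(S_i,\tau^\star_i)$, which is the claim. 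The main obstacle I anticipate is getting the bookkeeping of the two-dimensional smoothness exactly right — in particular, being careful that the path from $S_i$ to $S_j$ in $\redc(\T)$ genuinely has length $i-j$ (it is an ancestor path, so this is fine) and that the density-class increments in the smoothness bound are applied in the correct direction — but the geometric decay is comfortably strong enough to absorb any loss of a few constant factors, so the constant $128$ in the density base was chosen precisely to make $8/128 = 1/16$ beat the branching factor.
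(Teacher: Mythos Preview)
Your proposal is correct and follows essentially the same argument as the paper's proof: convert size to cost via $p(S_j^\opt(\tau)) = 128^{-\tau}B^\opt(S_j,\tau)\le 128^{-\tau}\sB(S_j,\tau)$, apply the smoothness bound of Lemma~\ref{lem:smooth} repeatedly to get $\sB(S_j,\tau)\le 8^{(i-j)+(\tau-\tau^\star_i)}\sB(S_i,\tau^\star_i)$, and then sum the resulting double geometric series. The only cosmetic difference is that the paper sums over $\tau\ge \tau^\star_i+(i-j)$ per segment and then over $j<i$, obtaining the per-segment bound $128^{-\tau^\star_i}\cdot\tfrac{2}{8^{i-j}}\sB(S_i,\tau^\star_i)$, whereas you parametrize by $a=i-j$ and $b=\tau-\tau^\star_i$ with $b\ge a+1$; both computations are the same in substance.
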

  \begin{proof}
  Fix a segment $S_j$. For sake of brevity, let $\bar{\tau}$ denote $\tau^\star_i+(i-j)$.
Recall that for any pair $(S, \tau)$, $B^\star(S, \tau) \geq B^\opt(S, \tau)$ (Lemma~\ref{lem:smooth}). 
  Therefore,   terms in the above sum corresponding to $S_j$ add up to 
  $$ \sum_{\tau \geq \bar{\tau}} p(S_j^\opt(\tau)) \leq \sum_{\tau \geq \bar{\tau}} 128^{-\tau} B^\star(S_j, \tau). $$
  By repeated applications of Lemma~\ref{lem:smooth},
  $$B^\star(S_j, \tau)  \leq 8^{\tau-\tau^\star_i}  \cdot 8^{i-j} B^\star(S_i, \tau^\star_i)  . $$
  Therefore,
  \begin{align*} \sum_{\tau \geq \bar{\tau}} p(S_j^\opt(\tau)) & \leq \sum_{\tau \geq \bar{\tau}} 128^{-\tau} \cdot 8^{\tau-\tau^\star_i}  \cdot 8^{i-j} B^\star(S_i, \tau^\star_i) \\
 &  = 128^{-\tau^\star_i}  \sum_{\tau \geq \bar{\tau}} 16^{-(\tau-\tau^\star_i)}   \cdot 8^{i-j} B^\star(S_i, \tau^\star_i) \\
 & \leq 128^{-\tau^\star_i} \cdot \frac{2 \cdot 8^{i-j}}{16^{{\bar{\tau}} - \tau^\star_i}} \cdot B^\star(S_i, \tau^\star_i) = 128^{-\tau^\star_i} \cdot \frac{2}{8^{i-j}}
 \cdot B^\star(S_i, \tau^\star_i)
  \end{align*}
  Summing over all $j < i $ now implies the result.
  \end{proof}

  Let $G^\star_i(v)$ denote the set of edges selected by the Algorithm~{\segmentselect} in Figure~\ref{fig:segmentselect} for the vertex $v$ and segment $S_i$ when called
  with the state $\Sst(v)$. Let $G^\star_i(\tau, v)$ be the density $\tau$ edges in $G^\star_i(v)$.
  The following claim shows that the total size of edges in it is much larger than the corresponding quantity for the optimal solution.
  \begin{claim}
  \label{cl:high}
  For any segment $S_i$,
  $$\sum_{\tau \leq \tau^\star_i} p(G^\star_i(\tau, v))  - \sum_{\tau \leq \tau^\star_i} p(S^\opt_i(\tau)) \geq  128^{-\tau^\star_i} B^\star( S_i, \tau^\star_i).$$
  \end{claim}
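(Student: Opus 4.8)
The plan is to pass to a cost inequality and then exploit the definition of the critical density $\tau^\star_i$ together with the two‑sided greedy behaviour of \grselect; a case split is needed according to where the sequence $\Cs_v$ leaves the column of $S_i$ in the table of Figure~\ref{fig:tab}. Throughout write $\tau^\star=\tau^\star_i$, let $(S_i,\tau_1),\dots,(S_i,\tau_2)$ be the cells of $\Cs_v$ in the column of $S_i$ as in \segmentselect, and let $c(E)=\sum_{e\in E}c_e$ for a set $E$ of edges.

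\textbf{Main case $\tau_1=\tau^\star$} (here $\Cs_v$ descends inside $S_i$'s column down to the critical density). The first step is the containment $S^\opt_i(\tau)\subseteq S_i(\tau,\leq\sB(S_i))$ for every $\tau\leq\tau^\star$: indeed $c(S^\opt_i(\tau))=B^\opt(S_i,\tau)\leq\sB(S_i,\tau)\leq\sB(S_i)=B^\seg_i$ by Lemma~\ref{lem:smooth}, so each individual edge of $S^\opt_i(\tau)$ is cheap enough to be a candidate in \segmentselect. Among density classes $\leq\tau^\star$, \segmentselect takes (Step~5) all of $S_i(\tau,\leq\sB(S_i))$ for $\tau<\tau^\star$ and (Step~4) the output of $\grselect((S_i,\tau^\star),\sB(S_i))$; combined with containment this gives $c(G^\star_i(\tau,v))\geq B^\opt(S_i,\tau)$ for each $\tau\leq\tau^\star$, using for $\tau=\tau^\star$ the two‑sided greedy bound $c(\grselect((S_i,\tau^\star),\sB(S_i)))\geq\min\big(c(S_i(\tau^\star,\leq\sB(S_i))),\,4\sB(S_i)\big)$ — obtained by noting that if the class‑$\tau^\star$ candidate pool costs at most $4\sB(S_i)$ the top and bottom scans overlap and take everything, while otherwise each scan alone contributes more than $2\sB(S_i)$. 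Since all edges involved have density class $\leq\tau^\star$, hence density $\leq 128^{\tau^\star}$, each term $p(G^\star_i(\tau,v))-p(S^\opt_i(\tau))=128^{-\tau}\big(c(G^\star_i(\tau,v))-B^\opt(S_i,\tau)\big)$ is nonnegative and at least $128^{-\tau^\star}\big(c(G^\star_i(\tau,v))-B^\opt(S_i,\tau)\big)$, so it suffices to show $\sum_{\tau\leq\tau^\star}c(G^\star_i(\tau,v))-\sum_{\tau\leq\tau^\star}B^\opt(S_i,\tau)\geq\sB(S_i,\tau^\star)$. Pulling the minimum out of the sum ($a+\min(b,c)=\min(a+b,a+c)$) and then invoking the defining property of $\tau^\star$ on the first argument and containment on the second,
\[
\sum_{\tau\leq\tau^\star}c(G^\star_i(\tau,v))\;\geq\;\min\Big(c\big(S_i(\leq\tau^\star,\leq\sB(S_i))\big),\; 4\sB(S_i)+c\big(S_i(<\tau^\star,\leq\sB(S_i))\big)\Big)\;\geq\;4\sB(S_i)+\sum_{\tau<\tau^\star}B^\opt(S_i,\tau),
\]
so the left‑hand side above is at least $4\sB(S_i)-B^\opt(S_i,\tau^\star)\geq 3\sB(S_i)\geq\sB(S_i,\tau^\star)=B^\star(S_i,\tau^\star)$, which is what we want. (This is exactly why Step~4 of \segmentselect uses the segment budget $B^\seg_i$ rather than the cell budget.)

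\textbf{Remaining case $\tau_1<\tau^\star$.} Now \segmentselect picks nothing from $S_i$ in the classes $\tau_1<\tau\leq\tau^\star$, so $\sum_{\tau\leq\tau^\star}p(G^\star_i(\tau,v))=\sum_{\tau\leq\tau_1}p(G^\star_i(\tau,v))$, and we must charge the extra optimal mass $\sum_{\tau_1<\tau\leq\tau^\star}p(S^\opt_i(\tau))$. Reading off the construction in Figure~\ref{fig:path}, $\tau_1$ equals the minimum of $\tau^\star_j+(j-i)$ over the segments $S_j$ on the root‑to‑$v$ path that lie below $S_i$; if $j_0>i$ attains this minimum, then every cell $(S_i,\tau)$ with $\tau>\tau_1$ lies in $\Dom(S_{j_0})$, so by Claim~\ref{cl:dom} (and the geometric decay $8^{-(j_0-i)}$ inside its proof) that extra mass is only a fraction of $128^{-\tau^\star_{j_0}}B^\star(S_{j_0},\tau^\star_{j_0})$. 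One then finishes by comparing $128^{-\tau^\star}B^\star(S_i,\tau^\star)$, $128^{-\tau_1}B^\star(S_i,\tau_1)$ and $128^{-\tau^\star_{j_0}}B^\star(S_{j_0},\tau^\star_{j_0})$ through the smoothness of Lemma~\ref{lem:smooth} and Corollary~\ref{cor:close} and the fact that consecutive cell budgets along $\Cs_v$ lie within a factor $8$ — so that the $16^{-(\tau^\star-\tau_1)}$‑type savings outweigh the blow‑up — and adding in the surplus harvested from the classes $\leq\tau_1$ (the main‑case argument truncated at $\tau_1$). I expect this last bookkeeping, namely lining up the geometric factors coming from smoothness, from the budget‑ratio constraint along $\Cs_v$, and from the domination bound so that the deficit in the skipped density classes is absorbed, to be the main obstacle of the whole claim; the earlier ``special cases'' discussion and the domination lemma are developed precisely to make it go through.
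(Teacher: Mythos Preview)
Your main case ($\tau_1=\tau^\star$) is exactly the paper's proof: the paper simply asserts ``Note that $\tau^\star_i$ is the same as $\tau_1$ in the notation used in Figure~\ref{fig:segmentselect}'' and then argues precisely as you do --- containment $S^\opt_i(\tau)\subseteq G^\star_i(\tau,v)$ for $\tau<\tau^\star_i$ via the cost bound $B^\opt(S_i,\tau)\leq\sB(S_i)$, and at $\tau^\star_i$ a two-subcase split according to whether the candidate pool $S_i(\tau^\star_i,\leq\sB(S_i))$ costs at least $4\sB(S_i)$ or not, appealing in the second subcase to the defining inequality of $\tau^\star_i$.

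Your ``remaining case'' $\tau_1<\tau^\star$ is something the paper does not address; it takes $\tau_1=\tau^\star_i$ as a fact. You are right that the construction of $\Cs_v$ can enter column $S_i$ already at or below $\tau^\star_i$ (your formula $\tau_1=\min_{j\geq i}(\tau^\star_j+(j-i))$ is correct), so this case is genuinely possible. However, your proposed treatment of it is not a proof: you correctly place the skipped cells in $\Dom(S_{j_0})$ and point at Claim~\ref{cl:dom} and Lemma~\ref{lem:smooth}, but you explicitly leave the final comparison of the geometric factors as ``the main obstacle'' without carrying it out. As written this is a gap. Note, though, that in the paper's overall argument for Lemma~\ref{lem:critical} the cells $(S_i,\tau)$ with $\tau>\tau_1$ are precisely the ones lying \emph{above} $\Cs_v$ and are charged there to the domination bound~(\ref{eq:l2}), while Claim~\ref{cl:high} is only invoked via~(\ref{eq:l3}) for cells on or below $\Cs_v$; so what is actually needed downstream is the $\tau_1$-truncated version of the claim, which is exactly your (and the paper's) main-case argument. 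In short: your main case matches the paper; your extra case is a real issue with the literal statement that the paper glosses over, but your fix for it is incomplete.
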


  \begin{proof}
Recall that for a segment $S$, density class $\tau$ and budget $B$, $S(\tau, \leq B)$ denotes the edges in $S(\tau)$ which 
have cost at most $B$. The quantity $S(\leq \tau, \leq B)$ was  defined similarly for edges of density class at most $\tau$ in $S$.

  Consider the  Algorithm~{\segmentselect} for $S_i$ with the parameters mentioned above. Note that $\tau^\star_i$ is same as $\tau_1$ in the notation used in Figure~\ref{fig:segmentselect}. Clearly, for $\tau < \tau^\star_i$, the algorithm ensures that $G^\star_i(\tau, v)$ contains $S^\opt_i(\tau)$ (because it selects all edges in $S_i(\tau, \leq B^\star(S_i))$. Since
each egde in $S_i^\opt(\tau)$ has cost at most $B^\opt(S_i, \tau) \leq B^\star(S_i)$, this implies that 
$S_i(\tau, \leq B^\star(S_i))$ contains $S_i^\opt(\tau)$). 
  For the class $\tau_1$, note that the algorithm tries to select edges of total cost at least $4 B^\star(S_i)$. Two cases arise: (i) If it is able to select these many edges,
  then the fact that the optimal solution selects edges of total cost at most $B^\star(S_i)$ from $S_i(\tau_1) $ implies the result, or (ii) The total cost of edges in $S_i(\tau_1, \leq B^\star(S_i))$ is less than $4 B^\star(S_i)$: in this case the algorithm selects all the edges from $S_i(\leq \tau_1, \leq B^\star(S_i))$, and so, $G^\star_i(\tau, v) $
  contains $ S^\opt_i(\tau)$ for all $\tau \leq \tau^\star_i$. The definition of
  $\tau^\star_i$ implies that the total cost of edges in $\cup_{\tau \leq \tau^\star_i} G^\star_i(\tau,v) \setminus S^\opt_i(\tau)$ is at least $4B^\star(S_i)$, and so, the result follows again.
  \end{proof}

  We are now ready to prove Lemma~\ref{lem:critical}. Let $P$ be a path in $\P(v)$ which ends in the segment $S_{k+1}^r$. Suppose $P$ starts in the segment $S_{i_0}$.
  Note that $P$ contains the segments $S_{i_0+1}, \ldots, S_{k}$, but may partially intersect $S_{i_0}$ and $S_{k+1}$.
  \begin{claim}
  \label{cl:one}
  For a cell $(S_i, \tau)$ on the cell sequence $\Cs(v)$, $p(G^\star_i(\tau, v) \cap P) \geq p(S^\opt_i(\tau) \cap P). $
  \end{claim}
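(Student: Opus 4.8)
The plan is to unwind precisely how Algorithm~\segmentselect (Figure~\ref{fig:segmentselect}), run on the state $\Sst(v)$, produces the density-$\tau$ part $G^\star_i(\tau,v)$ of the edges it selects from $S_i$, and then to read off the conclusion from the ``select from both ends'' behaviour of \grselect (Figure~\ref{fig:greedynew}). First I would record two structural observations. Since the cell $(S_i,\tau)$ lies on $\Cs_v$ and $\Cs_v$ is produced by the procedure of Figure~\ref{fig:path}, every cell of $\Cs_v$ in the column of $S_i$ has second coordinate at least $\tau^\star_i$; consequently $G^\star_i(\tau,v)$ is exactly the output of \grselect on the cell $(S_i,\tau)$ with budget $B=B^\star(S_i,\tau)$ when $\tau>\tau^\star_i$ (Step~3 of \segmentselect) and with budget $B=B^\star(S_i)$ when $\tau=\tau^\star_i$ (Step~4), since no other step of \segmentselect adds density-$\tau$ edges for such a $\tau$. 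In either case $B\ge B^\star(S_i,\tau)\ge B^\opt(S_i,\tau)$ by Lemma~\ref{lem:smooth}, so every edge of $S^\opt_i(\tau)$ has cost at most $B^\opt(S_i,\tau)\le B$ and is therefore a candidate considered by \grselect; moreover, since all edges in play share the density $128^\tau$, a lower bound on total cost translates directly into a lower bound on total size by dividing by $128^\tau$. The second observation is geometric: by the paragraph preceding the claim, $P$ contains all of $S_{i_0+1},\dots,S_k$ and a contiguous bottom stretch $Q:=P\cap S_{i_0}$ of the segment $S_{i_0}$, so $P\cap S_i$ is all of $S_i$ when $i_0<i\le k$, is $Q$ when $i=i_0$, and is empty when $i<i_0$ (in which case both sides of the claimed inequality vanish).

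The case $i_0<i\le k$ I would handle with the top-to-bottom pass of \grselect: if it crosses the threshold $2B$, the density-$\tau$ edges already chosen have total cost exceeding $2B\ge 2B^\opt(S_i,\tau)$, hence total size exceeding $2\,p(S^\opt_i(\tau))$; if it never crosses the threshold, it has exhausted every density-$\tau$, cost-$\le B$ edge of $S_i$ and in particular contains all of $S^\opt_i(\tau)$. Either way $p(G^\star_i(\tau,v))\ge p(S^\opt_i(\tau))$, which is what we need since $P\cap S_i=S_i$. For $i=i_0$ I would run the same dichotomy on Step~5 of \grselect, the bottom-to-top pass. The point is that the density-$\tau$, cost-$\le B$ edges lying in $Q$ form the initial portion of this bottom-up ordering, so either Step~5 crosses $2B$ while still inside $Q$ --- and then $G^\star_i(\tau,v)$ contains density-$\tau$ edges of total size exceeding $2\,p(S^\opt_i(\tau))\ge p(S^\opt_i(\tau)\cap Q)$ all lying in $Q$ --- or it collects every density-$\tau$, cost-$\le B$ edge of $Q$, and hence all of $S^\opt_i(\tau)\cap Q$. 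Since $G^\star_i(\tau,v)\subseteq S_i$, we have $G^\star_i(\tau,v)\cap P=G^\star_i(\tau,v)\cap Q$, and the inequality drops out.

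The step I expect to be the delicate one is the $i=i_0$ case, and it is the reason \grselect was designed to sweep from \emph{both} ends of a segment: the portion of $S_{i_0}$ that $P$ actually uses is exactly the bottom of that segment, so the top-to-bottom pass is useless here and one must argue that \grselect's bottom-up selection is a ``suffix from the bottom'' that therefore fills up $Q$ before it can leak above $Q$. A minor point to verify along the way is that the budget used in Step~4 of \segmentselect, namely $B^\star(S_i)=\max_{\tau'}B^\star(S_i,\tau')$, really does dominate $B^\opt(S_i,\tau^\star_i)$; this holds because $B^\star(S_i)\ge B^\star(S_i,\tau^\star_i)\ge B^\opt(S_i,\tau^\star_i)$. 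Everything else is the routine uniform-density bookkeeping described above.
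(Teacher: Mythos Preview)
Your proposal is correct and follows essentially the same approach as the paper. The paper's proof is a touch more streamlined: rather than splitting into the cases $i_0<i\le k$ (using the top-to-bottom pass) and $i=i_0$ (using the bottom-to-top pass), it observes once that $P$ always contains the lower end-point of $S_i$ and argues uniformly with Step~5 of \grselect --- if $S^\opt_i(\tau)\not\subseteq G^\star_i(\tau,v)$, the lowest missing optimal edge $e$ is eligible (cost $\le B^\opt(S_i,\tau)\le B$), so the bottom-up sweep must have exceeded $2B$ strictly below $e$, and those edges lie in $P$. One small imprecision in your write-up: the lowest cell $(S_i,\tau_1)$ of $\Cs_v$ in column $S_i$ need not satisfy $\tau_1=\tau^\star_i$ (it can be smaller if the sequence entered column $S_i$ already below $\tau^\star_i$), but this does not affect your argument since in either case the budget $B$ used by \grselect satisfies $B\ge B^\star(S_i,\tau)\ge B^\opt(S_i,\tau)$.
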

  \begin{proof}
  Fix a segment $S_i$ which is intersected by $P$. $P$ contains the lower end-point of this segment $S_i$. If $S^\opt_i(\tau) \subseteq G^\star_i(\tau, v)$, there is nothing
  to prove. Else let $e$ be the first edge in $S^\opt_i(\tau) \setminus G^\star_i(\tau,v)$ as we go up from the lower end-point of this segment. It follows that during Step~5
  of the Algorithm~{\grselect} in Figure~\ref{fig:greedynew}, we would select edges of total cost at least $2B^\opt(S_i,\tau)$
(because $B^\opt(S_i,\tau) \leq B^\star(S_i, \tau)$). The claim follows.
  \end{proof}

  \newcommand{\Above}{{\tt Above}}
  \newcommand{\Below}{{\tt Below}}
  Clearly, if $(S_i, \tau)$ lies below the cell sequence $\Cs(v)$, $G^\star_i(\tau,v)$ contains $S^\opt_i(\tau)$ (because
$G^\star_i(\tau, v)$ is same as $S_i(\tau, \leq B^\star(S_i))$ and $B^\star(S_i) \geq B^\opt(S_i)$). Let $\Above(\Cs(v))$ denotes the cells lying above this sequence,
  and $\Below(\Cs(v))$ the ones lying below it. The above claim now implies that
  \begin{align}
  \label{eq:l1}
\sum_{\tau: (S_{i_0}, \tau) \in \Cs(v) \cup \Below(\Cs(v))} p(S^\opt_{i_0} (\tau) \cap P)
   \leq \sum_{\tau: (S_{i_0}, \tau) \in \Cs(v) \cup \Below(\Cs(v))} p(G^\star_{i_0}(\tau, v) \cap P)
  \end{align}

  Note that any cell $(S_i, \tau)$, $i \geq i_0$, lying above $\Cs(v)$ must be dominated by one of the cells $(S_{i'}, \tau^\star_{i'})$ for $i' = i_0+1, \ldots, k$. Therefore,
  Claim~\ref{cl:dom} shows that
  \begin{align}
 \label{eq:l2}
\sum_{(S_i, \tau)\in \Above(\Cs(v)), i_0 \leq i \leq k} p(S_i^\opt(\tau)) \leq \sum_{i_0 < i \leq k} 128^{-\tau^\star_i} B^\star(S_i, \tau^\star_i)
 \end{align}
  Further, for cells lying on or below $\Cs(v)$, we get using Claim~\ref{cl:high} and Claim~\ref{cl:one}
  \begin{align}
  \label{eq:l3}
  \sum_{(S_i, \tau) \in \Cs(v) \cup \Below(\Cs(v)), i_0 < i \leq k} p(S_i^\opt(\tau)) \leq \sum_{(S_i, \tau) \in \Cs(v) \cup \Below(\Cs(v)), i_0 < i \leq k} G^\star_i(\tau,v)-
 2 \sum_{i_0 < i \leq k} 128^{-\tau^\star_i} B^\star(S_i, \tau^\star_i)
  \end{align}

  Adding the three inequalities above, we see that $\sum_{i=i_0}^k p(S_i^\opt \cap P)$ is at most $\sum_{i=i_0}^k p(G^\star_i(v))$. It remains to consider segment $S_{k+1}^r$.
  In an argument identical to the one in Claim~\ref{cl:one}, we can argue that $p(G_{k+1}^\star(w_r) \cap P) \geq p(S_{k+1}^{r, \opt}\cap P)$, where $S_{k+1}^{r, \opt}$ denotes
  the edges in $S_{k+1}^r$ selected by the optimal solution. This completes the proof of the technical Lemma~\ref{lem:critical}.

  Rest of the task is now easy. We just need to show that DP table entries corresponding to these valid states are comparable to the cost of the optimal solution.
  For a vertex $v \in \redc(\T)$, we shall use the notation $\cells(\T(v))$ to denote the cells $(S, \tau)$, where $S$ lies in the subtree $\T(v)$.
  \begin{lemma}
  \label{lem:costDP}
 For every vertex $v$, the table entry $D[v, \Sst(v)]$ is at most $20 \sum_{(S, \tau) \in \cells(\T(v))} B^\star(S, \tau). $
  \end{lemma}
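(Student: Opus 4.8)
The plan is to prove Lemma~\ref{lem:costDP} by induction on the reduced height of $v$, mirroring the induction used for Lemma~\ref{lem:correct}. In the base case $v$ is a leaf of $\redc(\T)$, so $\T(v)$ contains no edge, $\cells(\T(v)) = \emptyset$, and $D[v, \Sst(v)] = 0$ by Step~0 of the algorithm in Figure~\ref{fig:DP}; the claimed inequality holds trivially. For the inductive step, let $w_1, w_2$ be the children of $v$ in $\redc(\T)$ and let $S_{k+1}^1, S_{k+1}^2$ be the corresponding segments. It was verified just before Lemma~\ref{lem:critical} that $\Sst(w_r)$ is a valid extension of $\Sst(v)$, and Lemma~\ref{lem:critical} guarantees that every path in $\P(v)$ ending in $S_{k+1}^r$ is satisfied by $G^\star(v) \cup G^\star_{k+1}(w_r)$; hence the feasibility check in Step~6(i)(b) of Figure~\ref{fig:DP} does not abort the loop when the algorithm tries the extension $\Sst(w_r)$. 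Consequently $M_r \le \mathrm{cost}(G^\star_{k+1}(w_r)) + D[w_r, \Sst(w_r)]$, and therefore
$$ D[v, \Sst(v)] \;\le\; \sum_{r=1}^{2} \Bigl( \mathrm{cost}(G^\star_{k+1}(w_r)) + D[w_r, \Sst(w_r)] \Bigr). $$

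Next I would apply the induction hypothesis, $D[w_r, \Sst(w_r)] \le 20 \sum_{(S,\tau)\in\cells(\T(w_r))} \sB(S,\tau)$, together with the observation that $\cells(\T(v))$ is the disjoint union of $\cells(\T(w_1))$, $\cells(\T(w_2))$, and the cells $\{(S_{k+1}^1,\tau)\}_\tau$, $\{(S_{k+1}^2,\tau)\}_\tau$ of the two child segments. With this decomposition the desired inequality reduces to showing, for $r = 1, 2$, that $\mathrm{cost}(G^\star_{k+1}(w_r)) \le 20 \sum_\tau \sB(S_{k+1}^r,\tau)$. To establish this I would trace Algorithm~\segmentselect of Figure~\ref{fig:segmentselect} run on $S := S_{k+1}^r$ with state $\Sst(w_r)$: by the construction of $\Cs_{w_r}$ in Figure~\ref{fig:path}, the smallest density class $\tau_1$ with $(S,\tau_1)\in\Cs_{w_r}$ equals the critical density $\tau^\star(S)$, the cell budgets used are $B^\cell_j = \sB(S,\tau)$ for the cells $(S,\tau)$ with $\tau_1 < \tau \le \tau_2$, and the segment budget is $\sB(S) = \max_\tau \sB(S,\tau)$. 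The cost of $G^\star_{k+1}(w_r)$ then splits into three parts: (i) the \grselect calls of Step~3, each of which (by the $2B$-threshold plus one extra edge of cost at most $B$ in each of its two passes) contributes at most $6B^\cell_j$, summing to at most $6\sum_{\tau_1 < \tau \le \tau_2}\sB(S,\tau)$; (ii) the \grselect call of Step~4 on $(S,\tau_1)$ with budget $\sB(S)$, contributing at most $6\sB(S)$; and (iii) the edges of density class strictly below $\tau_1$ and cost at most $\sB(S)$ picked in Step~5, whose total cost, by the definition of $\tau^\star(S) = \tau_1$, is strictly less than $4\sB(S) + \sum_{\tau' < \tau_1} B^\opt(S,\tau') \le 4\sB(S) + \sum_{\tau'}\sB(S,\tau')$ (using $B^\opt(S,\tau') \le \sB(S,\tau')$ from Lemma~\ref{lem:smooth}). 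Adding the three parts and using $\sB(S) \le \sum_\tau \sB(S,\tau)$ yields $\mathrm{cost}(G^\star_{k+1}(w_r)) \le 17\sum_\tau\sB(S,\tau) \le 20\sum_\tau\sB(S,\tau)$, which closes the induction.

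The step I expect to be the main obstacle is part (iii): reading off from the definition of the critical density exactly the bound needed to control the cost of the bulk of low-density edges taken in Step~5 of \segmentselect. This is precisely where the threshold $4\sB(S) + \sum_{\tau' \le \tau} B^\opt(S,\tau')$ in the definition of $\tau^\star(S)$ is used, and a looser definition would fail to telescope against the target constant. The only other care needed is the constant bookkeeping from the $2B$-threshold (plus one extra edge) in each of the two passes of \grselect, but there is ample slack below the constant $20$.
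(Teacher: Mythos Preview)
Your proposal is correct and follows essentially the same approach as the paper's own proof: induction over $\redc(\T)$ with the leaf base case, use of Lemma~\ref{lem:critical} to pass the feasibility check in Step~6(i)(b), and then the same three-part cost accounting for $G^\star_{k+1}(w_r)$ via Algorithm~\segmentselect (the paper packages this as an inline Claim yielding the bound $20\sum_\tau \sB(S_{k+1}^r,\tau)$). Your identification of part~(iii) as the crux, and your use of the minimality of $\tau^\star(S)$ to bound the Step~5 cost by $4\sB(S)+\sum_{\tau'<\tau_1}B^\opt(S,\tau')$, is exactly how the paper handles it.
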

  \begin{proof}
  We prove by induction on the reduced depth of $v$. If $v$ is a leaf, the lemma follows trivially. Now suppose $v$ has children $w_1$ and $w_2$. Consider the iteration of Step~6 in the algorithm in Figure~\ref{fig:DP}, where we try the extension $\Sst(w_r)$ of the child $w_r$. Lemma~\ref{lem:critical} shows that in Step 6~(b), we will satisfy all paths
  in $\P(v)$ which end in the segment $S_{k+1}^r$. We now bound the cost of edges in $G^r_{k+1}(w_r)$ defined in Step~6(a).
  \begin{claim}
  The cost of $G^r_{k+1}(w_r)$ is at most $20 \sum_{\tau} B^\star(S_{k+1}^r, \tau)$.
  \end{claim}
  \begin{proof}
  We just need to analyze the steps in the algorithm~{\segmentselect} in Figure~\ref{fig:segmentselect}.
  For sake of brevity, let $\tau^\star$ denote $\tau^\star(S_{k+1}^r)$, and $B^\star$ denote $B^\star(S_{k+1}^r)$. The definition of $\tau^\star$ shows that the total cost of edges
  in $S^r_{k+1}(\leq \tau^\star, \leq B^\star)$ is at most $\sum_{\tau \leq \tau^\star} B^\opt(S_{k+1}^r, \tau) + 4B^\star \leq
  \sum_{\tau \leq \tau^\star} B^\star(S_{k+1}^r, \tau) + 4B^\star$. For the density class $\tau^\star$, the set of edges selected would cost at most $6B^\star$,
  because the algorithm in Figure~\ref{fig:greedynew} will take edges of cost up to $3 B^\star$ from either ends. Similarly,
   for
  density classes $\tau$ more than $\tau^\star$, this quantity is at most $6 B^\star(S, \tau)$, Summing up everything, and using the fact that $B^\star  = B^\star(S, \tau)$ for
  some density class $\tau$ gives the result.
  \end{proof}
  The lemma now follows by applying induction on $D[w_r, \Sst(w_r)]$.
  \end{proof}

 Applying the above lemma to the root vertex $r$, we see that $D[r, \emptyset]$ is at most a constant time $\sum_{(S, \tau)} B^\star(S, \tau)$, which by Lemma~\ref{lem:smooth},
 is a constant times the optimal cost. Finally, Lemma~\ref{cor:feas} shows that this entry denotes the cost of a feasible solution. Thus, we have shown the main Theorem~\ref{thm:dmc}.


\section{Discussion} We give the first pseudo-polynomial time constant factor approximation algorithm for the weighted flow-time problem on a single machine. The algorithm can be
made to run in time polynomial in $n$ and $W$ as well, where $W$ is the ratio of the maximum to the minimum weight. The rough idea is as follows. We have already assumed  that
the costs of the job segments are polynomially bounded (this is without loss of generality). Since the cost of a job segment is its weight times its length, it follows that the
lengths of the job segments are also polynomially bounded, say in the range $[1, n^c]$. Now we ignore all jobs of size less than $1/n^2$, and solve the remaining problem using
our algorithm (where
$P$ will be polynomially bounded). Now, we introduce these left out jobs, and show that increase in weighted flow-time will be small.
\\
Further, the algorithm also extends to the problem of minimizing $\ell_p$ norm of weighted flow-times. We can do this by changing the objective function in (IP2) to $(\sum_j \sum_s (w(j,S))^p y(j,S))^{1/p}$ and showing that this is within a constant factor of the optimum value. The instance of ~\dmc in the reduction remains exactly the same, except that the weights of the nodes are now $w(j,S)^p$.
We leave the problem of obtaining a truly polynomial time
constant factor approximation algorithm as open.

\section*{Acknowledgement} The authors would like to thank Sungjin Im for noting the extension to $\ell_p$ norm of weighted flow-times.

\bibliographystyle{plain}
\bibliography{paper}

\end{document}